\documentclass[11pt]{article}
\usepackage{graphicx} 
\usepackage{amsmath, amssymb}
\usepackage{changepage}
\usepackage{amsthm}
\usepackage[hidelinks]{hyperref}
\usepackage{subcaption}
\usepackage{tikz}
\usepackage{parskip}
\usepackage[
  backend=biber,
  style=authoryear,
  maxcitenames=2,
  mincitenames=1,
  maxbibnames=99,
  giveninits=true,
  uniquename=false,
  doi=false,
  url=false,
  isbn=false,
  eprint=false
]{biblatex}
\usepackage{tikz}
\usetikzlibrary{calc,arrows.meta}
\usepackage{algorithm}
\usepackage{algorithmic}
\usepackage{pgfplots}
\usepackage{setspace}
\usepackage{float}
\usepackage{booktabs}
\usepackage{bbm}
\usepackage{makecell}
\usepackage{multirow}
\usepackage{enumitem}
\usepackage{titletoc}

\DeclareNameAlias{author}{family-given}
\DeclareNameAlias{editor}{family-given}
\DeclareNameAlias{translator}{family-given}

\renewbibmacro{in:}{}

\newtheorem{theorem}{Theorem}[section]
\newtheorem{proposition}[theorem]{Proposition}
\preto\proposition{\vspace{1em}} 
\apptocmd\endproposition{\vspace{1em}}{}{} 

\newtheorem{assumption}[theorem]{Assumption}
\preto\assumption{\vspace{1em}} 
\apptocmd\endassumption{\vspace{1em}}{}{} 

\newtheorem{remark}[theorem]{Remark}
\preto\remark{\vspace{1em}}
\apptocmd\endremark{\vspace{1em}}{}{}

\newtheorem{lemma}[theorem]{Lemma}
\preto\lemma{\vspace{1em}}
\apptocmd\endlemma{\vspace{1em}}{}{}

\newtheorem{example}[theorem]{Example}
\preto\example{\vspace{1em}}
\apptocmd\endexample{\vspace{1em}}{}{}

\newtheorem{corollary}[theorem]{Corollary}
\preto\corollary{\vspace{1em}}
\apptocmd\endcorollary{\vspace{1em}}{}{}

\newtheorem{definition}[theorem]{Definition}
\preto\definition{\vspace{1em}}
\apptocmd\enddefinition{\vspace{1em}}{}{}

\preto\theorem{\vspace{1em}}
\apptocmd\endtheorem{\vspace{1em}}{}{}

\newcommand{\equalcontrib}{\textsuperscript{*}}

\newcommand{\indep}{\perp \!\!\! \perp}

\usepackage[margin=1in]{geometry}

\title{Estimating the Wasserstein barycenter of one-dimensional distributions under sparse sampling}

\date{\today}

\author{%
\begin{tabular}{cc}
\begin{minipage}[t]{0.43\textwidth}
\centering
James Peng\equalcontrib\\[0.25em]
\small Department of Biostatistics\\
\small University of Washington\\
\small Seattle, WA, USA\\
\small jpspeng@uw.edu
\end{minipage}
&
\begin{minipage}[t]{0.43\textwidth}
\centering
Florian Stijven\equalcontrib\\[0.25em]
\small I-BioStat\\
\small KU Leuven\\
\small Leuven, Belgium\\
\small florian.stijven@kuleuven.be
\end{minipage}
\\[6em]
\begin{minipage}[t]{0.43\textwidth}
\centering
Linbo Wang\\[0.25em]
\small Department of Statistical Sciences\\
\small University of Toronto\\
\small Toronto, Canada\\
\small linbo.wang@utoronto.ca
\end{minipage}
&
\begin{minipage}[t]{0.43\textwidth}
\centering
Peter B.~Gilbert\\[0.25em]
\small Vaccine and Infectious Disease Division\\
\small Fred Hutchinson Cancer Center\\
\small Seattle, WA, USA \\
\small pgilbert@fredhutch.org
\end{minipage}
\end{tabular}
}

\begin{document}
\maketitle

\begin{center}
\textsuperscript{*}These authors contributed equally to this work.
\end{center}

\vspace{2em}

\begin{abstract}
We study distributional data under sparse sampling where each unit is represented by a probability distribution on the real line that is observed only through a small i.i.d.~sample. A natural notion of central tendency for one-dimensional distributional data is the Wasserstein barycenter, a distribution whose quantile function is the pointwise average of the unit-level quantile functions. We focus on pointwise estimation of the quantile function of the Wasserstein barycenter: at a given quantile level, the target is the population mean of the corresponding unit-level quantiles. A naïve plug-in estimator is the empirical Wasserstein barycenter, which treats the observed unit-level empirical distributions as if they were the true latent unit-level distributions. Under sparse sampling, however, this estimator can be severely biased. We propose an approach that avoids directly estimating either the unit-level distributions or the full population law of distributions. We start with the more ambitious goal of characterizing the distribution of the latent unit-level quantiles at a given quantile level. We show that this distribution can be written in terms of the marginal distributions of the unit-level CDF values, which can be estimated using binomial mixture methods. 
This motivates our estimator, the marginal-constructed barycenter (MCB) estimator, which is obtained by taking the mean of the estimated distribution of latent unit-level quantiles. We establish conditions under which the MCB estimator is pointwise consistent and asymptotically normal, and we show through simulations that it can substantially outperform the empirical Wasserstein barycenter under sparse sampling. We illustrate the method in an analysis of HIV-1 sequence data from the HVTN 502/503 vaccine efficacy trials, using the barycenter to summarize and compare within-participant distributions of viral sequence features when only a small number of sequences are available per participant.
\end{abstract}

\doublespacing

\newpage

\section{Introduction}

We study the statistical analysis of distributional data, where the units of analysis are distributions on the real line. Our primary motivation comes from HIV-1 studies, where the biological object of interest is not a single homogeneous sample but a heterogeneous population of viral variants within an individual with HIV-1. These studies often summarize each sample by a single consensus sequence or a small number of dominant variants \parencite{edlefsen2015comprehensive, decamp2017sieve, juraska2024prevention}. With modern sequencing technologies, however, it is possible to obtain multiple viral sequences from the same individual, giving a more detailed view of the within-host viral population \parencite{gregori_viral_2016}. HIV-1 exhibits extensive genetic diversity within an individual, forming a quasispecies of continually evolving viral variants. When a one-dimensional numerical feature, such as a genetic distance or physicochemical property, is computed for each sequence, the resulting collection of values can be viewed as a sample from an individual-specific distribution. This distribution captures the heterogeneity of the within-host viral population.

A central difficulty in this setting is that, despite these technological advances, the number of sequences observed for each individual can be small. In some HIV-1 sequencing studies, there may be fewer than 10 sequences per individual \parencite{mullins_long-read_2025}. Thus, the scientific target is distributional, but each individual distribution might be observed only through a \textit{sparse sample}. This differs from better-studied examples of distributional data, such as wearable device studies, where accelerometers measure physical activity intensity at high frequency over multiple days. In that setting, the dense stream of measurements can often be regarded as repeated draws from an underlying subject-specific distribution, where the distribution characterizes the relative amount of time the individual spends at each intensity level \parencite{augustin_modelling_2017, ghosal_distributional_2023}. The number of accelerometer readings per individual typically is large, so each empirical distribution may be a reliable representation of the true distribution. In contrast, in the HIV-1 setting, the empirical distribution for individuals with low numbers of sequences may be a poor representation of their true individual distribution. More broadly, the challenge of small within-unit sample sizes is not unique
to sequencing studies: it can arise from resource
constraints or from applications in which each unit naturally contains only
a potentially small number of subunits. Additional data examples are discussed in
Appendix~\ref{appendix:additional-data-applications}. In this paper, we
focus on this sparse sampling setting, where each unit-level distribution is
observed through only a small number of samples.

A key task in the analysis of distributional data is to summarize a collection of distributions by a single representative distribution. In the HIV-1 application, such a summary could describe the average pattern of within-host viral heterogeneity across a population or within a clinically relevant subgroup. This notion of an average depends on the geometry of the space in which the objects are compared. In this paper, we focus on the Wasserstein barycenter, defined as the Fréchet mean of a collection of distributions under the Wasserstein metric \parencite{agueh2011barycenters}. Compared with pointwise Euclidean averaging of the distributions, the Wasserstein barycenter better preserves salient features of the component distributions \parencite{bigot_upper_2018, panaretos_invitation_2020}. For example, the barycenter of bimodal distributions of the same shape is itself bimodal, whereas the pointwise mean of the densities does not preserve that shape (Figure \ref{fig:barycenter_vs_mean}). Moreover, for one-dimensional distributions, the barycenter has a simple and interpretable form: its quantile function is the mean of the unit-level quantile functions \parencite{bigot_geodesic_2017}. Since the quantile function fully determines the distribution, we focus on estimating the quantile function of the barycenter.

\begin{figure}[!htb]
    \centering
    \includegraphics[width=\textwidth]{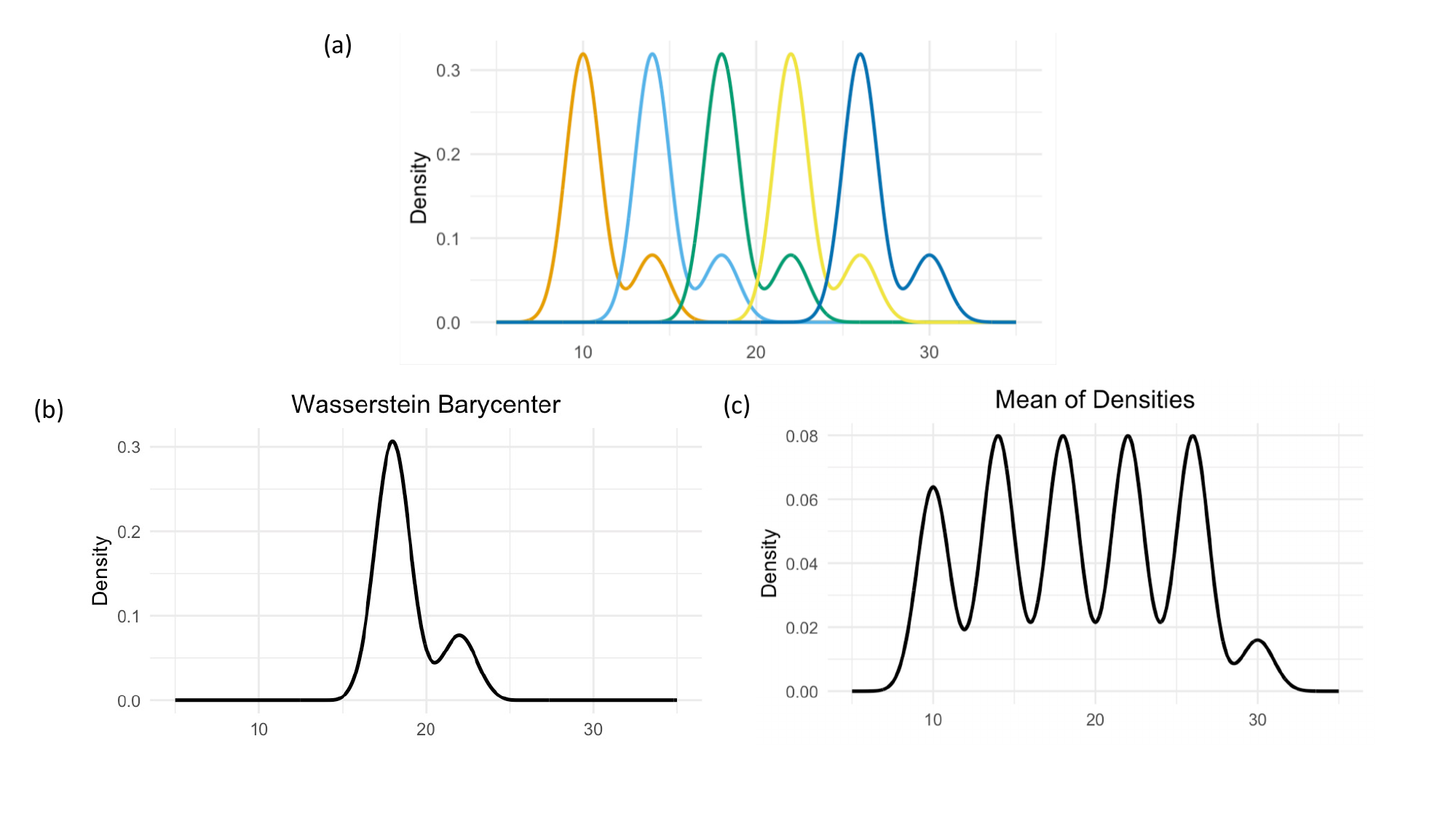}
    \caption{(a) The density of five bimodal distributions, (b) the Wasserstein barycenter of the five distributions, (c) the distribution obtained by taking the pointwise mean of the five density functions.}
    \label{fig:barycenter_vs_mean}
\end{figure}

A simple approach to estimating the barycenter is to treat the observed empirical distributions as if they were the true unit-level distributions and then compute the corresponding average quantile function. This yields the \textit{empirical Wasserstein barycenter}. Under sparse sampling, however, this estimator can be substantially biased. This is because the barycenter is calculated by averaging unit-level quantile functions, and empirical quantiles, unlike the empirical CDF, are generally biased when observed through only a small sample \parencite{bobkov2019one}. In particular, they are well known to generally be downward biased at upper tail quantiles, and upward biased at lower tail quantiles. 
Figure \ref{fig:method_preview} illustrates the problem in a simulation where 1,000  distributions are drawn from the 5 distributions in Figure \ref{fig:barycenter_vs_mean}a and only 7 i.i.d.~samples are observed from each. The empirical barycenter deviates substantially from the truth at most quantiles, and particularly in the tail quantile levels. Nevertheless, there is hope in that even though the unit-level quantile functions are poorly measured, the observed data still contain information across many independent units. This cross-unit information could be leveraged to recover the barycenter.

Motivated by this problem, we study barycenter estimation under sparse sampling. We make five main contributions. 
First, we show that when the number of observations per unit is bounded, the barycenter is generally not identifiable from the observed data distribution without additional structure. 
Second, we derive a representation result, which we call the
\textit{marginal construction}. The key idea is to begin with the seemingly more ambitious task of characterizing the distribution of the unobserved unit-level quantiles, from which the barycenter quantile is recovered as the mean. We show that the distribution of unit-level quantiles can be expressed in terms of the marginal distributions of unit-level CDF values. The latter distributions arise as mixing distributions in binomial mixture models. Because the binomial mixture
problem is well studied, this representation suggests how structural assumptions on these mixing distributions can restore identifiability of the barycenter. 
Third, motivated by this representation, we propose the \textit{marginal-constructed barycenter} (MCB) estimator, which estimates the barycenter's quantile function by first estimating the relevant binomial mixing distributions and then plugging those estimates into the representation formula. 
Fourth, under suitable structural conditions, we show that the MCB estimator's quantile function is pointwise consistent and asymptotically normal. We then complement these theoretical results with simulations showing that the MCB estimator can outperform the empirical barycenter even when these conditions are not satisfied; Figure \ref{fig:method_preview} previews its performance in the simulation described above. 
Fifth, we show that our framework extends to estimation of several other related distributional estimands, such as weighted barycenters and Fréchet variance.

\begin{figure}[!htb]
    \centering
    \includegraphics[width=\textwidth]{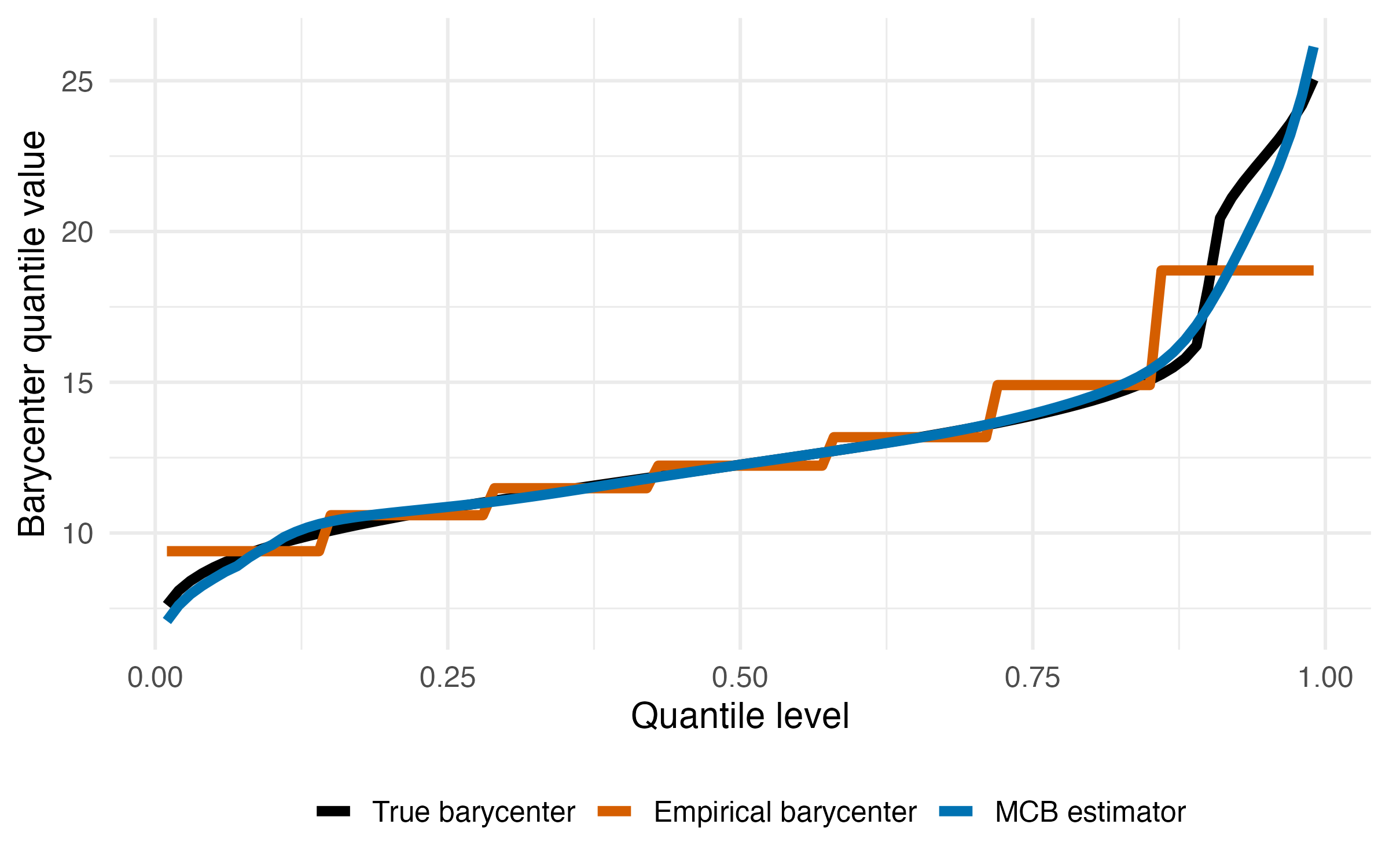}
    \caption{Quantile functions of the empirical Wasserstein barycenter (red) and the proposed MCB estimator (blue) for a single simulated data set where 1,000 distributions are drawn from the five distributions in Figure \ref{fig:barycenter_vs_mean}a and observed through only 7 i.i.d.~samples. The true barycenter is shown in black.}
    \label{fig:method_preview}
\end{figure}

To our knowledge, no existing method explicitly corrects for sparse sampling bias in estimating the Wasserstein barycenter. Previous work typically avoids this difficulty by assuming that each latent distribution can be accurately recovered from its unit-level observations, either by the empirical measure or by a smoothed version of it \parencite{bigot_upper_2018, petersen2019frechet, panaretos_invitation_2020, lin_causal_2023}. These assumptions are natural when the number of observations per distribution is large, but they do not address the sparse regime considered here, where each distribution is observed through only a small, bounded number of samples. The closest related work is \textcite{zhou_wasserstein_2024}, who explicitly study sparse sampling in Wasserstein regression. Their results characterize conditions under which the bias from using empirical measures is asymptotically negligible. However, they do not propose a correction to this bias, and consistency requires the per-distribution sample sizes to grow sufficiently quickly relative to the number of distributions. While such increasing-sample-size regimes provide important theoretical guarantees, they offer limited guidance when analyzing a fixed sparse dataset, where the number of observations per distribution is small and cannot be increased asymptotically.

The remainder of the article is organized as follows. Section \ref{sec:preliminaries} formalizes the data structure and defines the estimand of interest. Section \ref{sec:challenges} discusses estimation challenges under sparse sampling and the failure of the empirical barycenter. Section \ref{sec:identification_estimation} presents our main representation result and the resulting MCB estimation procedure. Section \ref{sec:binomial_mixtures} reviews estimation methods for the binomial mixing problem on which our estimator relies, and Section \ref{sec:asymptotics} develops the MCB estimator's asymptotic properties. Section \ref{sec:extensions} discusses extensions, including weighting and Fr\'echet variance estimation. Section \ref{sec:simulation_study} reports simulation results, Section~\ref{sec:data_application} illustrates the proposed method using data from
the HVTN 502 and HVTN 503 HIV-1 vaccine efficacy trials, and Section \ref{sec:discussion} concludes.

\section{Preliminaries}\label{sec:preliminaries}

\subsection{Data structure}\label{sec:data-structure}

We consider a two-stage sampling framework. Each observational unit $i=1,\dots,n$ is associated with an unobserved probability distribution $\nu_i$ on $\mathcal I \subseteq \mathbb R$, where $\nu_i \in \mathcal{P}_2(\mathcal{I})$ and $\mathcal{P}_2(\mathcal{I})$ denotes the set of probability measures on $\mathcal I$ with finite second moment. We assume that $\nu_1,\dots,\nu_n$ are i.i.d. draws from a population distribution $\Pi$ on $\mathcal{P}_2(\mathcal{I})$. Thus, $\Pi$ is a distribution on distributions that captures distributional heterogeneity across units.
Rather than observing $\nu_i$ directly, we observe $m_i$ i.i.d.~draws from $\nu_i$, denoted by $(X_{i,j})_{j=1}^{m_i}$, where $m_i$ is drawn from a distribution $\eta$ on the positive integers and is independent of $\nu_i$. The sampling scheme is summarized in Figure \ref{fig:two_stage_sampling} and can be written as
\begin{equation}\label{eq:two-stage}
\begin{aligned}
\nu_i &\overset{\mathrm{iid}}{\sim} \Pi, \quad i = 1, \dots, n, \\
m_i &\overset{\mathrm{iid}}{\sim} \eta, \quad m_i \indep \nu_i \\
X_{i,1}, \dots, X_{i,m_i} \mid \nu_i, m_i &\overset{\mathrm{iid}}{\sim} \nu_i.
\end{aligned}
\end{equation}

We denote the empirical measure associated with the $i$th unit as
$\hat{\nu}_i := \frac{1}{m_i}\sum_{j=1}^{m_i}\delta_{X_{i,j}}$,
where $\delta_a$ denotes a Dirac mass at the point $a$. Under this two-stage sampling framework, the full data consist of
$\{(\nu_i, m_i, (X_{i,j})_{j=1}^{m_i})\}_{i=1}^n$.
Since the latent distributions $\nu_i$ are unobserved, the observed data consist of
$\{O_i\}_{i=1}^n$, where $O_i := (m_i, (X_{i,j})_{j=1}^{m_i})$. Throughout, $\mathbb P$ and $\mathbb E$ denote probability and expectation
under the full two-stage sampling law. When probability or
expectation is taken only over one component of this law, we indicate this
with a subscript. For example, $\mathbb P_{\nu\sim\Pi}$ and
$\mathbb E_{\nu\sim\Pi}$ denote probability and expectation over a latent
distribution drawn from $\Pi$, while $\mathbb P_{m\sim\eta}$ and
$\mathbb E_{m\sim\eta}$ denote probability and expectation over a sample
size drawn from $\eta$.

\begin{figure}[!htb]
    \centering

\begin{tikzpicture}[edge from parent/.style={draw,-{Latex}}]
  \tikzstyle{level 1}=[sibling distance=2.4cm, level distance = 1cm] 
  \tikzstyle{level 2}=[sibling distance=2cm, level distance = 1cm] 

  \node (full) {$\Pi \times \eta$}
    child {node {$\nu_1, m_1$}
      child {node (x1) [draw] {$X_{1,1},\dots,X_{1,m_1}; m_1$}
        child {node (a1) [yshift=-0.15cm,draw] {$\hat{\nu}_1; m_1$} edge from parent[dashed]}
      }
    }
    child {node {$\ldots$} edge from parent[draw=none]
      child {node [draw] {$\ldots$} edge from parent[draw=none]}
    }
    child {node {$\nu_n, m_n$}
      child {node (xn) [draw] {$X_{n,1},\dots,X_{n,m_n}; m_n$}
        child {node (c1) [yshift=-0.15cm,draw] {$\hat{\nu}_n; m_n$} edge from parent[dashed]}
      }
    };

\end{tikzpicture}
    \caption{Schematic of the two-stage sampling scheme. Full arrows denote sampling, and dashed arrows denote the construction of the empirical measures. Observable elements are surrounded by a rectangle.}    \label{fig:two_stage_sampling}
\end{figure}
For any $\nu \in \mathcal{P}_2(\mathcal{I})$, let $F_\nu$ denote its cumulative distribution function, and define its quantile function by
$$
F_\nu^{-}(t) := \inf \left\{ x \in \mathbb{R} : F_\nu(x) \ge t \right\}, \qquad t \in (0,1).
$$

We equip $\mathcal{P}_2(\mathcal{I})$ with the 2-Wasserstein metric, defined for $\nu_1, \nu_2 \in \mathcal{P}_2(\mathcal{I})$ by
$$
d_{W_2}^2(\nu_1, \nu_2) := \int_0^1 \left(F_{\nu_1}^{-}(t) - F_{\nu_2}^{-}(t)\right)^2 \, dt.
$$
Under this metric, $\mathcal{P}_2(\mathcal{I})$ forms a metric space \parencite{panaretos_invitation_2020}.

\begin{remark}
The assumption $m_i \indep \nu_i$ is often referred to as non-informative cluster size, meaning that the number of observations for unit $i$ is independent of its latent distribution. Informative cluster size, where $m_i$ depends on $\nu_i$, is a standard concern in the clustered data literature and can invalidate inference; see, e.g., \textcite{williamson_marginal_2003}. Although $\nu_i$ is not observed directly, the non-informative cluster size assumption has testable implications \parencite{nevalainen2017tests}. 
\end{remark}

\subsection{Estimand}\label{sec:estimand}

Our primary target is the population Wasserstein barycenter of $\Pi$, defined as:
$$
\bar{\nu}_\Pi
\in
\operatorname*{argmin}_{\mu\in\mathcal P_2(\mathcal I)}
\mathbb{E}_{\nu\sim\Pi}\bigl[d_{W_2}^2(\mu,\nu)\bigr].
$$
When the dependence on $\Pi$ is clear from context, we write $\bar{\nu}$ in place of $\bar{\nu}_\Pi$. For one-dimensional distributions, the Wasserstein barycenter is unique, and its quantile function equals the mean of the quantile functions of measures drawn from $\Pi$ (Proposition 4.1 in \textcite{bigot_geodesic_2017}):
$$
t \mapsto F_{\bar{\nu}}^{-}(t)
=
\mathbb{E}_{\nu\sim\Pi}\bigl[F_\nu^{-}(t)\bigr],
\qquad t\in(0,1).
$$
Thus, inference on the barycenter can be reframed as inference on the pointwise average quantile function of the latent distributions.

For a fixed quantile level $\alpha\in(0,1)$, we define the scalar target
$$
q_\alpha^\star
:=
F_{\bar{\nu}}^{-}(\alpha)
=
\mathbb{E}_{\nu\sim\Pi}\bigl[F_\nu^{-}(\alpha)\bigr].
$$
This is the $\alpha$-quantile of the barycenter. Our main theoretical development focuses on pointwise inference for $q_\alpha^\star$ at fixed $\alpha$, since this is the quantity directly characterized by our representation result. The full barycenter is determined by considering the collection $\{q_t^\star : t\in(0,1)\}$, which is exactly the quantile function of $\bar{\nu}$. Although our main focus is the barycenter and its pointwise quantiles, the same ideas extend to other related estimands; these are discussed in Section~\ref{sec:extensions}.

\section{Challenges for estimating the barycenter}\label{sec:challenges}

In this section, we highlight two obstacles to estimating the barycenter under sparse sampling. First, when the per-unit sample sizes are bounded, the barycenter's quantile function is not pointwise identifiable from the observed data distribution without additional restrictions on $\Pi$. Second, the empirical barycenter, the natural plug-in estimator formed by averaging unit-level empirical quantiles, is generally inconsistent.

\subsection{Non-identifiability}

Different laws $\Pi$ on latent distributions may generate the same observed data distribution while having different barycenters. This is formalized in the following proposition.

\begin{proposition}\label{prop:barycenter_nonidentifiability}
Assume that the unit-specific sample sizes are bounded, in the sense that 
$\mathbb{P}_{m \sim \eta}(m \le C)=1$ for some $C < \infty$. 
Then, for every $\alpha \in (0,1)$, there exist two distributions on distributions, denoted by $\Pi_1$ and $\Pi_2$, which, together with the same $\eta$, induce the same distribution for the observed data 
$O_i=(m_i,(X_{i,j})_{j=1}^{m_i})$, but satisfy
$
q_{\alpha,\Pi_1}^\star \neq q_{\alpha,\Pi_2}^\star.
$
%Therefore, the barycenter's quantile function is not pointwise identifiable.
\end{proposition}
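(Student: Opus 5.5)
The plan is to exhibit $\Pi_1,\Pi_2$ supported on two-point distributions. By the sampling scheme \eqref{eq:two-stage}, for each $k$ the conditional law of $(X_{i,1},\dots,X_{i,k})$ given $m_i=k$ is the mixture $\mathbb E_{\nu\sim\Pi}[\nu^{\otimes k}]$. Because $\eta$ is shared and $m_i\indep\nu_i$, the observed-data law is therefore determined by $\eta$ together with these mixtures for $k\le C$. It suffices to build $\Pi_1\neq\Pi_2$ whose mixtures agree for all $k\le C$ but whose $\alpha$-quantile barycenters differ.

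Fix two points $a<b$ in $\mathcal I$; I assume $\mathcal I$ is not a singleton, which is implicit in the setting. Let $\nu_p:=(1-p)\delta_a+p\delta_b$ for $p\in[0,1]$, and let $\Pi_\ell$ be the law of $\nu_p$ when $p\sim G_\ell$, where $G_\ell$ is a probability measure on $[0,1]$. Each $\nu_p\in\mathcal P_2(\mathcal I)$.

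The first key step is that $\mathbb E[\nu_p^{\otimes k}]$ assigns to a sequence in $\{a,b\}^k$ with $j$ entries equal to $b$ the mass $\int p^j(1-p)^{k-j}\,dG_\ell(p)$. This is a linear combination of the moments $\int p^r\,dG_\ell$, $r\le k$. Hence if $G_1$ and $G_2$ share their moments of order $0,\dots,C$, then $\Pi_1$ and $\Pi_2$ induce the same observed-data law.

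The second step computes the target. We have $F_{\nu_p}^-(\alpha)=a$ if $1-p\ge\alpha$ and $b$ otherwise. Thus
\[
q^\star_{\alpha,\Pi_\ell}=a+(b-a)\,G_\ell\bigl((1-\alpha,1]\bigr),
\]
and we need $G_1((1-\alpha,1])\neq G_2((1-\alpha,1])$.

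The main step is constructing the moment-matching pair. Choose $C+2$ distinct points $x_0<\dots<x_{C+1}$ in $[0,1]$ with $x_0,\dots,x_C\le 1-\alpha<x_{C+1}$; this is possible since $\alpha\in(0,1)$. The linear system $\sum_i w_i x_i^r=0$, $r=0,\dots,C$, has $C+1$ equations in $C+2$ unknowns, so it has a nonzero solution $w$. By the Vandermonde structure this kernel is one-dimensional and spanned by the divided-difference weights $w_i=\prod_{j\ne i}(x_i-x_j)^{-1}$, all of which are nonzero. Since $\sum_i w_i=0$ (the case $r=0$), the positive and negative parts of the signed measure $\sigma=\sum_i w_i\delta_{x_i}$ have equal total mass $s>0$.

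Set $G_1:=\sigma^+/s$ and $G_2:=\sigma^-/s$. These are probability measures with identical moments up to order $C$, because $\sigma$ annihilates all polynomials of degree at most $C$. The atom $x_{C+1}$ carries nonzero weight $w_{C+1}$, so it lies in the support of exactly one of $G_1,G_2$. Consequently exactly one of them gives mass $|w_{C+1}|/s>0$ to $(1-\alpha,1]$ and the other gives mass $0$. By the quantile formula above this yields $q^\star_{\alpha,\Pi_1}\neq q^\star_{\alpha,\Pi_2}$.

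The only delicate point is ensuring that the kernel vector has a nonzero coordinate at the lone point above $1-\alpha$. The explicit Vandermonde kernel formula settles this.
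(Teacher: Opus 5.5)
Your proof is correct. The reduction step is the same as the paper's: the paper also uses latent distributions supported on two points. It sets $\mathcal I=\{0,1\}$ and writes $p_i=F_{\nu_i}(0)$, so that $q_\alpha^\star=\mathbb P(p<\alpha)$ and the observed data become a binomial mixture depending only on the first $C$ moments of the mixing law. This is your formula $q^\star_{\alpha,\Pi_\ell}=a+(b-a)G_\ell((1-\alpha,1])$ with the roles of the two atoms swapped.

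Where you differ is the construction of the moment-matching pair.

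\begin{itemize}
\item The paper (Lemmas~\ref{lemma:truncated-moment-restriction} and~\ref{lemma:hilbert-matrix}) starts from a mixing law $G$ with Lebesgue density bounded away from zero. It perturbs it to $G\pm\epsilon\nu^\star$, where $\nu^\star$ has a polynomial density of degree $M\ge 2C+2$ that annihilates $1,t,\dots,t^C$ but gives $[0,\alpha)$ nonzero mass. Proving such a polynomial exists takes a fairly delicate argument: a forward-difference computation shows that $(\alpha^j/j)_j$ is not in the row space of a rectangular Hilbert matrix.
\item Your signed measure with $C+2$ atoms and divided-difference weights replaces all of this with elementary finite-dimensional linear algebra. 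Putting a single atom above $1-\alpha$ makes the separation of the two targets immediate. Even the explicit weight formula is not needed: any $C+1$ columns of the Vandermonde system are independent, so every nonzero kernel vector has all coordinates nonzero.
\end{itemize}

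What the paper's route buys is a stronger, local statement. Non-identifiability persists among absolutely continuous mixing laws, and in every neighborhood of any mixing law with density bounded below, since that law is the midpoint $(G_1+G_2)/2$ with $\epsilon$ arbitrarily small. So restricting to smooth mixing distributions would not restore identifiability. Your $G_1,G_2$ are finitely supported and mutually singular, so they do not give that conclusion.

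Conversely, your argument works for any $\mathcal I$ containing at least two points, not only for the specific choice $\mathcal I=\{0,1\}$. One cosmetic point: $C$ need not be an integer, so count atoms using $\lfloor C\rfloor$.
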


% \begin{proposition}
% Assume that the unit-specific sample sizes are bounded, in the sense that $P_{\eta}(m_i \le C)=1$ for some $C < \infty$. Then there exist two distributions on distributions, denoted by $\Pi_1$ and $\Pi_2$, such that
% $
% \bar{\nu}_{\Pi_1} \neq \bar{\nu}_{\Pi_2},
% $
% while $\Pi_1$ and $\Pi_2$ induce the same distribution for the observed data $O_i=(m_i,(X_{i,j})_{j=1}^{m_i})$.
% \end{proposition}

This non-identifiability is a fundamental obstacle for inference: without
additional assumptions on $\Pi$, the observed-data distribution does not in
general determine $q_{\alpha,\Pi}^\star$. Therefore, consistent estimation of $q_{\alpha,\Pi}^\star$ is generally impossible without changing the asymptotic regime or restricting the
model class. The lack of identifiability can be addressed in two ways. One approach, as discussed in the introduction, is to allow the sampling distribution $\eta$ to vary with $n$ so that unit-specific sample sizes increase and the barycenter becomes identifiable in the limit \parencite{zhou_wasserstein_2024}. 
A second approach is to assume a parametric model for the law $\Pi$ that is identifiable even when unit-specific sample sizes are bounded. Asymptotic results then follow from standard maximum likelihood theory; see Example \ref{example:gaussian_random_intercept}. 

\begin{example}\label{example:gaussian_random_intercept}
A simple fully parametric example is a Gaussian random-intercept model:
$$
\nu_i = \mathcal{N}(\mu_i,\sigma^2),
\qquad
\mu_i \overset{\mathrm{iid}}{\sim} \mathcal{N}(\mu_0,\tau^2).
$$
Then the population barycenter is
$
\bar{\nu} = \mathcal{N}(\mu_0,\sigma^2)
$, which is identifiable if some units have at least two observations. Estimation and inference for $\bar{\nu}$ can then be performed by fitting the random-intercept model using maximum likelihood or restricted maximum likelihood \parencite{molenberghs2000linear}.
\end{example}

% Our proposed approach is semiparametric: rather than fully specifying $\Pi$, we impose structure only on the aspects of $\Pi$ needed to identify the barycenter, while leaving the remaining features of $\Pi$ unrestricted. 

\subsection{The empirical Wasserstein barycenter}\label{sec:empirical_barycenter}

The non-identifiability result shows that fully nonparametric recovery of $q_\alpha^{*}$ is generally impossible under bounded unit-level sample sizes. Nevertheless, a natural plug-in estimator remains available. First, if the latent distributions $\nu_i$ were directly observed, a natural estimator of $q_\alpha^\star$ would be the oracle plug-in estimator
$$
\hat q_{\alpha}^{\,\mathrm{or}}
:=
\frac{1}{n}\sum_{i=1}^n F_{\nu_i}^{-}(\alpha).
$$
However, since the $\nu_i$ are unobserved, this estimator is infeasible. A substitute is to replace each latent distribution $\nu_i$ by its empirical measure $\hat{\nu}_i$, yielding
$$
\hat q_{\alpha}^{\,\mathrm{emp}}
:=
\frac{1}{n}\sum_{i=1}^n F_{\hat{\nu}_i}^{-}(\alpha).
$$
As $\alpha$ varies over $(0,1)$, this defines the quantile function of the \textit{empirical Wasserstein barycenter} \parencite{bigot_upper_2018}.

The next proposition shows that the expectation of $\hat q_{\alpha}^{\,\mathrm{emp}}$ is governed by order statistics from the population barycenter, averaged over the sample size distribution $\eta$. This makes clear why sparse sampling induces bias: for fixed $m$, the empirical $\alpha$-quantile behaves like the $\lceil m\alpha\rceil$th order statistic of a sample of size $m$, whose expectation generally differs from the population $\alpha$-quantile. This bias is especially pronounced for extreme quantiles, as was seen in Figure \ref{fig:method_preview}. 

\begin{proposition}\label{prop:barycenter_mean}
For each $m\ge 1$, let
$X_{(1)}^{*(m)},\dots,X_{(m)}^{*(m)}$ denote the order statistics of an
i.i.d.~sample of size $m$ drawn from the barycenter $\bar{\nu}$. Then, under
the two-stage sampling scheme in \eqref{eq:two-stage},
\begin{align}
\mathbb{E}\!\left[\hat q_{\alpha}^{\,\mathrm{emp}}\right]
&=
\mathbb{E}_{m\sim\eta}\!\left[
\mathbb{E}_{\bar{\nu}}\!\left[
X_{(\lceil m\alpha\rceil)}^{*(m)}
\right]
\right] \notag \\
&=
\sum_{m'=1}^{\infty}\eta(m')\,
\mathbb{E}_{\bar{\nu}}\!\left[
X_{(\lceil m'\alpha\rceil)}^{*(m')}
\right],
\end{align}
where $\eta(m')=\mathbb{P}_{m\sim\eta}(m=m')$ and, with slight abuse of
notation, $\mathbb{E}_{\bar{\nu}}$ denotes expectation with respect to the
i.i.d.~sample from $\bar{\nu}$ whose order statistic appears inside the
expectation.
\end{proposition}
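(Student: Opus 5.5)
Because $\hat q_{\alpha}^{\,\mathrm{emp}}$ is an average of identically distributed terms, the proof reduces to computing $\mathbb{E}[F_{\hat\nu_1}^{-}(\alpha)]$ for a single unit. The approach is to condition on $(\nu_1,m_1)$ and use the quantile transform. I assume the expectations are finite; this follows from $\nu\in\mathcal P_2(\mathcal I)$ together with $\mathbb{E}_{\nu\sim\Pi}|F_\nu^-(t)|$ being integrable, as is implicit in the definition of $q^\star_\alpha$.

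The plan has four steps.

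\textbf{Step 1: empirical quantile as an order statistic.} For a sample of size $m$, the empirical CDF jumps by $1/m$ at each order statistic. Hence $F_{\hat\nu}^{-}(\alpha)=\inf\{x: \#\{j:X_j\le x\}\ge m\alpha\}=X_{(\lceil m\alpha\rceil)}$, and this holds even with ties.

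\textbf{Step 2: quantile coupling.} Conditional on $(\nu_1,m_1=m)$, represent $X_{1,j}=F_{\nu_1}^{-}(U_j)$ with $U_1,\dots,U_m$ i.i.d.\ Uniform$(0,1)$, independent of $\nu_1$. This is legitimate because $m_1\indep\nu_1$ and the draws are i.i.d.\ given $(\nu_1,m_1)$. Since $F_{\nu_1}^{-}$ is nondecreasing, the order statistics satisfy $X_{(k)}=F_{\nu_1}^{-}(U_{(k)})$. Combining with Step 1,
\[
F_{\hat\nu_1}^{-}(\alpha)=F_{\nu_1}^{-}\bigl(U_{(k_m)}\bigr), \qquad k_m:=\lceil m\alpha\rceil .
\]

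\textbf{Step 3: exchange expectations.} The uniforms are independent of $\nu_1$. Taking expectation over $\nu_1\sim\Pi$ first (Fubini) gives
\[
\mathbb{E}\bigl[F_{\hat\nu_1}^{-}(\alpha)\mid m_1=m\bigr]=\mathbb{E}_U\Bigl[\mathbb{E}_{\nu\sim\Pi}\bigl[F_\nu^{-}(u)\bigr]\big|_{u=U_{(k_m)}}\Bigr]=\mathbb{E}_U\bigl[F_{\bar\nu}^{-}(U_{(k_m)})\bigr].
\]
The last equality uses the pointwise identity $F_{\bar\nu}^{-}(u)=\mathbb{E}_{\nu\sim\Pi}[F_\nu^{-}(u)]$ from Section~\ref{sec:estimand}.

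\textbf{Step 4: recognize the barycenter order statistic.} Applying the same quantile representation to an i.i.d.\ sample from $\bar\nu$ shows that $F_{\bar\nu}^{-}(U_{(k)})$ has the law of $X^{*(m)}_{(k)}$. Averaging over $m\sim\eta$ by the tower property then gives the first line of the statement. The series form is simply this expectation written out over the support of $\eta$.

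The main obstacle is justifying Fubini in Step 3, since $F_\nu^{-}$ may be unbounded near $0$ and $1$. I would handle this by noting that $U_{(k)}$ has a Beta$(k,m-k+1)$ density bounded by a constant $c_m$. This gives
\[
\mathbb{E}\bigl|F_\nu^{-}(U_{(k)})\bigr|\le c_m\int_0^1|F_\nu^{-}(u)|\,du\le c_m\, d_{W_2}(\nu,\delta_0),
\]
which is integrable over $\Pi$ under the standing second-moment assumption $\mathbb{E}_{\nu\sim\Pi}\,d_{W_2}^2(\nu,\delta_0)<\infty$. Summing $\eta$-weighted terms requires integrability of $\hat q^{\,\mathrm{emp}}_\alpha$ itself; if $c_m$ is not summable, I would instead bound $|X_{(k)}|\le\sum_j|X_j|$, whose expectation is $m\,\mathbb{E}|X|$, and assume $\mathbb{E}_\eta m<\infty$.
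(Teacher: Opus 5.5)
Your proposal is correct and is essentially the paper's proof. Both arguments reduce to a single unit, condition on $m_1=m'$ using $m_1\indep\nu_1$, identify the conditional mean of the $\lceil m'\alpha\rceil$th order statistic with its counterpart under $\bar\nu$, and sum over $\eta$; the only difference is that you prove the single-$m'$ identity yourself via the coupling $X_{1,j}=F_{\nu_1}^{-}(U_j)$ and Fubini, which is the content of Lemma~A.1 of \textcite{bigot_upper_2018} that the paper simply cites. Your integrability discussion goes beyond the paper, which is silent on it, but the fallback assumption $\mathbb{E}_{m\sim\eta}[m]<\infty$ buys nothing: since $c_m\asymp\sqrt m$, that assumption already makes $\sum_m\eta(m)c_m$ finite. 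In fact no condition on $\eta$ is needed, because monotonicity gives $|F_\nu^{-}(u)|\le\|F_\nu^{-}\|_{L^2(0,1)}\{u^{-1/2}+(1-u)^{-1/2}\}$, and $\mathbb{E}[U_{(\lceil m\alpha\rceil)}^{-1/2}+(1-U_{(\lceil m\alpha\rceil)})^{-1/2}]$ is bounded uniformly in $m$ for fixed $\alpha\in(0,1)$.
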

Proposition \ref{prop:barycenter_mean} shows that $\hat q_{\alpha}^{\,\mathrm{emp}}$ is unbiased for $q_\alpha^\star$ if and only if 
\begin{equation}
\sum_{m'=1}^{\infty}\eta(m')\,\mathbb{E}_{\bar{\nu}}\!\left[X_{(\lceil m'\alpha\rceil)}^{*(m')}\right]
=
q_\alpha^\star,
\label{eq:consistent_barycenter}
\end{equation}
which only holds in exceptional cases. Since the left-hand side does not depend on $n$, the empirical barycenter is generally inconsistent for $q_\alpha^\star$ unless \eqref{eq:consistent_barycenter} holds.
Example \ref{ex:biased_emp} illustrates this point in a simple setting.

\begin{example}\label{ex:biased_emp}
Suppose the latent law $\Pi$ is degenerate at $\mathcal{N}(0,1)$ (i.e., $\nu_i = \mathcal{N}(0,1)$ for all $i$) and hence $\bar{\nu} = \mathcal{N}(0,1)$. Also suppose that all units have the same sample size $m$. Let $X_{(1)}^{*(m)},\dots,X_{(m)}^{*(m)}$ denote the order statistics of an i.i.d.~sample of size $m$ drawn from $\mathcal{N}(0,1)$. Then, as $n\to\infty$,
$$
\hat q_{\alpha}^{\,\mathrm{emp}}
\overset{P}{\longrightarrow}
\mathbb{E}\!\left[X_{(\lceil m\alpha\rceil)}^{*(m)}\right].
$$
For fixed $m$, the right-hand side is piecewise constant in $\alpha$, whereas $q_\alpha^\star = F_{\bar{\nu}}^{-1}(\alpha)$ is strictly increasing and continuous in $\alpha$, so the equality $\mathbb{E}\!\left[X_{(\lceil m\alpha\rceil)}^{*(m)}\right]=q_\alpha^\star$ can only hold at isolated values of $\alpha$. In particular, the empirical barycenter quantile estimator is inconsistent for almost all $\alpha$. By symmetry, when $m$ is odd, $\alpha=0.5$ is one value where it is consistent.
\end{example}

\section{MCB construction and estimator} \label{sec:identification_estimation}

\subsection{Alternative representation via marginal mixing distributions}

The empirical barycenter is a natural estimator because it replaces each latent (unobserved) distribution by its unit-level empirical distribution. However, as shown above, this plug-in estimator is generally inconsistent under sparse sampling. 
The key difficulty is that subject-specific quantiles
$F_\nu^{-}(\alpha)$ can be poorly estimated when only a small number of observations are available for each distribution. 

Our strategy is based on a reformulation of the problem. Although
our target,
$q_\alpha^\star=\mathbb E_{\nu\sim\Pi}\{F_\nu^{-}(\alpha)\}$,
is the expectation of the latent $\alpha$-quantiles, we instead begin with
the seemingly more ambitious goal of estimating the \textit{distribution} of the
latent $\alpha$-quantiles across units. The rationale is that this
distribution is directly linked to the distribution of the unit-level CDF
values. Indeed, for any $x\in\mathcal I$,
\begin{equation}\label{eq:cdf_quantile_equiv}
\{F_\nu^{-}(\alpha)\le x\}
\iff
\{F_\nu(x)\ge \alpha\}.
\end{equation}
Thus, although the expectation of a random quantile is not a simple transformation
of the expectation of the corresponding distribution functions, the distribution of the
latent quantiles can be translated into the distribution of the latent CDF
values. This reformulation is useful because, as we argue below,
estimating the distribution of the latent CDF values can be reduced to the
well-studied binomial mixture problem.

To make this precise, for each fixed $x\in\mathcal I$, let
\begin{equation}
G(x,t) := \mathbb{P}_{\nu\sim\Pi}\left(F_\nu(x) \le t\right), \qquad t\in[0,1],
\label{eq:g_definition}
\end{equation}
so that $G(x,\cdot)$ is the CDF of the random variable $F_\nu(x)$ when $\nu\sim\Pi$. This is illustrated in Figure \ref{fig:identification}. Equivalently, for each fixed $x$, the law $G(x,\cdot)$ serves as the mixing distribution in a binomial mixture model. Writing $U_i(x):=F_{\nu_i}(x)$ for the unobserved CDF value of unit $i$ at $x$ and
\begin{equation}
Y_i(x) := \sum_{j=1}^{m_i} \mathbf 1\{X_{i,j}\le x\}
\label{eq:binomial_count}
\end{equation}
for the number of ``successes'' in the $i$th unit's sample at $x$, we have the conditional model
\begin{equation}
Y_i(x) \mid m_i, U_i(x) \sim \operatorname{Bin}\left(m_i, U_i(x)\right).
\label{eq:mixture_model}
\end{equation}
By construction, the latent values $U_1(x),\dots,U_n(x)$ are i.i.d.~with CDF $G(x,\cdot)$, so marginally $Y_i(x)$ is a binomial mixture with mixing distribution $G(x,\cdot)$.

Theorem \ref{theorem:identification} below is a representation result: it represents the target $q_\alpha^\star$ in terms of the collection of marginal mixing distributions $\{G(x,\cdot): x \in \mathcal I\}$ induced by $\Pi$.
Hence, $q_\alpha^\star$ is identified with the observed data if the mixing distribution $G(x,\cdot)$ is identifiable for each $x \in \mathcal I$. We return to those conditions in Section~\ref{sec:binomial_mixtures}.

\begin{theorem}[Representation via marginal mixing distributions]\label{theorem:identification}
For $x\in\mathcal I$, define $G(x,\alpha-) := \lim_{t\uparrow\alpha} G(x,t),$
and let
$$
H_\alpha(x) := 1-G(x,\alpha-).
$$
Then the CDF of the distribution of $\alpha$-quantiles $F_\nu^{-}(\alpha)$, where $\nu\sim\Pi$, equals $x\mapsto H_\alpha(x)$ for $x\in\mathcal I$.
Consequently, 
\begin{equation*}
    q_\alpha^\star = \int_{\mathcal I} x\,dH_\alpha(x) = -\int_{\mathcal I} x\,dG(x,\alpha-).
\end{equation*}
\end{theorem}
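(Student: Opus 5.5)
The plan is to start from the Galois-type equivalence \eqref{eq:cdf_quantile_equiv}: for any distribution $\nu$ and any $x\in\mathbb R$, $F_\nu^{-}(\alpha)\le x$ holds if and only if $F_\nu(x)\ge\alpha$. This follows directly from the definition of $F_\nu^{-}$ as an infimum. We need right-continuity of $F_\nu$, which guarantees that the infimum is attained, so $F_\nu(F_\nu^-(\alpha))\ge\alpha$. Monotonicity of $F_\nu$ then gives the forward direction. The reverse direction is immediate, since $x$ belongs to the set over which the infimum is taken.

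Applying this pointwise with $\nu\sim\Pi$ gives, for each $x\in\mathcal I$,
\begin{equation*}
\mathbb P_{\nu\sim\Pi}\bigl(F_\nu^{-}(\alpha)\le x\bigr)=\mathbb P_{\nu\sim\Pi}\bigl(F_\nu(x)\ge\alpha\bigr)=1-\mathbb P_{\nu\sim\Pi}\bigl(F_\nu(x)<\alpha\bigr).
\end{equation*}
Next we identify $\mathbb P(F_\nu(x)<\alpha)$ with $G(x,\alpha-)$. The events $\{F_\nu(x)\le t\}$ increase to $\{F_\nu(x)<\alpha\}$ as $t\uparrow\alpha$, so continuity from below of probability gives $\lim_{t\uparrow\alpha}G(x,t)=\mathbb P(F_\nu(x)<\alpha)$. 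This shows that $H_\alpha(x)=1-G(x,\alpha-)$ is the CDF of $F_\nu^-(\alpha)$ on $\mathcal I$. Measurability of $\nu\mapsto F_\nu(x)$ and $\nu\mapsto F_\nu^-(\alpha)$ is implicitly assumed, as it is in the definition of $G$.

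For the mean formula, note that $\nu\in\mathcal P_2(\mathcal I)$ forces $F_\nu^-(\alpha)\in\mathcal I$. Hence the law of $F_\nu^-(\alpha)$ is supported on $\mathcal I$, and $H_\alpha$ is its CDF there. Then
\begin{equation*}
q_\alpha^\star=\mathbb E_{\nu\sim\Pi}[F_\nu^-(\alpha)]=\int_{\mathcal I}x\,dH_\alpha(x)
\end{equation*}
as a Lebesgue--Stieltjes integral. Since $dH_\alpha=-dG(\cdot,\alpha-)$, the second expression follows.

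The main subtlety is integrability. The expectation is finite because $\mathbb E|F_\nu^-(\alpha)|<\infty$, which is implicit in the definition of $q_\alpha^\star$ as the barycenter quantile. If needed, I would justify this from the barycenter's finite second moment together with $q^\star_\alpha$ being well defined via Proposition 4.1 of Bigot et al. I would also take care with boundary effects when $\mathcal I$ is a proper subset of $\mathbb R$, for example the possible atoms of $H_\alpha$ at the endpoints of $\mathcal I$. These are handled by interpreting the Stieltjes integral over the closed support of the law. Beyond these points the argument is routine.
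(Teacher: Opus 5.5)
Your proposal is correct and is essentially the paper's proof. Both identify $G(x,\alpha-)$ with $\mathbb{P}_{\nu\sim\Pi}(F_\nu(x)<\alpha)$ by continuity from below and combine this with the quantile--CDF equivalence. The paper proves that equivalence in the complementary form $F_\nu(x)<\alpha \iff F_\nu^-(\alpha)>x$, using right-continuity exactly where you use attainment of the infimum. Your extra care about integrability and the support of $H_\alpha$ goes beyond the paper, which simply states the integral formula as a consequence. One small caveat: $F_\nu^-(\alpha)\in\mathcal I$ requires $\mathcal I$ to be an interval or a closed set, which covers every case the paper considers.
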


The key observation behind Theorem~\ref{theorem:identification} is the
equivalence in \eqref{eq:cdf_quantile_equiv}. For each $x\in\mathcal I$,
this gives
$$
\mathbb P_{\nu\sim\Pi}\{F_\nu^{-}(\alpha)\le x\}
=
\mathbb P_{\nu\sim\Pi}\{F_\nu(x)\ge\alpha\}
=
1-G(x,\alpha-)
=
H_\alpha(x).
$$
Therefore, $x\mapsto H_\alpha(x)$ is the CDF of the latent
$\alpha$-quantile $F_\nu^{-}(\alpha)$ under $\nu\sim\Pi$. Thus, once the
marginal mixing distributions $\{G(x,\cdot):x\in\mathcal I\}$ are
identified from the observed data, the distribution of the latent
$\alpha$-quantiles is identified as well. Its mean,
$q_\alpha^\star$, is then identified by integrating with respect to
$H_\alpha$. We refer to this representation as the \emph{marginal
construction} of the barycenter.

\begin{figure}[!htb]
    \centering
    \includegraphics[width=\textwidth]{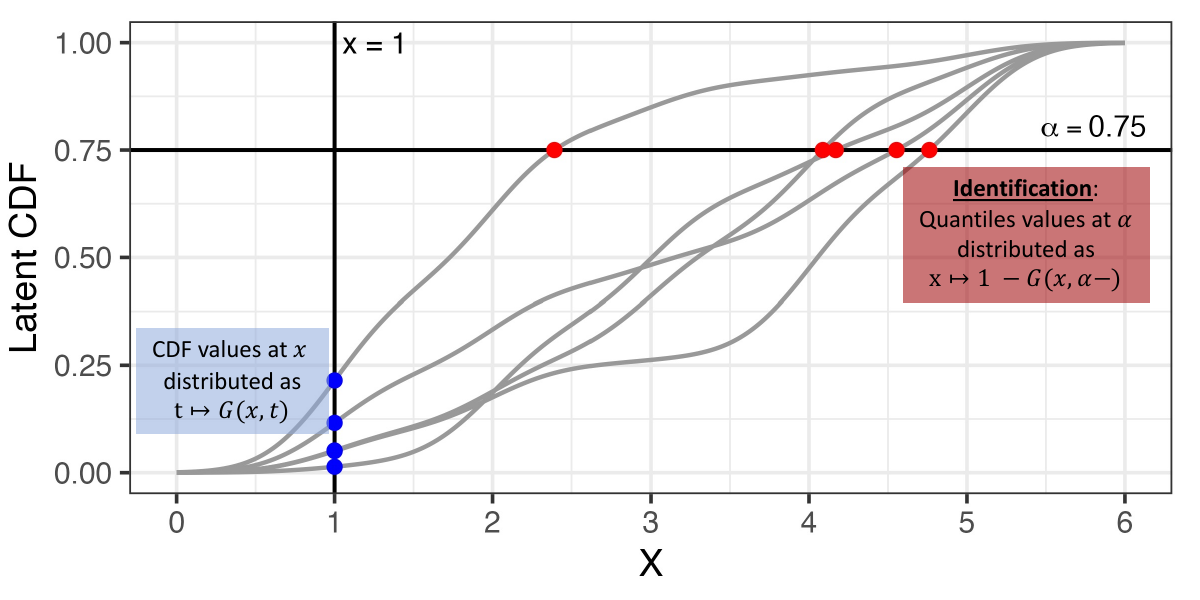}
    \caption{The figure shows the CDFs of five latent distributions. The vertical positions of the blue points are the values $F_\nu(1)$ for the five latent distributions; across $\nu\sim\Pi$, these values have CDF $t \mapsto G(1,t)$ by \eqref{eq:g_definition}. The horizontal positions of the red points are the $0.75$-quantiles of the latent distributions; by Theorem~\ref{theorem:identification}, these values have CDF $H_{0.75}(x) = 1-G(x,0.75-)$.}    \label{fig:identification}
\end{figure}

\subsection{Estimation procedure}

Theorem~\ref{theorem:identification} suggests a plug-in strategy for estimating
$q_\alpha^\star$. For a fixed $\alpha \in (0,1)$, suppose that we have an estimator
$\hat G_n(x,\alpha-)$ of $G(x,\alpha-)$ for each $x \in \mathcal I$, and define $\hat H_{n, \alpha}(x) := 1-\hat G_n(x,\alpha-)$.
Then the \emph{marginal-constructed barycenter} (MCB) estimator is given by
\begin{align}
\hat q_{\alpha}^{\,\mathrm{MCB}}
&:= \int_{\mathcal I} x \, d\hat H_{n, \alpha}(x).
\end{align}

While the true function $x \mapsto G(x,\alpha-)$ may vary over the entire interval $\mathcal I$, the data used to estimate $G(x,\cdot)$ change only at values of $x$ where observations occur.  That is, for any two points $x$ and $x'$ lying between the same pair of consecutively observed values, the indicators $\mathbf{1}\{X_{i,j}\le x\}$ and $\mathbf{1}\{X_{i,j}\le x'\}$ will coincide for each $i, j$, so the data used to estimate the corresponding mixing distributions are identical. Therefore, it suffices to estimate $G(x,\cdot)$ only at these locations. 

We implement the MCB estimator for $q_\alpha^\star$ in the following steps, illustrated in Figure \ref{fig:estimation}:
\begin{enumerate}
    \item Let
$$
\mathcal{X} := \{X_{i,j} : 1 \le i \le n,\; 1 \le j \le m_i\}
$$
denote the set of observed points across all units and let $x_1 < \cdots < x_K$ be the sorted values of $\mathcal{X}$, with $K := |\mathcal{X}|$.
    \item For each point $x_k$ with $k = 1,\dots,K-1$, estimate the binomial mixing distribution $t \mapsto G(x_k,t)$ using the Binomial data $\{(Y_i(x_k),m_i)\}_{i=1}^n$. Denote the resulting estimator by $t \mapsto \hat G_n(x_k,t)$.
    % With $\alpha$ fixed, define $\hat H_{n, \alpha}(x)$ as the stepwise function
    % $$
    % \hat H_{n, \alpha}(x)
    % :=
    % \begin{cases}
    % 0, & x < x_1, \\[0.5em]
    % 1 - \hat G_n(x_k,\alpha), & x \in [x_k,x_{k+1}), \quad k=1,\dots,K-1, \\[0.5em]
    % 1, & x \ge x_K.
    % \end{cases}
    % $$
    
    \item    With $\alpha$ fixed, define $x \mapsto \hat{H}_{n, \alpha}(x)$ as the stepwise function
    $$
    \hat H_{n, \alpha}(x)
    :=
    \begin{cases}
    0, & x < x_1, \\[0.5em]
    1 - \hat G_n(x_k,\alpha-), & x \in [x_k,x_{k+1}), \quad k=1,\dots,K-1, \\[0.5em]
    1, & x \ge x_K.
    \end{cases}
    $$ 
    In practice, we evaluate $\hat G_n(x,t)$ at $t=\alpha$ rather than $\alpha-$; this is equivalent under continuity of $t \mapsto G(x,t)$ at $t=\alpha$.
    
    We then compute the MCB estimate for $q_\alpha^\star$ as follows:
    \begin{align*}
        \hat q_{\alpha}^{\,\mathrm{MCB}}
        &:= \int_{\mathcal I} x \, d   \hat H_{n, \alpha}(x)  \\
        &= \sum_{k=1}^{K} x_k \left\{ \hat G_n(x_{k-1},\alpha) - \hat G_n(x_k,\alpha) \right\},
    \end{align*}
    where for notational convenience $\hat G_n(x_0,\alpha) := 1$ and $\hat G_n(x_K,\alpha) := 0$.

\end{enumerate}

\begin{figure}[!htb]
    \centering
    \includegraphics[width=0.95\textwidth]{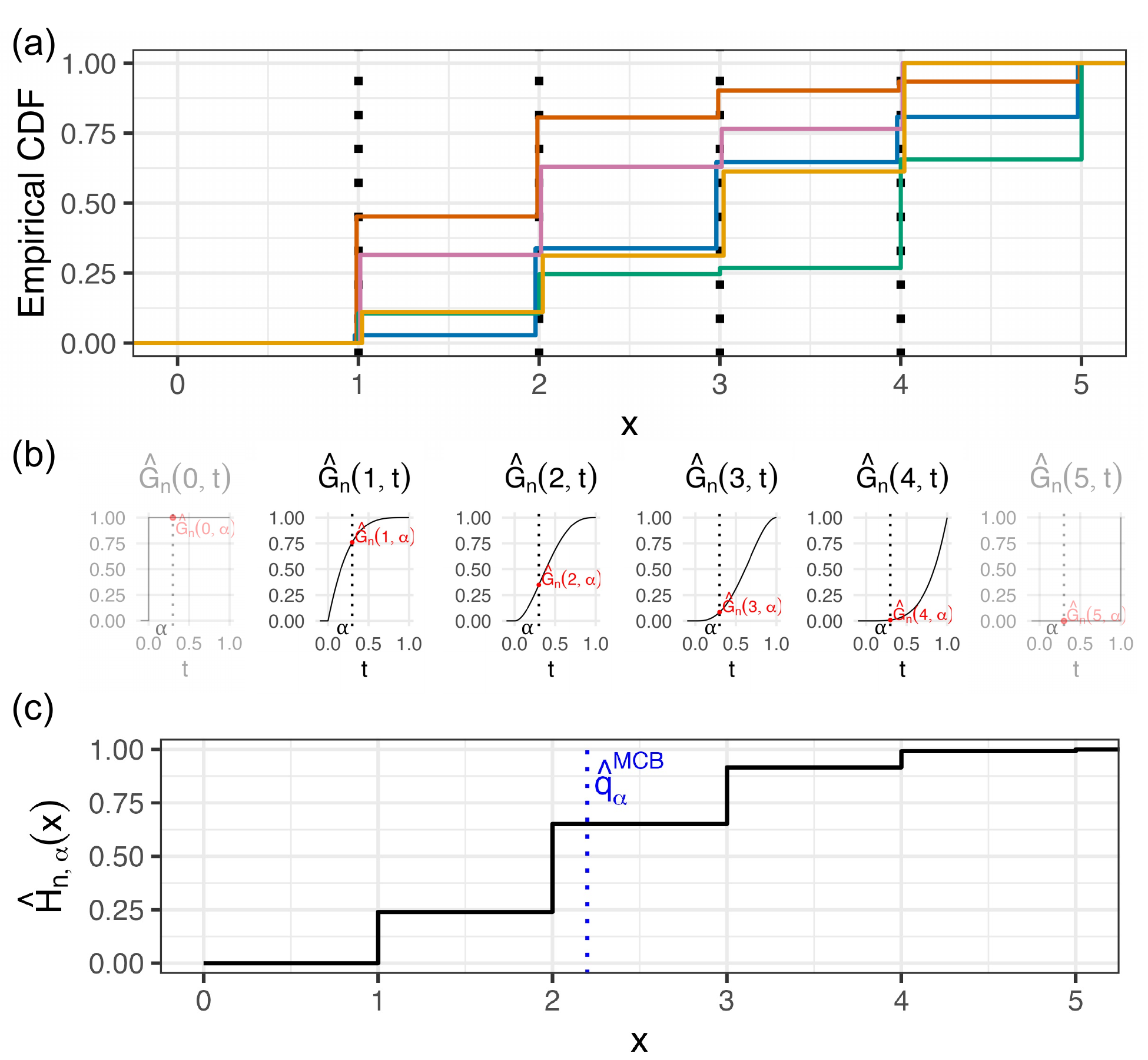}
\caption{
(a) Empirical CDFs in a toy example with five unit-level distributions. The dotted vertical lines indicate the grid points $x=1,2,3,4$ at which the binomial mixing distributions are estimated. 
(b) Estimated mixing distributions $t\mapsto \hat G_n(x,t)$ at each grid point. For points below and above the grid, the boundary distributions are fixed by construction to be degenerate at 0 and 1, respectively, and are shown with reduced opacity.
(c) Construction of $\hat H_{n,\alpha}(x)=1-\hat G_n(x,\alpha)$ from the estimated mixing distributions in panel (b). The blue dotted line marks the MCB estimate $\hat q_{\alpha}^{\,\mathrm{MCB}}$, defined as the mean of the distribution with CDF $\hat H_{n,\alpha}$.
}    \label{fig:estimation}
\end{figure}

The MCB estimator is modular: its construction is separated from the choice of binomial mixture estimator for the mixing distributions. Specifically, by reformulating barycenter estimation in terms of the marginal mixing distributions $G(x,\cdot)$, the MCB framework allows different estimators of $G(x,\cdot)$ to produce different barycenter estimators.
In Section~\ref{sec:binomial_mixtures}, we discuss potential binomial mixture estimators. 
A simple choice is to treat the empirical CDF values $\{F_{\hat\nu_i}(x)\}_{i=1}^n$ as if they were direct draws from the mixing distribution $G(x,\cdot)$, which yields the empirical histogram estimator of the mixing distribution \parencite{ye_binomial_2021}:
$$
\hat G_n^{\mathrm{emp}}(x,t)
:=
\frac{1}{n} \sum_{i=1}^n \mathbf{1}\{ F_{\hat\nu_i}(x) \le t \},
\qquad t \in [0,1].
$$
Notably, if we follow the estimation procedure using $\hat G_n^{\mathrm{emp}}(x,\cdot)$, we recover the empirical Wasserstein barycenter.

\begin{proposition}\label{prop:empirical_barycenter}
The MCB estimator that uses the empirical mixing distributions $\hat G_n^{\mathrm{emp}}(x,\cdot)$ coincides with the empirical Wasserstein barycenter. That is,
$$
\hat q_{\alpha}^{\,\mathrm{MCB}}
=
\frac{1}{n} \sum_{i=1}^n F_{\hat\nu_i}^{-}(\alpha)
=
\hat q_{\alpha}^{\,\mathrm{emp}}.
$$
\end{proposition}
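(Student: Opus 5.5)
The plan is to show that, with the empirical mixing distributions plugged in, the step function $\hat H_{n,\alpha}$ is exactly the empirical CDF of the $n$ numbers $F_{\hat\nu_i}^{-}(\alpha)$. The MCB estimate is the mean of the distribution with CDF $\hat H_{n,\alpha}$, so it would then equal $\frac1n\sum_i F_{\hat\nu_i}^{-}(\alpha)=\hat q_\alpha^{\,\mathrm{emp}}$. In short, this is an empirical version of Theorem~\ref{theorem:identification}: take $\Pi$ to be the empirical law $\frac1n\sum_i\delta_{\hat\nu_i}$, so that $G$ becomes $\hat G_n^{\mathrm{emp}}$ exactly.

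The first step is the pointwise identity. For each $i$ and each $x$, the equivalence \eqref{eq:cdf_quantile_equiv} applied to $\hat\nu_i$ gives
\[
\mathbf 1\{F_{\hat\nu_i}^{-}(\alpha)\le x\}=\mathbf 1\{F_{\hat\nu_i}(x)\ge\alpha\}=1-\mathbf 1\{F_{\hat\nu_i}(x)<\alpha\}.
\]
Averaging over $i$ yields
\[
\frac1n\sum_{i=1}^n\mathbf 1\{F_{\hat\nu_i}^{-}(\alpha)\le x\}=1-\hat G_n^{\mathrm{emp}}(x,\alpha-),
\]
since the left limit of the empirical CDF $t\mapsto \frac1n\sum_i\mathbf 1\{F_{\hat\nu_i}(x)\le t\}$ at $\alpha$ is $\frac1n\sum_i \mathbf 1\{F_{\hat\nu_i}(x)<\alpha\}$.

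The second step checks that this agrees with the stepwise $\hat H_{n,\alpha}$ on all of $\mathbb R$, including the boundary conventions. Each $F_{\hat\nu_i}^{-}(\alpha)$ is an observed point, since $\hat\nu_i$ is supported on $\{X_{i,j}\}$. Hence the empirical CDF of these quantiles is a step function that jumps only at points of $\mathcal X$. On $[x_k,x_{k+1})$, every $F_{\hat\nu_i}(x)$ is constant and equal to $F_{\hat\nu_i}(x_k)$, so there it equals $1-\hat G_n^{\mathrm{emp}}(x_k,\alpha-)$, matching the middle case of the definition. For $x<x_1$, all $F_{\hat\nu_i}(x)=0<\alpha$, giving $0$. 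For $x\ge x_K$, all $F_{\hat\nu_i}(x)=1\ge\alpha$, giving $1$. This matches the conventions $\hat G_n(x_0,\cdot)=1$ and $\hat G_n(x_K,\cdot)=0$ at $t=\alpha-$.

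The final step computes the Stieltjes sum. It equals $\sum_k x_k$ times the jump of the empirical CDF at $x_k$, which is $\sum_k x_k\cdot\frac1n\#\{i:F_{\hat\nu_i}^{-}(\alpha)=x_k\}=\frac1n\sum_iF_{\hat\nu_i}^{-}(\alpha)$.

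The main obstacle is the left-limit subtlety. The displayed formula in the procedure uses $\hat G_n(x,\alpha)$ rather than $\hat G_n(x,\alpha-)$, and for the empirical estimator these differ whenever some $F_{\hat\nu_i}(x_k)=\alpha$ exactly, for instance when $m_i\alpha$ is an integer. I will therefore carry out the proof with the $\alpha-$ version, which is what the definition of $\hat H_{n,\alpha}$ specifies. I will then remark that with the literal $t=\alpha$ evaluation, the identity holds for all $\alpha\notin\{j/m_i\}$, and at those exceptional levels it corresponds to the right-continuous quantile convention.
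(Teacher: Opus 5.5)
Your proposal is correct and follows essentially the same route as the paper. Both use the equivalence $\mathbf 1\{F_{\hat\nu_i}^{-}(\alpha)\le x\}=\mathbf 1\{F_{\hat\nu_i}(x)\ge\alpha\}$ to recognize $1-\hat G_n^{\mathrm{emp}}(x,\alpha-)$ as the average of point-mass CDFs at the empirical quantiles, whose mean is $\hat q_\alpha^{\,\mathrm{emp}}$; the paper concludes by linearity of the Stieltjes integral, while you sum the jumps, and your explicit checks of the stepwise boundary conventions and of $\alpha$ versus $\alpha-$ are careful additions that the paper leaves implicit.
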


Proposition~\ref{prop:empirical_barycenter} shows that the empirical barycenter arises as a special case of the MCB estimator.
This characterization shows the advantage of our formulation over the empirical barycenter: by replacing the empirical histogram estimator of $G(x,\cdot)$ with alternative estimators within the MCB framework, one may obtain barycenter estimators with improved statistical properties. In particular, if one is willing to impose parametric structure on these mixing distributions, pointwise consistent and asymptotically normal estimators of the barycenter's quantile function can be obtained. 
More generally, even without parametric structure, the empirical histogram estimator is suboptimal for estimating binomial mixing distributions \parencite{vinayak_maximum_2019, tian_learning_2017}, suggesting that other estimators of $G(x,\cdot)$ within the MCB framework may have better performance.

\begin{remark}\label{remark:isotonic-regression}
For a generic binomial mixture estimator, $x \mapsto \hat H_{n,\alpha}(x)$ may fail to be monotone increasing, even though
$H_\alpha(x)$ is a distribution function on $\mathcal{I}$. This lack of guaranteed
monotonicity arises because the marginal mixing distributions
$G(x,\cdot)$ are estimated separately across values of $x$. In implementation, one may enforce monotonicity by replacing
$\hat H_{n,\alpha}$ with its projection onto the cone of nondecreasing
functions in a post-processing step using isotonic regression. General results on isotonic
correction of non-monotone estimators imply that, under suitable regularity
conditions, such projections are asymptotically equivalent to the
unprojected estimator \parencite{westling2020correcting}. Related monotonicity corrections have
also been used in Wasserstein estimation problems; see, e.g.,
\textcite{zhou_wasserstein_2024}. 
\end{remark}

\begin{remark}
In practice, it may be computationally infeasible to estimate mixing distributions at all unique points of $\mathcal X$. In that case, we can pre-specify a grid of points $x_0 < x_1 < \cdots < x_L$ in $\mathcal I$ and approximate the integral in $\hat q_{\alpha}^{\,\mathrm{MCB}}$ by a midpoint Riemann--Stieltjes sum. For each grid point $x_\ell$, $\ell=0,\dots,L$, we estimate the binomial mixing distribution $G(x_\ell,\cdot)$ and denote the estimate by $\hat G_n(x_\ell,\cdot)$. We then approximate $\hat q_{\alpha}^{\,\mathrm{MCB}}$ as
$$
\hat q_{\alpha}^{\,\mathrm{MCB,approx}}
:=
\sum_{\ell=1}^L
\bar x_\ell
\left\{
\hat G_n(x_{\ell-1},\alpha)-\hat G_n(x_\ell,\alpha)
\right\},
\qquad
\bar x_\ell := \frac{x_{\ell-1}+x_\ell}{2}.
$$
% As the grid is refined, this approximation approaches the estimator based on all unique points in $\mathcal X$.
\end{remark}

\section{The fixed-$x$ binomial mixture problem}\label{sec:binomial_mixtures}

In this section, we review the estimation of the mixing distribution $G(x,\cdot)$ for a fixed $x\in\mathcal I$.
This fixed-$x$ binomial mixture problem is the core building block of the MCB framework: once an estimator of $G(x,\cdot)$ is available for each $x$, it can be plugged into the construction in Section~\ref{sec:identification_estimation} to form an estimator of $q_\alpha^\star$. We first clarify when $G(x,\cdot)$ is identifiable under sparse sampling, and then outline potential estimators under structural restrictions.

Recall that, for fixed $x\in\mathcal I$, we defined $
U_i(x):=F_{\nu_i}(x)$ and
$Y_i(x):=\sum_{j=1}^{m_i}\mathbf 1\{X_{i,j}\le x\}.
$ Then, conditional on $m_i$ and $U_i(x)$, $Y_i(x)\mid m_i,U_i(x)\sim \operatorname{Bin}\bigl(m_i,U_i(x)\bigr)$,
and $U_i(x)\sim G(x,\cdot)$. Therefore, inference on $G(x,\cdot)$ reduces to a binomial mixture problem with conditional likelihood
\begin{equation}
L_n\bigl(G(x,\cdot)\bigr)
=
\prod_{i=1}^n
\int_0^1
\binom{m_i}{Y_i(x)}
t^{Y_i(x)}(1-t)^{m_i-Y_i(x)}
\,dG(x,t).
\label{eq:binmix_likelihood}
\end{equation}

\subsection{Identifiability}

A key difficulty in estimating the mixing distribution $G(x,\cdot)$ is that it is not identifiable from the observed data without additional restrictions \parencite{wood_binomial_1999}. 
This is because the observed data only provide information on the first $m_i$ moments of $G(x,\cdot)$ for each unit $i$, and if the $m_i$ are bounded, then only finitely many moments of $G(x,\cdot)$ are identified. This is formalized in the following proposition.
\begin{proposition}\label{prop:mixing_nonid}
Fix $x\in\mathcal I$ and suppose that $\mathbb{P}_{m \sim \eta}(m \le C)=1$ for some finite constant $C$. Then the observed distribution of $(Y_i(x),m_i)$ depends on $G(x,\cdot)$ only through its first $C$ moments. Consequently, $G(x,\cdot)$ is not identifiable over the class of all probability distributions on $[0,1]$.
\end{proposition}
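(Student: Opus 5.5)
The proof has two parts: express the observed law in terms of the moments of $G(x,\cdot)$, then build two distinct mixing laws that share those moments. Fix $x$ and write $G=G(x,\cdot)$. Since $m_i$ is independent of $\nu_i$, and hence of $U_i(x)=F_{\nu_i}(x)$, the joint law of $(Y_i(x),m_i)$ is
\[
\mathbb P\{Y_i(x)=y,\,m_i=m\}=\eta(m)\int_0^1\binom{m}{y}t^{y}(1-t)^{m-y}\,dG(t),\qquad 0\le y\le m .
\]
Expanding $(1-t)^{m-y}$ by the binomial theorem shows that $t^{y}(1-t)^{m-y}$ is a polynomial in $t$ of degree at most $m$. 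The integral is therefore a fixed linear combination, with coefficients depending only on $(m,y)$, of the moments $\mu_k(G)=\int_0^1 t^k\,dG(t)$ for $k=0,\dots,m$, where $\mu_0=1$.

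Because $\eta(m)=0$ for $m>C$, every point probability of $(Y_i(x),m_i)$ is determined by $\eta$ together with $\mu_1(G),\dots,\mu_C(G)$. This proves the first claim.

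For non-identifiability, it suffices to exhibit two distinct probability measures $G_1\neq G_2$ on $[0,1]$ with $\mu_k(G_1)=\mu_k(G_2)$ for $k=1,\dots,C$. I would construct them by a finite-dimensional linear algebra argument.

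1. Take $C+2$ distinct points $t_0,\dots,t_{C+1}\in(0,1)$ and let $G_1$ be the uniform discrete law on them, with weights $1/(C+2)$.
2. The linear map $w\mapsto\bigl(\sum_j w_j t_j^k\bigr)_{k=0}^{C}$ goes from $\mathbb R^{C+2}$ to $\mathbb R^{C+1}$, so it has a nontrivial kernel. Pick a nonzero kernel vector $d$.
3. Set $G_2$ to have weights $1/(C+2)+\varepsilon d_j$, with $\varepsilon>0$ small enough that all weights stay positive.
4. The $k=0$ row of the kernel condition forces $\sum_j d_j=0$, so $G_2$ is still a probability measure. The rows $k=1,\dots,C$ give it the same first $C$ moments as $G_1$, and $G_2\neq G_1$ because $d\neq0$.

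By the first part, $G_1$ and $G_2$ induce the same observed law of $(Y_i(x),m_i)$, so $G(x,\cdot)$ is not identifiable over all probability distributions on $[0,1]$.

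The only step needing care is making the perturbation both nonzero and weight-preserving. Including the $k=0$ row in the kernel condition handles both at once, and the dimension count ($C+2$ unknowns, $C+1$ equations) guarantees a nonzero solution. Everything else is routine.
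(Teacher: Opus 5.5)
Your proof is correct and follows the paper's argument. The paper also observes that each $\mathbb{P}(Y_i(x)=y\mid m_i=m)$ integrates a polynomial of degree at most $m\le C$ against $G(x,\cdot)$, so it depends only on the first $C$ moments. The only difference is in the non-identifiability step. The paper simply asserts that distinct distributions on $[0,1]$ can share their first $C$ moments, whereas you verify this with an explicit kernel-vector perturbation of a $(C+2)$-atom law. The paper's own construction of such a pair, in Lemma~\ref{lemma:truncated-moment-restriction}, instead perturbs a density by a polynomial signed measure, which it needs for the stronger requirement that the two CDFs differ at $\alpha-$.
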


Proposition~\ref{prop:mixing_nonid} clarifies where structural assumptions can enter to ensure barycenter identifiability. We showed in Theorem \ref{theorem:identification} that the barycenter depends on $\Pi$ only through the collection of mixing distributions $\{G(x,\cdot):x\in\mathcal I\}$. 
Proposition~\ref{prop:mixing_nonid} now shows that these mixing distributions are themselves not nonparametrically identifiable from the observed data. Therefore, if structural restrictions on $G(x,\cdot)$ can make the pointwise mixing distributions identifiable from the binomial data, then the barycenter is identifiable through Theorem~\ref{theorem:identification}.

\begin{remark}
    Structural assumptions are not the only way to address the non-identifiability of the mixing distributions. For example, if $\mathbb{P}_{m \sim \eta}(m \ge M) > 0$ the first $M$ moments of $G(x,\cdot)$ are identifiable. Hence, given the first $M$ moments, one can obtain bounds for $G(x,\alpha)$ at each $x$ and consequently bounds for $q_\alpha^\star$. We do not pursue this approach in the current paper, but it may be an interesting direction for future work.
\end{remark}

\subsection{Estimation under structural restrictions}\label{sec:structural-restrictions}

To address the lack of identifiability of the collection of mixing distributions, we restrict attention to estimation under parametric models for each $G(x, \cdot)$.
Specifically, we consider finite-dimensional families for the mixing distribution at every $x\in\mathcal I$. 
Let
$$
\mathcal G_x
=
\bigl\{
G(x,\cdot;\beta):\beta\in\mathcal B_x\subseteq\mathbb R^d
\bigr\}
$$
denote a parametric class of distribution functions on $[0,1]$. 
We assume that this model is correctly specified and identifiable in the sense that $G(x,\cdot)=G(x,\cdot;\beta_0(x))$ for a unique $\beta_0(x)\in\mathcal B_x$, for each $x\in\mathcal I$.
An estimator of $\beta_0(x)$ may then be obtained by parametric maximum likelihood estimation, and the resulting estimator $\hat \beta_n(x)$ induces an estimator $\hat G_n(x,\cdot) := G(x,\cdot;\hat\beta_n(x))$ of the mixing distribution. 
% If the map from $\beta$ to the first $C$ moments of $G(x,\cdot;\beta)$ is one-to-one on $\mathcal B_x$, then $\beta_0(x)$ is identifiable.

Below, we outline two possible choices for the parametric family $\mathcal G_x$.
% These structural assumptions trade modeling bias for identifiability and more stable estimation when unit sample sizes are limited.

\paragraph{Beta specification.}
Suppose that, for every $x \in \mathcal I$, $G(x,\cdot)$ is the CDF of a $\operatorname{Beta}(a_x,b_x)$ distribution for some $a_x>0$ and $b_x>0$. Then the induced model for the observed data at $x$ is a Beta-binomial model with probability mass function:
$$
\mathbb{P} \bigl(Y_i(x)=y\mid m_i,a_x,b_x\bigr)
=
\binom{m_i}{y}
\frac{B(y+a_x,m_i-y+b_x)}{B(a_x,b_x)},
\qquad y=0,\dots,m_i,
$$
where $B(\cdot,\cdot)$ denotes the Beta function. For every $x \in \mathcal I$, the parameters $(a_x,b_x)$ may be estimated by maximizing the corresponding likelihood over $(0,\infty)^2$, yielding estimators $\hat a_x$ and $\hat b_x$. The estimated mixing distribution $\hat G_n(x,\cdot)$ is then defined as the CDF of the $\operatorname{Beta}(\hat a_x,\hat b_x)$ distribution.

\paragraph{Spline-based classes.}
A more flexible alternative to the Beta specification models the mixing density $g(x,t):=\partial_t G(x,t)$ via a finite-dimensional basis. Let $Q(t)\in\mathbb{R}^p$ be a spline basis on $[0,1]$. The density at each $x \in \mathcal{I}$ can then be modeled as follows \parencite{efron_empirical_2016}:
\begin{equation}
g\bigl(x,t;\beta(x)\bigr)
=
\exp\{Q(t)^\top\beta(x)-\phi(\beta(x))\},
\qquad
\phi(\beta)=\log\int_0^1 \exp\{Q(u)^\top\beta\}\,du.
\label{eq:spline_mixing_density}
\end{equation}
For every $x \in \mathcal I$, the parameter $\beta(x)$ may be estimated by maximizing the corresponding likelihood over $\mathbb R^p$, yielding an estimator $\hat\beta_n(x)$ and inducing an estimated mixing distribution $\hat G_n(x,\cdot)$ with density $g(\cdot;\hat\beta_n(x))$. 

Although we impose a parametric model for each marginal mixing distribution
$G(x,\cdot)$, this does not require specifying a fully parametric model for the
law $\Pi$ on unit-level distributions. The restrictions are imposed only on the marginal laws of the random CDF values $F_\nu(x)$, separately
for each $x \in \mathcal I$. The remaining aspects of the population law $\Pi$,
including the dependence structure among $\{F_\nu(x):x\in\mathcal I\}$, are left
unrestricted as long as they induce these marginal laws. In this sense,
the resulting MCB estimator is semiparametric: it uses parametric structure on
the features of $\Pi$ needed to identify the barycenter, while avoiding a full
parametric specification of $\Pi$.

\begin{remark}[Connection to the Dirichlet process]
The Beta specification includes the one-dimensional marginal distributions induced by a Dirichlet process (DP) as a special case \parencite{ferguson1973bayesian}. If $\nu\sim\mathrm{DP}(\kappa,P_0)$, where $P_0$ is the DP base measure and $\kappa>0$ is the concentration parameter, then for each $x \in \mathcal{I}$,
\[
F_\nu(x)\sim\mathrm{Beta}\bigl(\kappa\,P_0(( -\infty,x]),\;\kappa\,(1-P_0(( -\infty,x]))\bigr).
\]
\end{remark}

\subsection{Other methods}

Several other estimators for binomial mixing distributions are available, including the nonparametric maximum likelihood estimator (NPMLE) \parencite{kiefer_consistency_1956, wood_binomial_1999}, moment-based estimators  \parencite{tian_learning_2017}, and kernel density-smoothed estimators \parencite{lee_learning_2025}. 
The NPMLE, which maximizes \eqref{eq:binmix_likelihood} over all probability measures on $[0,1]$, is particularly attractive because it has no tuning parameters. In the bounded $m_i$ setting, however, the NPMLE may not be unique and its asymptotic analysis is difficult. We reserve the theoretical study of these estimators for future work, although we examine the performance of the NPMLE in our simulation study. 
 
\section{Asymptotic properties} \label{sec:asymptotics}

In this section, we remain agnostic about the particular estimator for the mixing distributions and instead impose high-level conditions on $\hat G_n(x,\cdot)$ that are sufficient for consistency and asymptotic normality of the MCB estimator $\hat q_{\alpha}^{\,\mathrm{MCB}}$ for $q_\alpha^\star$.

\subsection{Consistency}

The MCB estimator is built from pointwise estimates of the marginal mixing distributions $G(x,\cdot)$, but the target $q_\alpha^\star$ depends on these estimates through an integral over $x$. Specifically, $\hat q_{\alpha}^{\,\mathrm{MCB}} = \int_{\mathcal I} x\,d\hat H_{n,\alpha}(x)$, where $\hat H_{n,\alpha}(x)=1-\hat G_n(x,\alpha-)$. Consequently, pointwise consistency of $\hat G_n(x,\alpha-)$ for each fixed $x$ does not by itself guarantee consistency of $\hat q_{\alpha}^{\,\mathrm{MCB}}$; additional conditions are needed to control how the estimation error behaves after integration. Theorem~\ref{theorem:consistency-fixed-m-asymptotics} provides sufficient conditions under which the MCB integral is consistent.
\begin{theorem}[Consistency of the MCB estimator at fixed $\alpha$]
\label{theorem:consistency-fixed-m-asymptotics}
Assume that:
\begin{enumerate}[label=(C\arabic*), leftmargin=4em]
    \item \label{cond:pointwise_consistency}  For every continuity point $x$ of $x \mapsto H_\alpha(x)$,
    $$
        \hat H_{n,\alpha}(x) = H_\alpha(x) + r_n(x),
        \qquad r_n(x) = o_P(1).
    $$
    \item \label{cond:distribution_function} With probability tending to one, $x \mapsto \hat H_{n,\alpha}(x)$ is a cumulative distribution function on $\mathcal I$.

    \item \label{cond:tail_mass} Additionally, at least one of the following holds:
    \begin{enumerate}
        \item the support is bounded, $\mathcal I = [a,b] \subset \mathbb R$ with $|a|,|b|<\infty$; or
        \item \label{cond:uniform_integrability} the sequence $(\hat H_{n,\alpha})_{n\ge1}$ is asymptotically uniformly integrable with respect to $f:x\mapsto |x|$, per Definition~\ref{def:asymptotic-uniform-integrability} in Appendix~\ref{appendix:weak-convergence-in-prob}.
    \end{enumerate}
\end{enumerate}

Then the MCB estimator is consistent:
$
    \hat q_{\alpha}^{\,\mathrm{MCB}}
    =
    q_\alpha^\star + o_P(1).
$
\end{theorem}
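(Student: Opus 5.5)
The plan is to view $\hat q_{\alpha}^{\,\mathrm{MCB}}$ as the mean of the random distribution with CDF $\hat H_{n,\alpha}$. Consistency then becomes a weak convergence in probability of $\hat H_{n,\alpha}$ to $H_\alpha$, upgraded to convergence of first moments. By Theorem~\ref{theorem:identification}, $H_\alpha$ is the CDF of $F_\nu^{-}(\alpha)$ under $\nu\sim\Pi$, and $q_\alpha^\star=\int x\,dH_\alpha(x)$; this mean is finite because $\nu\in\mathcal P_2(\mathcal I)$ gives $\mathbb E|F_\nu^-(\alpha)|<\infty$ under the standing integrability of $\Pi$.

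\emph{Step 1 (weak convergence in probability).} By \ref{cond:distribution_function} we work on the event $E_n$ where $\hat H_{n,\alpha}$ is a CDF; $\mathbb P(E_n)\to1$, so events of vanishing probability do not affect $o_P(1)$ statements. The continuity points of $H_\alpha$ are dense, so we pick a countable dense set $D$ of them. Condition \ref{cond:pointwise_consistency} gives $\hat H_{n,\alpha}(x)\to H_\alpha(x)$ in probability for each $x\in D$. A diagonal subsequence argument then shows that every subsequence has a further subsequence along which, almost surely, $\hat H_{n,\alpha}(x)\to H_\alpha(x)$ for all $x\in D$ simultaneously. Monotonicity on $E_n$ extends this to all continuity points of $H_\alpha$, so along that sub-subsequence $\hat H_{n,\alpha}\Rightarrow H_\alpha$ almost surely. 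This is the usual characterization of weak convergence in probability, presumably the notion in Appendix~\ref{appendix:weak-convergence-in-prob}.

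\emph{Step 2 (moments).} We work along the same sub-subsequence and fix a realization where the weak convergence holds.
\begin{itemize}
\item Under (a), $x\mapsto x$ is bounded and continuous on $[a,b]$, so $\int x\,d\hat H_{n,\alpha}\to\int x\,dH_\alpha$ directly by the portmanteau theorem.
\item Under (b), we truncate. Fix $M$ to be a continuity point of $H_\alpha$, with $-M$ also a continuity point, and split
\[
\int x\,d\hat H_{n,\alpha}=\int \phi_M(x)\,d\hat H_{n,\alpha}+\int (x-\phi_M(x))\,d\hat H_{n,\alpha},
\]
where $\phi_M(x)=\max(-M,\min(x,M))$ is bounded and continuous. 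The first term converges to $\int\phi_M\,dH_\alpha$. The remainder is bounded by $\int_{|x|>M}|x|\,d\hat H_{n,\alpha}$. Asymptotic uniform integrability, which I expect to read as $\lim_M\limsup_n \mathbb P\bigl(\int_{|x|>M}|x|\,d\hat H_{n,\alpha}>\varepsilon\bigr)=0$, makes this remainder small in probability uniformly in large $n$. The analogous tail of $H_\alpha$ vanishes by integrability of $F_\nu^-(\alpha)$.
\end{itemize}
We combine the two pieces with a standard $\varepsilon/3$ argument and let $M\to\infty$.

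\emph{Main obstacle.} Step 2(b) is the delicate part, because uniform integrability holds only in probability, not almost surely. Consequently the sub-subsequence trick cannot be applied to the tail term naively. To avoid this, I would run the whole argument directly in probability rather than along subsequences. First, I would show $\int\phi_M\,d\hat H_{n,\alpha}\to\int\phi_M\,dH_\alpha$ in probability. This follows from Step 1 by the subsequence criterion, since it involves only a bounded continuous functional; alternatively, integration by parts turns it into $\int_{-M}^M(\hat H_{n,\alpha}-H_\alpha)\,dx$, which is handled by dominated convergence together with pointwise convergence in probability. Second, I would bound $\mathbb P\bigl(|\hat q_{\alpha}^{\,\mathrm{MCB}}-q_\alpha^\star|>3\varepsilon\bigr)$ by the sum of three probabilities: the truncated difference, the estimator's tail, and the deterministic tail of $H_\alpha$. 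Taking $\limsup_n$ and then $M\to\infty$ yields the claim.
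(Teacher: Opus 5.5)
Your proposal is correct and follows essentially the same route as the paper. Step~1 is the paper's Lemma~\ref{lemma:from-pointwise-to-weak} (countable dense set of continuity points, diagonal sub-subsequence, monotonicity, portmanteau), case (a) is handled identically, and case (b) uses the same truncation and three-term $\varepsilon$-split as Lemma~\ref{lemma:convergence-of-moments}.

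The one difference is in how uniform integrability is formalized. Definition~\ref{def:asymptotic-uniform-integrability} is itself stated almost surely along sub-subsequences, so the paper runs the tail argument pathwise along a sub-subsequence on which both weak convergence and uniform integrability hold, and the ``obstacle'' you flag does not arise. Your in-probability formulation is implied by that definition via a short diagonal argument, and it lets you avoid subsequences in Step~2.

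A minor correction: finiteness of $q_\alpha^\star$ comes from the integrability of $\Pi$ implicit in the barycenter being well defined, not from $\nu\in\mathcal P_2(\mathcal I)$ alone.
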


The conditions in Theorem~\ref{theorem:consistency-fixed-m-asymptotics} are mild. Condition \ref{cond:pointwise_consistency} is simply pointwise consistency of the estimated CDF at each $x$, condition \ref{cond:distribution_function} can be enforced directly by projecting onto the cone of distribution functions (i.e., isotonic regression), and condition \ref{cond:tail_mass} is automatic when $\mathcal I$ is bounded. When $\mathcal I$ is unbounded, condition \ref{cond:uniform_integrability} prevents mass in the tails from dominating the integral functional but may be more difficult to verify.

While our main interest is in pointwise inference for $q_\alpha^\star$, the same argument also yields convergence in $\mathcal{P}_2(\mathcal I)$ with respect to the 2-Wasserstein metric under additional regularity conditions.

\begin{corollary}\label{corollary:consistency-in-L2}
Assume that, with probability tending to one, $t \mapsto \hat q_t^{\,\mathrm{MCB}}$ is a quantile function, and let $\hat\nu_n$ denote the corresponding probability measure. If the assumptions of Theorem~\ref{theorem:consistency-fixed-m-asymptotics} hold for every continuity point of the true barycenter quantile function $t \mapsto F_{\bar\nu_\Pi}^{-}(t)$, then
$$
d_{W_2}\!\left(\hat\nu_n,\bar\nu_\Pi\right)\overset{P}{\to}0,
$$
if $\mathcal{I}$ is bounded or if $\hat \nu_n$ is asymptotically uniformly integrable with respect to $f(x)=x^2$.
\end{corollary}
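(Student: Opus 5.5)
The plan is to write the squared Wasserstein distance as an $L^2(0,1)$ norm of the difference of quantile functions,
$$
d_{W_2}^2(\hat\nu_n,\bar\nu_\Pi)=\int_0^1\bigl(\hat q_t^{\,\mathrm{MCB}}-q_t^\star\bigr)^2\,dt,
$$
and to prove that this integral tends to zero in probability. This requires two ingredients. The first is pointwise convergence of the integrand, which Theorem~\ref{theorem:consistency-fixed-m-asymptotics} already supplies. The second is a domination or uniform-integrability argument that lets us pass the limit through the $dt$-integral, now in probability.

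\textbf{Step 1 (pointwise convergence).} By hypothesis, the conditions of Theorem~\ref{theorem:consistency-fixed-m-asymptotics} hold at every continuity point $t$ of $t\mapsto q_t^\star=F^-_{\bar\nu_\Pi}(t)$. Hence $\hat q_t^{\,\mathrm{MCB}}\to q_t^\star$ in probability at each such $t$. A quantile function is monotone, so it has at most countably many discontinuities. Pointwise convergence therefore holds for Lebesgue-a.e.\ $t\in(0,1)$.

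\textbf{Step 2 (from pointwise to weak convergence).} On the event where $t\mapsto\hat q_t^{\,\mathrm{MCB}}$ is a quantile function, whose probability tends to one, I use the standard fact that monotone functions converging at a.e.\ point correspond to weakly convergent measures. Concretely, $\hat\nu_n\Rightarrow\bar\nu_\Pi$ in probability, in the sense of the appendix notion of weak convergence in probability. To obtain this I would pass to subsequences: convergence in probability at a countable dense set of continuity points yields a further subsequence converging almost surely at all of them simultaneously, by a diagonal argument. Monotonicity then gives almost sure convergence at every continuity point. Along that sub-subsequence, $\hat q^{\,\mathrm{MCB}}_t\to q^\star_t$ for a.e.\ $t$, almost surely.

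\textbf{Step 3 (upgrading to $W_2$, the main obstacle).} It remains to show $\int_0^1(\hat q_t-q_t^\star)^2dt\to0$ along the sub-subsequence. The standard characterisation is that $W_2$ convergence is equivalent to weak convergence together with convergence of second moments, or equivalently uniform integrability of $x^2$. The two cases are handled as follows.

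\begin{itemize}
\item \emph{Bounded $\mathcal I=[a,b]$.} Both quantile functions take values in $[a,b]$, so the integrand is bounded by $(b-a)^2$. Dominated convergence then gives the almost sure limit along the sub-subsequence.
\item \emph{Unbounded $\mathcal I$.} Here I would use the assumed asymptotic uniform integrability of $\hat\nu_n$ with respect to $x^2$. I would split the $dt$-integral into the region where $|\hat q_t|\le M$ and $|q_t^\star|\le M$, where the integrand is bounded and dominated convergence applies, and its complement. The complement is controlled by $\int x^2\mathbf 1\{|x|>M\}\,d\hat\nu_n$, which is small in probability for large $M$ by uniform integrability, and by the corresponding tail of $\bar\nu_\Pi\in\mathcal P_2$.
\end{itemize}

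The delicate part is making the probability-mode bookkeeping rigorous: combining the subsequence argument with the "asymptotic" uniform integrability (which holds only in probability, with $\limsup_n$) while conditioning on the events from (C2).

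\textbf{Conclusion.} Because every subsequence has a further subsequence along which $d_{W_2}(\hat\nu_n,\bar\nu_\Pi)\to0$ almost surely (up to events of vanishing probability), the subsequence characterisation of convergence in probability yields $d_{W_2}(\hat\nu_n,\bar\nu_\Pi)\overset{P}{\to}0$. This mirrors the structure of the proof of Theorem~\ref{theorem:consistency-fixed-m-asymptotics}, now applied to $x^2$ on the quantile scale rather than to $|x|$ on the CDF scale.
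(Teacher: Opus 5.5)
Your proposal is correct. Its first half matches the paper: you take pointwise consistency at continuity points from Theorem~\ref{theorem:consistency-fixed-m-asymptotics}, then use a diagonal argument over a countable dense set plus monotonicity to get almost sure convergence at every continuity point along a sub-subsequence, which is the content of Lemma~\ref{lemma:pointwise-to-weak-quantile}.

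You diverge from the paper in the upgrade to $W_2$. The paper stays on the measure side. It records weak convergence in probability and obtains convergence of second moments, either because $x^2$ is bounded continuous on a bounded $\mathcal I$ or via Lemma~\ref{lemma:convergence-of-moments} under asymptotic uniform integrability. It then invokes Villani's Theorem~6.9 (weak convergence plus convergence of second moments is equivalent to $W_2$ convergence) along sub-subsequences.

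You instead use the fact that $d_{W_2}$ is, by definition, the $L^2(0,1)$ distance between quantile functions. You prove that $L^2$ convergence directly, by dominated convergence in the bounded case and by a uniform-integrability argument on $((0,1),dt)$ otherwise. The identity $\int_0^1 g(\hat q_t^{\,\mathrm{MCB}})\,dt=\int g\,d\hat\nu_n$ transfers the assumed integrability of $x^2$ under $\hat\nu_n$ to the quantile scale. In effect you reprove the one-dimensional case of Villani's theorem via Vitali's convergence theorem.

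Your route is more elementary and self-contained: you never need the language of weak convergence, only a.e.\ convergence of the quantile functions. The paper's route is more modular, reusing the same lemmas as the proof of Theorem~\ref{theorem:consistency-fixed-m-asymptotics} plus an off-the-shelf characterization.

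One detail to tighten in your unbounded case. Splitting off the region where $|\hat q_t^{\,\mathrm{MCB}}|\le M$ and $|q_t^\star|\le M$ leaves cross terms such as $\int_0^1 (\hat q_t^{\,\mathrm{MCB}})^2\mathbf 1\{|q_t^\star|>M\}\,dt$. These are not literally tails of $\hat\nu_n$ or $\bar\nu_\Pi$. The cleanest fix is to note that $(\hat q_t^{\,\mathrm{MCB}}-q_t^\star)^2\le 2(\hat q_t^{\,\mathrm{MCB}})^2+2(q_t^\star)^2$. The right-hand side is asymptotically uniformly integrable on $((0,1),dt)$ along your sub-subsequence, so Vitali's theorem applies directly.
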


% The assumptions in Theorem~\ref{theorem:consistency-fixed-m-asymptotics} and Corollary~\ref{corollary:consistency-in-L2} are also compatible with an increasing-$m$ asymptotic argument, where the per-unit sample sizes grow with $n$. In that regime, pointwise consistency may follow under minimal (or no) structural restrictions on the mixing distributions.

\subsection{Asymptotic linearity}

We next study the asymptotic distribution of $\hat q_{\alpha}^{\,\mathrm{MCB}}$. Theorem~\ref{theorem:asymptotic-distribution} states the conditions under which pointwise asymptotic linearity of $\hat H_{n,\alpha}(x)$ for each fixed $x\in\mathcal I$ implies asymptotic linearity of the integrated estimator $\hat q_{\alpha}^{\,\mathrm{MCB}}$.

\begin{theorem}[Asymptotic linearity of the MCB estimator]
\label{theorem:asymptotic-distribution}
Assume that:
\begin{enumerate}[label=(AL\arabic*), ref=AL\arabic*, leftmargin=4em]
    \item \label{cond:AL-linear-expansion} The estimator $\hat H_{n,\alpha}$ admits the linear expansion
    $$
        \sqrt n\bigl(\hat H_{n,\alpha}(x)-H_\alpha(x)\bigr)
        =
        \frac{1}{\sqrt n}\sum_{i=1}^n \psi_\alpha(O_i;x) + r_{\alpha, n}(x),
    $$
    where, for each fixed $x$, the map $O\mapsto\psi_\alpha(O;x)$ has mean zero $\mathbb{E}[\psi_\alpha(O;x)]=0$ and $\mathbb{E}[\int_{\mathcal I} |\psi_\alpha(O;x)|\,dx] < \infty$, and the remainder satisfies $\int_{\mathcal I} |r_{\alpha, n}(x)|\,dx = o_P(1)$.

    \item \label{cond:AL-cdf-integrability} With probability tending to one, $x\mapsto\hat H_{n,\alpha}(x)$ is a cumulative distribution function on $\mathcal I$ satisfying
    $$
        \int_{\mathcal I} |x|\,d\hat H_{n,\alpha}(x) < \infty.
    $$

    \item \label{cond:AL-integrated-if} Define the integrated influence function
    $$
        \tilde\psi_\alpha(O)
        :=
        -\int_{\mathcal I} \psi_\alpha(O;x)\,dx,
    $$
    and assume that $\operatorname{Var}\bigl(\tilde\psi_\alpha(O)\bigr) =: \sigma_\alpha^2<\infty$.
\end{enumerate}

Then
$$
\sqrt n\bigl(\hat q_{\alpha}^{\,\mathrm{MCB}}-q_\alpha^\star\bigr)
    =
    \frac{1}{\sqrt n}\sum_{i=1}^n \tilde\psi_\alpha(O_i) + o_P(1)
    \overset{d}{\to}
    \mathcal N(0,\sigma_\alpha^2).
$$
\end{theorem}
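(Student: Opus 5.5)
The plan is to convert both $\hat q_{\alpha}^{\,\mathrm{MCB}}$ and $q_\alpha^\star$ into integrals of distribution functions via the layer-cake (integration-by-parts) identity. Then the difference becomes an integral of $\hat H_{n,\alpha}-H_\alpha$ over $x$, and \ref{cond:AL-linear-expansion} can be integrated term by term. For any CDF $F$ on $\mathcal I$ with $\int|x|\,dF<\infty$,
$$
\int x\,dF(x)=\int_0^\infty \{1-F(x)\}\,dx-\int_{-\infty}^0 F(x)\,dx,
$$
with $F\equiv 0$ below $\mathcal I$ and $F\equiv 1$ above it. This holds for $\hat H_{n,\alpha}$ on the event in \ref{cond:AL-cdf-integrability}, whose probability tends to one. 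It also holds for $H_\alpha$, which by Theorem~\ref{theorem:identification} is the CDF of $F_\nu^-(\alpha)$ and has a finite first moment because $\nu\in\mathcal P_2$. I will assume $\mathbb E|F_\nu^-(\alpha)|<\infty$, which is needed for $q_\alpha^\star$ to be defined. Subtracting the two identities on that event gives
$$
\hat q_{\alpha}^{\,\mathrm{MCB}}-q_\alpha^\star=-\int_{\mathcal I}\bigl\{\hat H_{n,\alpha}(x)-H_\alpha(x)\bigr\}\,dx .
$$
Both tail contributions are absolutely integrable, so the two pieces combine legitimately.

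Next, multiply by $\sqrt n$ and substitute the expansion from \ref{cond:AL-linear-expansion}:
$$
\sqrt n(\hat q_{\alpha}^{\,\mathrm{MCB}}-q_\alpha^\star)=-\int_{\mathcal I}\frac1{\sqrt n}\sum_{i=1}^n\psi_\alpha(O_i;x)\,dx-\int_{\mathcal I} r_{\alpha,n}(x)\,dx .
$$
The remainder term is bounded in absolute value by $\int|r_{\alpha,n}|\,dx=o_P(1)$.

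For the leading term, the finite sum and the integral can be exchanged. Each $\int|\psi_\alpha(O_i;x)|\,dx$ is finite almost surely, since its expectation is finite. The leading term therefore equals $n^{-1/2}\sum_i\tilde\psi_\alpha(O_i)$.

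It remains to check that $\tilde\psi_\alpha$ has mean zero. Here I apply Fubini, which is justified by $\mathbb E\int|\psi_\alpha(O;x)|\,dx<\infty$:
$$
\mathbb E\tilde\psi_\alpha(O)=-\int_{\mathcal I}\mathbb E\psi_\alpha(O;x)\,dx=0 .
$$
The $O_i$ are i.i.d., and \ref{cond:AL-integrated-if} gives finite variance $\sigma_\alpha^2$. The classical CLT then yields $n^{-1/2}\sum_i\tilde\psi_\alpha(O_i)\overset d\to\mathcal N(0,\sigma_\alpha^2)$, and Slutsky's lemma absorbs the $o_P(1)$ terms.

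A small technicality is that the identity holds only on events of probability tending to one. Since a quantity that equals an $o_P(1)$ term on such events is itself $o_P(1)$, the asymptotic linear representation holds unconditionally.

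The main obstacle is the first step: justifying integration by parts and combining the two tails when $\mathcal I$ is unbounded. This requires that $x\{1-\hat H_{n,\alpha}(x)\}\to0$ and $x\hat H_{n,\alpha}(x)\to0$ at $\pm\infty$. That follows from $\int|x|\,d\hat H_{n,\alpha}<\infty$ by the standard tail bound $|x|\{1-F(x)\}\le\int_{(x,\infty)}|u|\,dF(u)$, and analogously for $H_\alpha$. The rest is Fubini and the CLT.
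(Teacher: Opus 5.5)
Your proposal is correct and follows essentially the same route as the paper's proof. Both write the two first moments via the tail-integral identity $\int x\,dF=\int_0^\infty\{1-F(x)\}\,dx-\int_{-\infty}^0F(x)\,dx$, substitute the expansion in (AL1), remove the remainder using $\int|r_{\alpha,n}|\,dx=o_P(1)$, apply Fubini to get $\mathbb E\tilde\psi_\alpha(O)=0$, and finish with the CLT and Slutsky.

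One minor correction: $\nu\in\mathcal P_2(\mathcal I)$ for each unit does not by itself give $\mathbb E_{\nu\sim\Pi}|F_\nu^-(\alpha)|<\infty$ (take $\nu=\delta_Z$ with $Z$ Cauchy). The integrability of $H_\alpha$ that you and the paper both need must therefore come from the explicit assumption you add, or equivalently from finiteness of the barycenter objective, not from $\mathcal P_2$.
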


The hardest-to-verify assumption is the remainder condition $\int_{\mathcal I} |r_{\alpha, n}(x)|\,dx = o_P(1)$,
which ensures that the pointwise expansion remains valid after integration over $x$. Pointwise negligibility of $r_n(x)$ for each fixed $x$ is not enough on its own, since small errors can accumulate when integrated over the whole domain. This condition is automatically satisfied when all distributions are supported on a common finite set of points, since in that case the integral reduces to a finite sum of pointwise remainder terms. Moreover, if the interval $\mathcal{I}$ is bounded, this assumption can be replaced with the uniform asymptotic linearity condition that $\sup_{x \in \mathcal{I}} |r_n(x)| = o_P(1)$.

The asymptotic variance $\sigma_\alpha^2$ can be estimated consistently by $\hat\sigma_{n,\alpha}^2 := \frac{1}{n}\sum_{i=1}^n \hat{\tilde\psi}_\alpha( O_i)^2$ under standard conditions on the estimator $\hat{\tilde\psi}_\alpha$ of $\tilde\psi_\alpha$. This yields asymptotically valid $(1-\gamma)$ Wald confidence intervals for $q_\alpha^\star$ of the form
$$
    \left[
        \hat q_{\alpha}^{\,\mathrm{MCB}}
        \;\pm\;
        z_{1-\gamma/2}\frac{\hat\sigma_{n,\alpha}}{\sqrt n}
    \right],
$$
where $z_{1-\gamma/2}$ is the $(1-\gamma/2)$-quantile of the standard normal distribution.

In practice, direct estimation of $\tilde\psi_\alpha$ requires first estimating the pointwise influence function $\psi_\alpha(\,\cdot\,;x)$ and then integrating:
$$
\hat{\tilde\psi}_\alpha = -\int_{\mathcal I} \hat\psi_\alpha(\,\cdot\,;x)\,dx.
$$
Even when $\hat\psi_\alpha(\,\cdot\,;x)$ has a closed form, this integral is typically evaluated as a large discrete sum, since $\hat\psi_\alpha(\,\cdot\,;x)$ changes at the observed points $x \in \mathcal X$. The bootstrap avoids this additional calculation and may therefore be more practical, especially when the fixed-$x$ mixture estimator is computationally efficient.

Under stronger uniformity conditions, the pointwise asymptotic linearity result extends to weak convergence of the estimated barycenter quantile process to a Gaussian process.

\begin{corollary}[Weak convergence to a Gaussian process]\label{corollary:uniform-convergence}
Suppose that the conditions of Theorem~\ref{theorem:asymptotic-distribution} hold for all $\alpha$ in a compact subset $\mathcal{A} = [\alpha_{\min},\alpha_{\max}]\subset(0,1)$ and that the following conditions hold:
\begin{itemize}
    \item[(i)] the probability (tending to one) statements from Theorem~\ref{theorem:asymptotic-distribution} hold uniformly over $\alpha \in \mathcal{A}$,
    \item[(ii)] $\sup_{\alpha \in \mathcal{A}}\int_{\mathcal I}|r_{\alpha,n}(x)|\,dx = o_P(1)$, and
    \item[(iii)] the class $\{\tilde\psi_\alpha:\alpha\in \mathcal{A}\}$ is $P$-Donsker.
\end{itemize}
Then the stochastic process
$$
\sqrt{n}\bigl(\hat q_\alpha^{\,\mathrm{MCB}} - q_\alpha^\star\bigr)_{\alpha\in \mathcal{A}}
$$
converges weakly in $\ell^\infty(\mathcal{A})$ to a centered Gaussian process with covariance function
$$
\Gamma(\alpha,\alpha') := \mathbb E\left[\tilde\psi_\alpha(O)\,\tilde\psi_{\alpha'}(O)\right].
$$
\end{corollary}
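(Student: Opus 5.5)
The plan is the standard route to functional CLTs: an asymptotically linear expansion of the process, uniform in $\alpha$, followed by the Donsker property of the leading term.

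\textbf{Step 1: uniform linear representation.} I would first show that
$$
\sup_{\alpha\in\mathcal A}\Bigl|\sqrt n\bigl(\hat q_\alpha^{\,\mathrm{MCB}}-q_\alpha^\star\bigr)-\frac{1}{\sqrt n}\sum_{i=1}^n\tilde\psi_\alpha(O_i)\Bigr| = o_P(1).
\eqref{eq:unif}
$$
The argument is the one from Theorem~\ref{theorem:asymptotic-distribution}, made uniform in $\alpha$. On the event where $\hat H_{n,\alpha}$ is a CDF with a finite first moment for all $\alpha\in\mathcal A$, integrate by parts. By (i) this event has probability tending to one uniformly in $\alpha$.

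The integration by parts gives
$$
\hat q_\alpha^{\,\mathrm{MCB}}-q_\alpha^\star = -\int_{\mathcal I}\bigl(\hat H_{n,\alpha}(x)-H_\alpha(x)\bigr)\,dx.
$$
This is the usual identity $\int x\,dF = \int_0^\infty(1-F)\,dx-\int_{-\infty}^0 F\,dx$, applied to both distributions and subtracted. I would justify it exactly as in the proof of Theorem~\ref{theorem:asymptotic-distribution}, using the finite first moments.

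Next, substitute the linear expansion (AL1) and interchange the sum over $i$ with the $x$-integral. This interchange is legitimate by Fubini, since $\mathbb E\int|\psi_\alpha(O;x)|\,dx<\infty$. The result is
$$
\sqrt n\bigl(\hat q_\alpha^{\,\mathrm{MCB}}-q_\alpha^\star\bigr)=\frac1{\sqrt n}\sum_{i=1}^n\tilde\psi_\alpha(O_i)-\int_{\mathcal I}r_{\alpha,n}(x)\,dx.
$$
The remainder is bounded by $\sup_{\alpha}\int_{\mathcal I}|r_{\alpha,n}(x)|\,dx$, which is $o_P(1)$ by (ii). Off the good event, the discrepancy is nonzero only with probability tending to zero, so it does not affect convergence in outer probability. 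This establishes \eqref{eq:unif}.

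\textbf{Step 2: Donsker limit.} By (iii), the empirical process $\mathbb G_n\tilde\psi_\alpha=n^{-1/2}\sum_i\bigl(\tilde\psi_\alpha(O_i)-\mathbb E\tilde\psi_\alpha\bigr)$ converges weakly in $\ell^\infty(\mathcal A)$ to a tight centered Gaussian process. Its covariance is $\operatorname{Cov}(\tilde\psi_\alpha,\tilde\psi_{\alpha'})$.

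The centering term vanishes. By Fubini, $\mathbb E\tilde\psi_\alpha(O)=-\int_{\mathcal I}\mathbb E\psi_\alpha(O;x)\,dx=0$, because each $\psi_\alpha(\cdot;x)$ has mean zero. Consequently the covariance equals $\Gamma(\alpha,\alpha')=\mathbb E[\tilde\psi_\alpha\tilde\psi_{\alpha'}]$, as claimed.

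\textbf{Step 3: conclusion.} By Slutsky's lemma in $\ell^\infty(\mathcal A)$ (van der Vaart and Wellner, Lemma 1.10.2), a process within $o_P(1)$ in sup-norm of a weakly convergent process has the same limit. Combined with \eqref{eq:unif}, this gives the stated result.

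\textbf{Main obstacle.} The delicate point is Step 1, because three things must hold on a single event for all $\alpha$ simultaneously: the integration-by-parts identity, the Fubini interchange, and the first-moment finiteness. Condition (i) is what makes this possible; without it, only pointwise-in-$\alpha$ events would be available. There is also a measurability subtlety for the supremum, which I would handle with outer probabilities.
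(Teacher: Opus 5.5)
Your proposal is correct and follows essentially the same route as the paper. The paper's proof also has three steps: a uniform-in-$\alpha$ asymptotically linear representation with remainder $-\int_{\mathcal I} r_{\alpha,n}(x)\,dx$ controlled by (ii); weak convergence of $\mathbb G_n\tilde\psi_\alpha$ from the Donsker condition (iii), with the covariance reduced to $\Gamma$ via $\mathbb E\tilde\psi_\alpha(O)=0$; and Slutsky's lemma in $\ell^\infty(\mathcal A)$. The only difference is presentation: you re-derive the integration-by-parts identity and the sum--integral interchange explicitly on a single high-probability event holding for all $\alpha$, whereas the paper simply invokes Theorem~\ref{theorem:asymptotic-distribution} uniformly through condition (i).
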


Corollary~\ref{corollary:uniform-convergence} provides the basis for
simultaneous confidence bands over
$\mathcal A$. In particular, if $Z$ denotes
the centered Gaussian process with covariance function $\Gamma$ from
Corollary~\ref{corollary:uniform-convergence}, and
$\sigma_\alpha^2=\Gamma(\alpha,\alpha)$, then the limiting random variable
governing the maximal standardized error is $T=
\sup_{\alpha\in\mathcal A}
\left|
\frac{Z(\alpha)}{\sigma_\alpha}
\right|$. Here we assume $\inf_{\alpha\in\mathcal A} \sigma_\alpha > 0$ to avoid degeneracy.
Let $c_{1-\gamma}$ denote the $(1-\gamma)$ quantile of the distribution of
$T$. Then, if $\sup_{\alpha\in\mathcal A} |\sigma_\alpha - \hat\sigma_{n,\alpha}| = o_P(1)$, an asymptotic simultaneous $(1-\gamma)$ confidence band is
$$
\left[
\hat q_{\alpha}^{\,\mathrm{MCB}}
\pm
c_{1-\gamma}\frac{\hat\sigma_{n,\alpha}}{\sqrt n}
\right],
\qquad \alpha\in\mathcal A.
$$
In practice, $c_{1-\gamma}$ can be approximated using either a multiplier
bootstrap based on the estimated integrated influence functions
$\hat{\tilde\psi}_\alpha(O_i)$, or a nonparametric bootstrap that resamples
the observational units $O_i$ with replacement and recomputes
$\hat q_{\alpha}^{\,\mathrm{MCB}}$ over a fine grid of quantile levels.

\begin{example}[Beta specification]
For the Beta specification, in which each mixing distribution $G(x,\cdot)$ is modeled by a Beta distribution, we translate the high-level conditions of Theorem~\ref{theorem:asymptotic-distribution} into lower-level sufficient conditions in Appendix~\ref{appendix:parametric-binomial-mixtures}. Under these conditions, the MCB estimator with Beta mixing distributions is asymptotically normal at each fixed $\alpha$.
\end{example}

\section{Extensions}\label{sec:extensions}

The MCB construction extends beyond the estimation of the barycenter quantile $q_\alpha^\star$. In particular, it can be adapted to weighted barycenters and other functionals of the distribution of $\alpha$-quantiles $F^-_{\nu}(\alpha)$. We discuss these extensions briefly and leave a detailed treatment of their asymptotic properties to future work.

\subsection{Weighted barycenters}

Suppose that we observe a covariate $Z_i \in \mathcal Z$ for each unit, and that $(\nu_i,Z_i)\overset{\mathrm{iid}}{\sim}\Pi_Z$, where $\Pi_Z$ is a probability measure on $\mathcal P_2(\mathcal I)\times\mathcal Z$ whose marginal law for $\nu_i$ is $\Pi$. Further let $w:\mathcal Z\to[0,\infty)$ be a measurable weight function satisfying $\mathbb E\{w(Z)\}=1$. 
We can then define the weighted barycenter quantile by
\begin{equation}\label{eq:reweighted-target}
q_\alpha^{\star,w}
:=
\mathbb E\!\left[w(Z)\,F_\nu^{-}(\alpha)\right].
\end{equation}
Equivalently, $q_\alpha^{\star,w}$ is the barycenter under the probability measure $\Pi^w$, defined as the marginal law for $\nu$ under $\Pi_Z^w$ with $d\Pi_Z^w(\nu,z):=w(z)\,d\Pi_Z(\nu,z)$. Hence, we can also write $q_\alpha^{\star,w}=\mathbb E_{\nu \sim \Pi^w}\!\left[F_\nu^{-}(\alpha)\right]$.
Next, we explain how the MCB estimator can be adapted to estimate $q_\alpha^{\star,w}$.

The same intermediate objects as in Section \ref{sec:identification_estimation} can be defined under the probability measure $\Pi^w$. In particular, for each $x\in\mathcal I$, the random variable $F_\nu(x)$ has the following distribution function under $\Pi^w$:
\begin{equation}\label{eq:reweighted-G}
G^w(x,t)
:=
\mathbb{P}_{\nu \sim \Pi^w}\!\bigl(F_\nu(x)\le t\bigr),
\qquad t\in[0,1].
\end{equation}
Similarly, letting $H_\alpha^w(x):=1-G^w(x,\alpha-)$, the same argument as in
Section~\ref{sec:identification_estimation} yields the representation
\begin{equation}\label{eq:reweighted-representation}
q_\alpha^{\star,w}
=
\int_{\mathcal I} x\,dH_\alpha^w(x)
=
-\int_{\mathcal I} x\,dG^w(x,\alpha-).
\end{equation}

The MCB construction extends directly to this setting by replacing the unweighted binomial-mixture problem at each $x$ with a weighted version.
Let $w_i:=w(Z_i)$ and recall that $Y_i(x):=\sum_{j=1}^{m_i}\mathbbm{1}\{X_{i,j}\le x\}$.
We can estimate $G^w(x,\cdot)$ using the weighted conditional likelihood
\begin{equation}\label{eq:weighted-binmix-likelihood}
L_n\bigl(G^w(x,\cdot)\bigr)
=
\prod_{i=1}^n
\left\{
\int_0^1
\binom{m_i}{Y_i(x)}
t^{Y_i(x)}(1-t)^{m_i-Y_i(x)}
\,dG^w(x,t)
\right\}^{w_i}.
\end{equation}
Since multiplying all weights by a common constant does not change the maximizer of \eqref{eq:weighted-binmix-likelihood}, one may equivalently use normalized sample weights.
Analogous consistency and asymptotic normality results can be obtained by applying Theorems~\ref{theorem:consistency-fixed-m-asymptotics} and~\ref{theorem:asymptotic-distribution} to the weighted estimators, with $G$ and $H_\alpha$ replaced by $G^w$ and $H_\alpha^w$. Verifying these conditions will generally require additional assumptions on the weights.

Below, we give two examples of weighted barycenters that may be of interest in applications. 

\begin{example}[Conditional barycenter]\label{sec:conditional_bary}
Let $Z \in \mathcal{Z} \subset \mathbb{R}$ be a continuous covariate, and let
\begin{equation}\label{eq:conditional-target}
q_\alpha^\star(z)
:=
\mathbb E\!\left[F_\nu^{-}(\alpha)\mid Z=z\right]
\end{equation}
denote the $\alpha$-quantile of the conditional Wasserstein barycenter given $Z = z$.
A natural estimator can be obtained by local weighting \parencite{hein2009robust}. Let $K$ be a kernel and let $K_h(u)=K(u/h)/h$. For the population target, define $w_{z,h}(Z):=K_h(Z-z)/\mathbb E\{K_h(Z-z)\}$. The resulting localized target is $q_{\alpha,h}^\star(z):=\mathbb E\!\left[w_{z,h}(Z)\,F_\nu^{-}(\alpha)\right]$, which is a weighted barycenter quantile of the form \eqref{eq:reweighted-target}. In estimation, the population normalizing constant is replaced by its empirical analogue, giving sample weights $\hat w_{z,h}(Z_i):=K_h(Z_i-z)/\{\frac{1}{n}\sum_{j=1}^n K_h(Z_j-z)\}$. The localized target approximates $q_\alpha^\star(z)$ in \eqref{eq:conditional-target} as $h\to0$ under standard smoothing conditions.
\end{example}

\begin{example}[IPW barycenter]
Suppose $Z=(W,A) \in \mathcal{Z}$, where $A\in\{0,1\}$ is a treatment indicator and $W \in \mathbb{R}^d$ is a vector of baseline covariates. For $a\in\{0,1\}$, let
$q_\alpha^{\star,(a)}
:=
\mathbb E\!\left[\mathbb E\!\left\{F_\nu^{-}(\alpha)\mid W,A=a\right\}\right]$, which is well defined under positivity of treatment assignment.
Under standard causal identification assumptions (consistency and exchangeability), $q_\alpha^{\star,(a)}$ identifies the
$\alpha$-quantile of the counterfactual Wasserstein barycenter under
assignment of all units to treatment level $a$, as studied by
\textcite{lin_causal_2023}. The statistical target $q_\alpha^{\star,(a)}$ can also be written as follows under positivity:
\begin{equation}\label{eq:ipw-identification}
q_\alpha^{\star,(a)}
=
\mathbb E\!\left[
\frac{\mathbbm{1}(A=a)}{\pi_a(W)}\,F_\nu^{-}(\alpha)
\right],
\qquad
\pi_a(W):=P(A=a\mid W).
\end{equation}
Thus, $q_\alpha^{\star,(a)}$ is a weighted barycenter quantile of the form
\eqref{eq:reweighted-target}, with weight $w_i := \mathbbm{1}(A=a)/\pi_a(W_i)$.
\end{example}

\subsection{Alternative functionals of the distribution of quantiles}

The representation in Theorem~\ref{theorem:identification} also yields estimators for other functionals of the latent $\alpha$-quantile distribution. Beyond the mean, a commonly used choice is the variance of the $\alpha$-quantiles under $\Pi$:
\begin{equation}\label{eq:latent-quantile-variance}
v_\alpha^\star := \mathrm{Var}_{\nu\sim\Pi}\left(F_\nu^{-}(\alpha)\right)
= \int_{\mathcal I} x^2\,dH_\alpha(x) - \left(\int_{\mathcal I} x\,dH_\alpha(x)\right)^2 = \int_{\mathcal I} x^2\,dH_\alpha(x) - \left(q_\alpha^\star\right)^2.
\end{equation}
A simple plug-in estimator is
\begin{equation}
\hat v_{n,\alpha} := \int_{\mathcal I} x^2\,d\hat H_{n,\alpha}(x) - \left(\hat q_\alpha^{\mathrm{MCB}}\right)^2.
\end{equation}
Integrating \eqref{eq:latent-quantile-variance} over $\alpha\in(0,1)$ recovers the Fr\'echet variance under the $2$-Wasserstein metric:
\begin{equation}\label{eq:frechet-variance-from-v-alpha}
\mathbb E_{\nu\sim\Pi}\left[d_{W_2}^2(\bar\nu_\Pi,\nu)\right] = \int_0^1 v_t^\star\,dt.
\end{equation}
Replacing $v_t^\star$ by the plug-in $\hat v_{n,t}$ gives an estimator of the Fr\'echet variance, providing a scalar summary of the heterogeneity of distributions under~$\Pi$.

\section{Simulation study} \label{sec:simulation_study}

We conduct a simulation study to evaluate the operating characteristics of the MCB estimator %, including bias, root mean squared error (RMSE), standard error, and confidence interval coverage, 
and compare its performance with that of the empirical barycenter. We also examine the behavior of the estimator under misspecification of the parametric model for the mixing distributions. 

Data are generated using a two-stage sampling procedure.
In the first stage, we sample $n = 50, 200, 500$ distributions from $\Pi$. We consider two simulation settings for $\Pi$:
\begin{enumerate}
  \item[(A)] A Dirichlet process with concentration parameter 2 and base measure given by a modified Kumaraswamy distribution scaled to $(0, 10)$, with shape parameters 0.2 and 0.6 \parencite{sagrillo2021modified}.
  \item[(B)] A continuous mixture of two normal distributions. The first component has mean uniformly distributed on $(-3, 0)$ and standard deviation uniformly distributed on $(1, 2)$. The second component has mean uniformly distributed on $(2, 10)$ and standard deviation uniformly distributed on $(1, 2)$, with the mixture weight on the second component uniformly distributed on $(0.1, 0.2)$.
\end{enumerate}
In the second stage, for each distribution sampled in the first stage, we draw $m_i$ uniformly from $\{3,\ldots,10\}$ and then sample $m_i$ observations from that distribution.

We configure the MCB estimator to estimate the mixing distributions using a Beta model, splines with $df = 7$ \parencite{efron_empirical_2016,  narasimhan2020deconvolver}, or the NPMLE \parencite{koenker2017rebayes}. The Beta configuration is correctly specified for Simulation Setting A but not for Simulation Setting B, while the spline configuration is misspecified for both settings. To ease computation, we estimate each mixing distribution on 50 equally spaced cutpoints between the minimum and maximum values of the pooled observations in the simulated dataset. We estimate standard errors using the bootstrap with 500 replicates. For the Beta- and spline-configured MCB estimators, we construct Wald confidence intervals based on the bootstrap standard errors. Because theoretical results are not available for the NPMLE-configured estimator, we use the mean of the bootstrap estimates as the point estimate for stability and report percentile-based bootstrap confidence intervals for this estimator.

Across 500 simulation runs, we display the mean estimate, RMSE, mean standard error, and 95\% confidence interval coverage for the estimated barycenter quantile function on the grid $0.01, 0.02, \ldots, 0.99$.

\subsection{Simulation Setting A}

Results for Setting A are displayed in Figure~\ref{fig:simulations-dirichlet}. Overall, the empirical barycenter performs poorly relative to the MCB estimators. It has larger RMSE across most quantile levels, and its confidence interval coverage deteriorates as $n$ increases, indicating persistent bias that is not eliminated by increasing the number of units.

The RMSE results show the clearest advantage for the correctly specified Beta-configured MCB estimator. At $n = 500$, the empirical barycenter had average RMSE of 0.25 across quantile levels, compared with 0.014 for the Beta-configured MCB estimator, 0.066 for the spline-configured MCB estimator, and approximately 0.25 for the NPMLE-configured MCB estimator. Table~\ref{tab:tail_center_rmse} shows that the gains from the MCB estimators are most pronounced for tail quantiles, where bias in the empirical barycenter is especially large.

The coverage results tell a similar story. At $n = 500$, the empirical barycenter's coverage ranged from 0\% to 97\% and was well below the nominal 95\% level for many quantile levels. In contrast, the Beta-configured MCB estimator achieved near-nominal confidence interval coverage across quantile levels and sample sizes, with coverage ranging from 90\% to 98\% at $n = 500$.

The spline-configured MCB estimator also substantially outperforms the empirical barycenter, despite being misspecified. It has lower RMSE and confidence interval coverage closer to the nominal level, although its coverage decreases with increasing $n$, as expected under model misspecification. Although we did not study the theoretical properties of the NPMLE-configured MCB estimator, it exhibits close-to-nominal percentile-based confidence interval coverage. Its RMSE, however, exceeds that of the empirical barycenter for central quantile levels. Among the MCB estimators, standard errors are generally largest for the NPMLE configuration and smallest for the Beta configuration.

Overall, these results suggest that, when the parametric submodel is correctly specified or reasonably flexible, the MCB estimator improves RMSE and confidence interval coverage relative to the empirical barycenter under sparse sampling.

\begin{figure}[!htb]
    \centering
    \includegraphics[width=\textwidth]{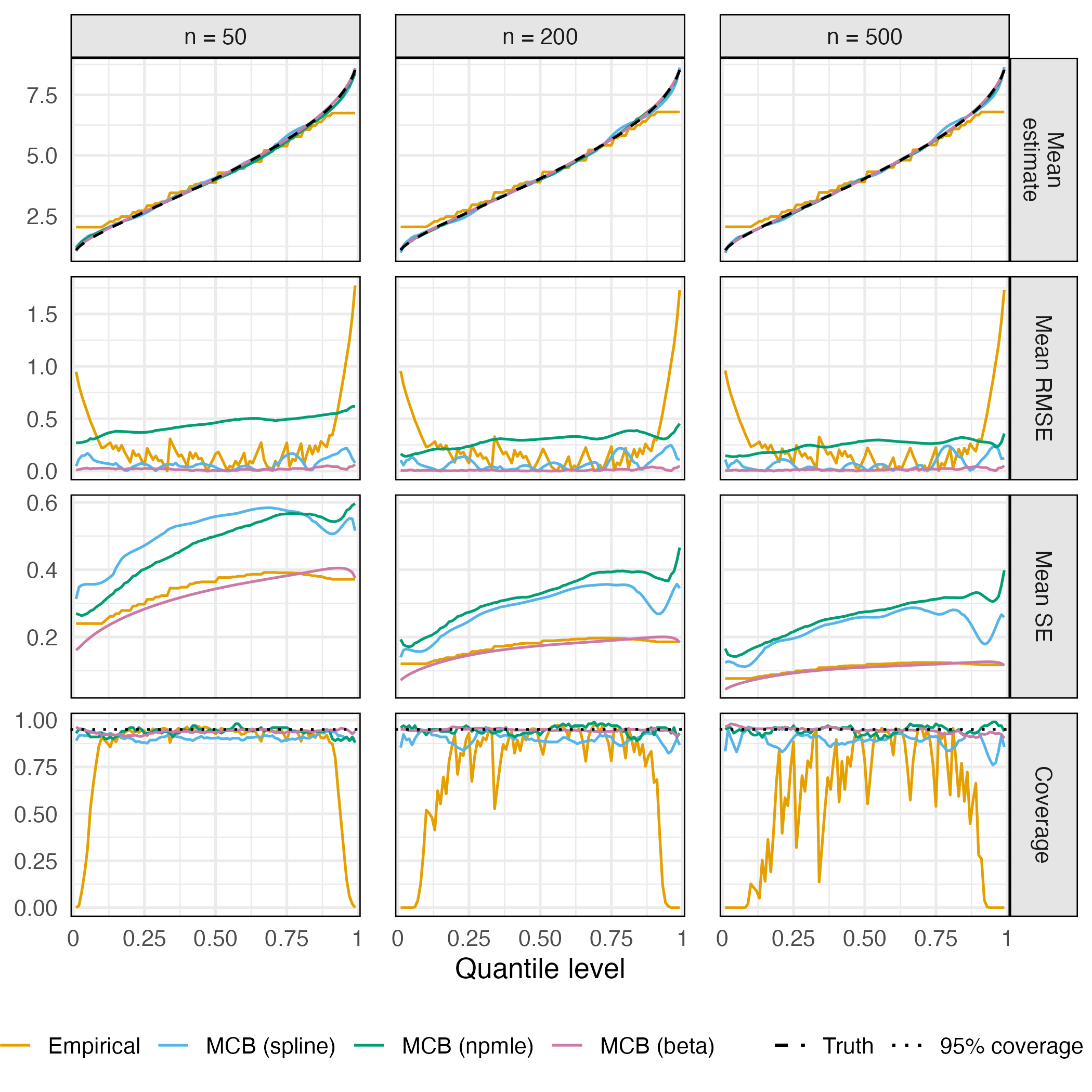}
    \caption{Results for Simulation Setting A. Rows display, from top to bottom, the mean estimate, RMSE, mean standard error, and 95\% confidence interval coverage for all quantile levels across simulation iterations. Wald confidence intervals are shown for the empirical barycenter, the Beta-configured MCB estimator, and the spline-configured MCB estimator, while percentile confidence intervals are shown for the NPMLE-configured MCB estimator.}
    \label{fig:simulations-dirichlet}
\end{figure}

\begin{table}[!htb]
\centering
\caption{Average, tail, and center RMSE by simulation setting for $n=500$. Average RMSE is averaged over all quantiles $\alpha \in \{0.01, 0.02, \ldots, 0.99\}$. Lower tail, center, and upper tail RMSE are averaged over quantiles
$\alpha\in\{0.01,\ldots,0.05\}$, $\alpha\in\{0.06,\ldots,0.94\}$,
and $\alpha\in\{0.95,\ldots,0.99\}$, respectively.}
\label{tab:tail_center_rmse}
\begin{tabular}{llcccc}
\toprule
Setting & Estimator
& \shortstack{Average\\RMSE\\$(0.01$--$0.99)$}
& \shortstack{Lower tail\\RMSE\\$(0.01$--$0.05)$}
& \shortstack{Center quantiles\\RMSE\\$(0.06$--$0.94)$}
& \shortstack{Upper tail\\RMSE\\$(0.95$--$0.99)$} \\
\midrule
\multirow{4}{*}{A} & Empirical barycenter & 0.250 & 0.748 & 0.166 & 1.254 \\
& MCB-Beta & 0.014 & 0.007 & 0.014 & 0.025 \\
& MCB-Spline & 0.066 & 0.078 & 0.059 & 0.184 \\
& MCB-NPMLE & 0.245 & 0.141 & 0.250 & 0.273 \\
\midrule
\multirow{4}{*}{B} & Empirical barycenter & 0.605 & 3.223 & 0.431 & 1.077 \\
& MCB-Beta & 0.237 & 0.211 & 0.241 & 0.205 \\
& MCB-Spline & 0.164 & 0.692 & 0.121 & 0.399 \\
& MCB-NPMLE & 0.205 & 0.264 & 0.198 & 0.264 \\
\bottomrule
\end{tabular}
\end{table}

\subsection{Simulation Setting B} 

Results for Setting B are displayed in Figure~\ref{fig:simulations-continuous}. As in Setting A, the empirical barycenter has substantially larger RMSE than the MCB estimators, with the largest errors occurring in the tails. At $n = 500$, the empirical barycenter had an average RMSE of 0.61, compared with 0.24 for the Beta-configured MCB estimator, 0.16 for the spline-configured MCB estimator, and 0.21 for the NPMLE-configured MCB estimator. As with Simulation Setting A, Table~\ref{tab:tail_center_rmse} shows that this improvement is especially pronounced for tail quantiles, where the empirical barycenter has particularly large RMSE.

Unlike Setting A, both the Beta- and spline-configured MCB estimators are misspecified in Setting B. Despite this misspecification, they still improve point estimation relative to the empirical barycenter, with the spline configuration achieving the lowest average and center-quantile RMSE. Their confidence interval coverage, however, is poor and generally worsens as $n$ increases, as expected when the fitted mixing model is misspecified. The NPMLE-configured MCB estimator achieves close-to-nominal percentile-based confidence interval coverage across nearly all quantile levels, albeit with larger standard error estimates. Overall, these results show that the MCB estimator can greatly improve point estimation even under mixing model misspecification, but that coverage can be poor because the estimator converges to a misspecified limiting target rather than the true barycenter.

\begin{figure}[!htb]
    \centering
    \includegraphics[width=\textwidth]{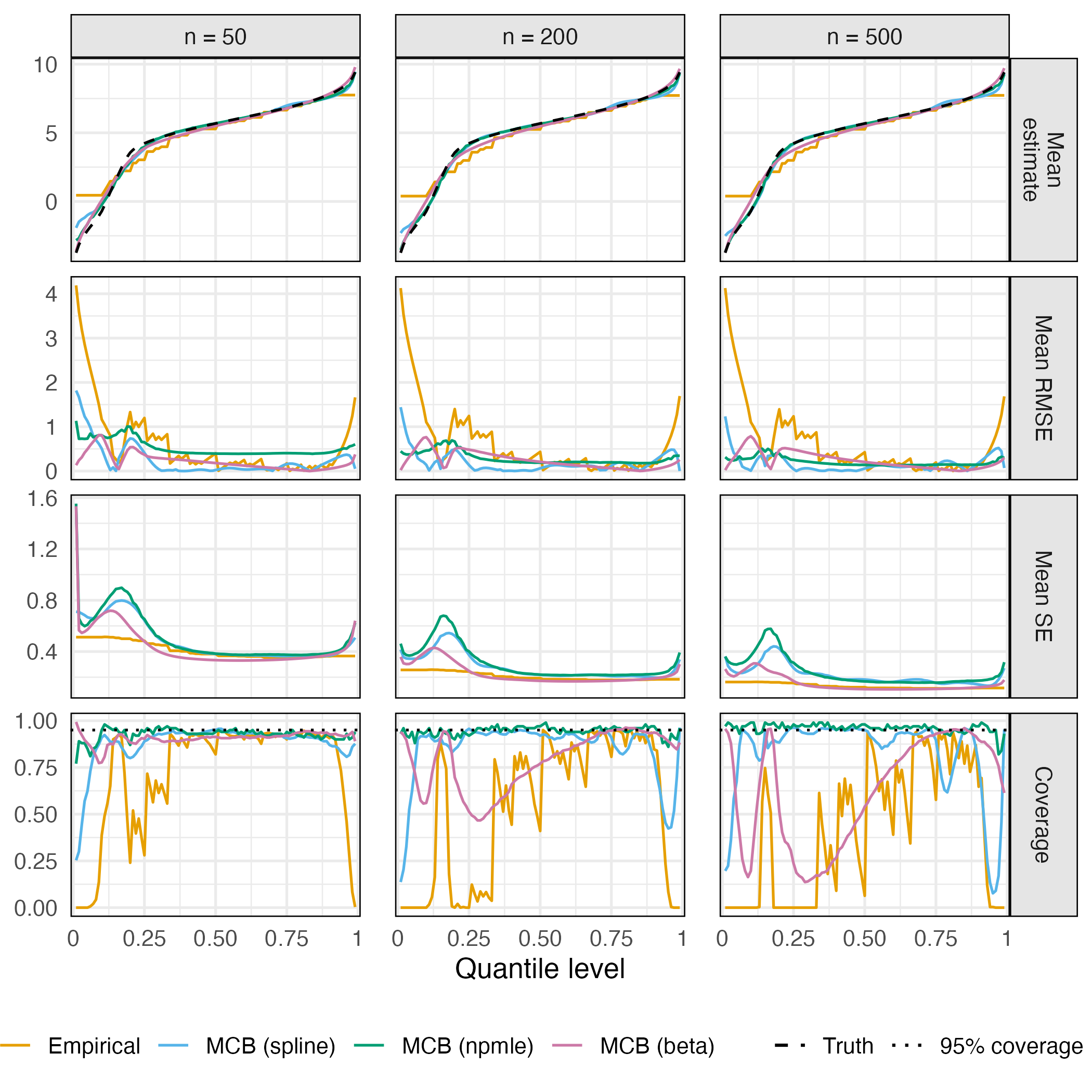}
    \caption{Results for Simulation Setting B. Rows display, from top to bottom, the mean estimate, root mean squared error (RMSE), mean standard error, and 95\% confidence interval coverage for all quantile levels across simulation iterations. Wald confidence intervals are shown for the empirical barycenter, the Beta-configured MCB estimator, and the spline-configured MCB estimator, while percentile confidence intervals are shown for the NPMLE-configured MCB estimator.}
    \label{fig:simulations-continuous}
\end{figure}

\section{Data application}\label{sec:data_application}

We illustrate the proposed method using HIV-1 sequence data from the HVTN 502 (Step) and HVTN 503 (Phambili) vaccine efficacy trials. 
Additional illustrative data applications are included in Appendix \ref{appendix:additional-data-applications}. 

Both the Step and Phambili trials were randomized placebo-controlled phase 2b trials of the Merck MRKAd5 HIV-1 subtype B Gag/Pol/Nef vaccine \parencite{gray2010overview}. The Step trial was conducted in North America, the Caribbean, South America, and Australia and enrolled 3,000 participants; vaccinations were discontinued after an interim analysis showed no efficacy against HIV-1 acquisition \parencite{buchbinder2008efficacy}. The Phambili trial evaluated the same vaccine in South Africa, a predominantly subtype C epidemic setting, and was halted early following the Step results \parencite{gray2011safety}.

Our analysis uses viral sequence data from participants who acquired HIV infection and whose viral sequences were sampled near the time of diagnosis. In the Step trial, sequence data were available from 65 participants (39 vaccine, 26 placebo), with an average of 6.58 sequences per participant (range 1--12). In the Phambili trial, sequence data were available from 43 participants (23 vaccine, 20 placebo), with an average of 6.51 sequences per participant (range 2--12). Within each trial, all viral sequences were aligned using a multiple sequence alignment, and the distances described below were computed from the aligned sequences. We summarize each viral sequence with two numeric features: (1) the physicochemically weighted mismatch distance between the sequence and the vaccine insert across positions 77–85 of the Gag protein, using the HXB2 numbering system, at which SLYNTVATL is the vaccine-strain amino acid sequence and represents a well-known epitope that elicits HIV-1-specific CD8+ T-cell responses; and (2) the same metric computed over the entire Gag protein.

For each participant, the subject-specific distribution of a given feature represents the within-host distribution of viral distances from the vaccine insert. Our objective is to compare the barycenters of these subject-specific distance distributions between the vaccine and placebo arms. Because these distances measure how similar the infecting viral populations are to the vaccine insert, differences between treatment-arm barycenters may suggest that vaccination was associated with infection by viruses with different sequence characteristics. We focus on the Gag protein because the evaluated vaccine was designed to either reduce acquisition of HIV-1 and/or reduce viral load, based on CD8+ T-cell responses that react with HIV-1 epitopes in core proteins, especially Gag. This rationale was supported by natural history studies showing that such immune responses were correlated with beneficial outcomes. Prior analyses of viral sequences from these trials also found evidence of vaccine-associated sequence divergence in Gag \parencite{rolland2011genetic, hertz2016study}. These analyses, however, reduced the observed sequences for each participant to a single representative sequence. We revisit this analysis while retaining all observed sequences, treating them as samples from the subject-specific viral population.

We analyzed the Step and Phambili trials separately. Within each trial, we applied the MCB estimator separately to the vaccine and placebo arms, yielding an estimated barycenter quantile function for each arm. We then compared the vaccine- and placebo-arm barycenters to assess whether the distribution of sequence-derived distances differed between treatment groups. We estimated these barycenter quantile functions on the grid of quantile levels $\alpha = 0.01, 0.02, \ldots, 0.99$ using the Beta-configured MCB estimator. The Beta configuration was used because it provides more stable estimates in the presence of small sample sizes. We obtain pointwise confidence intervals across the quantile grid using the nonparametric bootstrap with $B=500$ replications and Wald-type intervals based on the bootstrap standard errors.

The estimated barycenter quantile functions are displayed in Figure \ref{fig:application} together with 95\% pointwise confidence intervals. For the SLYNTVATL feature, the estimated barycenter quantile functions are nearly flat, indicating generally limited within-host heterogeneity in this sequence feature. The MCB estimator and the empirical barycenter closely match across all quantile levels. This agreement is expected when the unit-level distributions are close to point masses: in such settings, even sparsely sampled empirical quantiles can provide accurate estimates of the latent quantiles, leaving little bias for the MCB estimator to correct. There is also a larger difference between the vaccine and placebo barycenters in the Step trial than in the Phambili trial.

For the full Gag protein feature, the estimated quantile functions show greater within-host heterogeneity. In this setting, the MCB estimator and empirical barycenter agree closely in the central quantile range, but differ noticeably in the tails. This pattern is consistent with the simulation results, where sparse-sampling bias was most pronounced for tail quantiles and where the MCB estimator provided the largest correction. For the full Gag protein feature, the vaccine and placebo barycenters appear more separated in the Phambili trial than in the Step trial.

\begin{figure}[!htb]
    \centering
    \includegraphics[width=\textwidth]{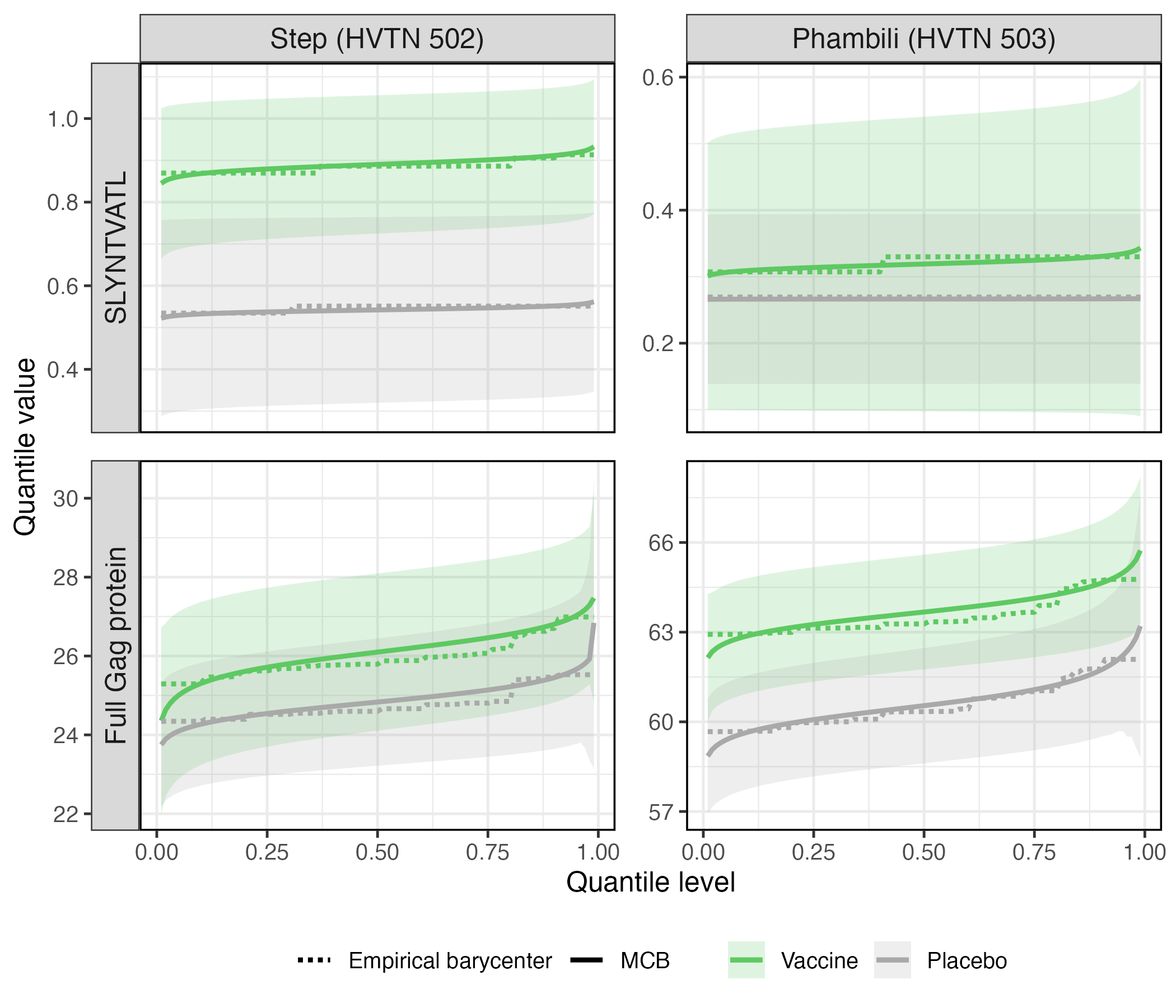}
    \caption{Estimated barycenter quantile functions for the SLYNTVATL epitope distance and full Gag protein physicochemical-weighted Hamming distance in the Step and Phambili trials, stratified by treatment arm. The MCB estimator along with its pointwise 95\% confidence intervals is displayed with a solid line, while the empirical barycenter is shown with a dotted line.}
    \label{fig:application}
\end{figure}

\section{Discussion} \label{sec:discussion}

To our knowledge, this is the first paper to propose a method that directly
corrects for sparse sampling bias in estimating the Wasserstein barycenter of
one-dimensional distributions. We approached the problem by first considering the seemingly more ambitious task of estimating the
distribution of the latent quantiles across units, from which the
barycenter quantile can be recovered as the mean. This distribution can be represented in terms of marginal laws
of unit-level CDF values, which arise as mixing distributions in
binomial mixture models. This representation leads to a natural plug-in
estimator: we first estimate the relevant binomial mixing distributions and
then plug these estimates into the representation formula. Our method
can accommodate any suitable estimator of the mixing distributions, allowing
it to draw on the extensive existing literature on binomial mixture
estimation. In simulations, the proposed approach performed well even when
the parametric submodels were misspecified.

Our results suggest that a distributional analysis can also be performed in hierarchical data settings, where the outcomes are typically treated as repeated measurements instead of distributions. Such data are traditionally analyzed using generalized estimating equations or mixed models, which focus on mean effects across groups or covariates \parencite{raudenbush2002hierarchical, gelman2007data}. 
Distributional methods could in principle be used in these settings because the repeated measurements within each unit can be viewed as i.i.d. draws from a latent unit-level distribution. However, distributional methods are rarely used in these settings because these methods typically assume that each unit-level distribution can be estimated accurately (i.e., requiring a large number of repeated measurements). 
Our framework relaxes this requirement and allows for inference on the Wasserstein barycenter of a population of unit-level distributions. Comparing barycenters across groups can reveal differences in features, such as tail behavior or spread, that may be missed by analyses focused on means. In Appendix \ref{appendix:additional-data-applications}, we illustrate the utility of our method in these settings by reanalyzing three classic data examples taken from hierarchical data analysis textbooks.

One caveat of the proposed approach is that structural assumptions are needed for estimation. These structural assumptions are of a semiparametric nature: we make parametric assumptions on the binomial mixing distributions, which does not fully specify the distribution on distributions $\Pi$.
Without these structural assumptions, the binomial mixing distributions underlying the marginal construction are not identifiable. Misspecification of these parametric models is a concern, and while the NPMLE-configured MCB estimator is an appealing alternative, its theoretical properties in the MCB procedure are not fully understood. An alternative direction for future work is to use the representation result to construct nonparametric bounds on barycenter quantiles, avoiding the need to impose structural assumptions on the mixing distributions.

Finally, additional theory is needed for estimation of the marginal mixing distributions across $x$. The current implementation estimates the mixing distribution pointwise over a grid of $x$ values and then enforces monotonicity through isotonic regression for each quantile level $\alpha  \in (0, 1)$ separately. This procedure is convenient because it can use available binomial mixture estimators, but is unlikely to be optimal because it does not impose the monotonicity constraint directly nor globally.
We are not aware of any existing theory and methods that could be directly applied to this problem.

\section{Code availability}
The implementation of the MCB estimator is provided in the \texttt{mcbarycenter} R package, available at \url{https://github.com/jpspeng/mcbarycenter}. A \href{https://jpspeng.github.io/mcbarycenter/vignettes/getting-started.html}{tutorial vignette} is also available. Please note that this repository is currently a work in progress and subject to active development.

\section{Acknowledgments}

We thank Zhenhua Lin, Thomas Richardson, and Michal Juraska for helpful discussions, and Craig Magaret for preparing the data used in the data application. The authors thank the participants and investigators of HVTN 502 and 503. Research reported in this publication was supported by the National Institute of Allergy And Infectious Diseases of the National Institutes of Health, award number R37AI054165 and U.S. Public Health Service Grant AI068635. The content is solely the responsibility of the authors and does not necessarily represent the official views of the National Institutes of Health.  

Florian Stijven was supported by the Agentschap Innoveren \& Ondernemen (VLAIO) and Johnson \& Johnson Innovative Medicine through a Baekeland Mandate [grant number HBC.2022.0145].

\newpage 

\printbibliography

\newpage

\appendix

% Appendix figure numbering: Figure A.1, A.2, ...
\setcounter{figure}{0}
\renewcommand{\thefigure}{A.\arabic{figure}}

% Optional: appendix table numbering: Table A.1, A.2, ...
\setcounter{table}{0}
\renewcommand{\thetable}{A.\arabic{table}}

\section*{Appendices}
% This command starts the "partial" TOC
\startcontents[appendices]
% This prints the partial TOC
\printcontents[appendices]{}{1}{}

\section{Proofs of main results}

\subsection{Proof of Proposition \ref{prop:barycenter_nonidentifiability}}

We first state and prove two lemmas that are needed for the proof of Proposition \ref{prop:barycenter_nonidentifiability}. Lemma \ref{lemma:truncated-moment-restriction} shows that, for any fixed $\alpha \in (0, 1)$ and $C \in \mathbb{N}$, there exist two distinct probability measures on $[0, 1]$ that have the same first $C$ moments but different CDF values at $\alpha-$. Lemma \ref{lemma:hilbert-matrix} is used in the proof of the Lemma \ref{lemma:truncated-moment-restriction}.

\begin{lemma}\label{lemma:truncated-moment-restriction}
    Let $G$ be a probability measure on $[0, 1]$ with Lebesgue density bounded away from zero. For any $\alpha \in (0, 1)$ and $C \in \mathbb{N}$, there exist two distinct probability measures $G_1$ and $G_2$ on $[0, 1]$ such that $\int_0^1 t^k\,dG_1(t) = \int_0^1 t^k\,dG_2(t)$ for all $k = 0, 1, \ldots, C$, but $\int_0^1 \mathbf{1}(t < \alpha)\,dG_1(t) \neq \int_0^1 \mathbf{1}(t < \alpha)\,dG_2(t)$.
\end{lemma}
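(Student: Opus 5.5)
The plan is a perturbation argument. We take $G_1 := G$ and let $G_2$ be $G$ plus a small signed perturbation that is orthogonal to the polynomials of degree at most $C$ but not orthogonal to $\mathbf 1(t<\alpha)$. Write $g \ge c > 0$ for the Lebesgue density of $G$. We look for $G_2$ with density $g + \varepsilon h$, where $h$ is a bounded measurable function on $[0,1]$ satisfying
\begin{equation*}
\int_0^1 t^k h(t)\,dt = 0 \quad (k=0,\dots,C), \qquad \int_0^\alpha h(t)\,dt \neq 0 .
\end{equation*}
If $\sup|h| \le M$ and $0<\varepsilon < c/M$, then $g+\varepsilon h \ge c - \varepsilon M > 0$. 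The $k=0$ condition makes it integrate to one, so $G_2$ is a probability measure. It has the same moments of order $0,\dots,C$ as $G_1$, and
\begin{equation*}
\int \mathbf 1(t<\alpha)\,dG_2 - \int \mathbf 1(t<\alpha)\,dG_1 = \varepsilon \int_0^\alpha h \neq 0 .
\end{equation*}
This also shows $G_1 \ne G_2$. The lower bound on the density is used exactly here, to keep $G_2$ nonnegative.

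The remaining work is to construct $h$. We start from $\phi := \mathbf 1_{[0,\alpha)}$ and subtract its $L^2[0,1]$ orthogonal projection onto $\mathcal P_C := \operatorname{span}\{1,t,\dots,t^C\}$, setting $h := \phi - P\phi$. Concretely, $P\phi = \sum_{k=0}^C c_k t^k$, where the coefficient vector $c$ solves $Ac = b$ with $A_{jk} = \int_0^1 t^{j+k}\,dt = 1/(j+k+1)$ and $b_j = \int_0^\alpha t^j\,dt$. This $A$ is the Hilbert matrix, which is invertible since it is the Gram matrix of linearly independent functions; this is presumably the content of Lemma~\ref{lemma:hilbert-matrix}. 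Hence $c$ exists and $h$ is bounded. By construction $h \perp \mathcal P_C$, which gives the moment conditions.

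For the last condition, note that
\begin{equation*}
\int_0^\alpha h = \langle h, \phi\rangle = \langle h, h\rangle + \langle h, P\phi\rangle = \|h\|_2^2 .
\end{equation*}
So it suffices that $h \neq 0$, i.e.\ $\phi \notin \mathcal P_C$. This holds because a polynomial that equals $1$ on $[0,\alpha)$ and $0$ on $[\alpha,1]$ almost everywhere would vanish on an interval, hence identically, contradicting its value $1$ on $[0,\alpha)$.

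The only delicate step is this invertibility and non-membership check. Everything else, including the choice of $\varepsilon$ and the verification that $G_2$ is a probability measure, is routine.
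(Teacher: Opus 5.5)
Your proof is correct, but it takes a different and shorter route than the paper. The paper confines the perturbation to signed measures whose density is a polynomial of degree $M\ge 2C+2$. The moment conditions then become $H\boldsymbol a=\boldsymbol 0$ for a rectangular $(C+1)\times(M+1)$ Hilbert-type matrix. The crux is Lemma~\ref{lemma:hilbert-matrix}, a forward-difference argument showing that the vector $(\alpha^j/j)_j$ representing $\boldsymbol a\mapsto\int_0^\alpha$ is not in the row space of $H$, so some kernel vector gives $\int_0^\alpha\neq 0$.

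That lemma is therefore not the invertibility of the square Hilbert matrix, as you guessed. It is the finite-dimensional counterpart of your check that $\phi\notin\mathcal P_C$. The invertibility you verify is not really needed, since orthogonal projection onto a finite-dimensional subspace always exists.

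By allowing the piecewise-polynomial perturbation $h=\phi-P\phi$, you get the identity $\int_0^\alpha h=\|h\|_2^2$ for free. Everything then reduces to the trivial fact that $\mathbf 1_{[0,\alpha)}$ with $0<\alpha<1$ is not a.e.\ equal to a polynomial.

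What the paper's construction buys is a polynomial perturbation, so $G_1,G_2$ keep any smoothness of $g$, whereas yours puts a jump at $\alpha$. Neither the lemma nor its use in Proposition~\ref{prop:barycenter_nonidentifiability} requires smoothness. If one wanted it, your identity still works with $h$ the projection of $\phi$ onto $\mathcal P_M\ominus\mathcal P_C$ for $M$ large, by $L^2$-density of polynomials. Taking $G_1=G$ instead of the symmetric pair $G\pm\epsilon\nu^\star$ is immaterial.
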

\begin{proof}
    We will construct a signed measure $\nu^\star$ with bounded Lebesgue density such that $\int_0^1 t^k\,d\nu^\star(t) = 0$ for all $k = 0, 1, \ldots, C$, but $\int_0^1 \mathbf{1}(t < \alpha)\,d\nu^\star(t) \neq 0$. Then we can take $G_1 := G + \epsilon \nu^\star$ and $G_2 := G - \epsilon \nu^\star$ for some sufficiently small $\epsilon > 0$ (such that $G_1$ and $G_2$ are probability measures) to get the desired result. 

    Write $g$ for the Lebesgue density of $G$ and set $c := \inf_{x\in[0,1]} g(x) > 0$ (this infimum is positive by assumption).

    \textbf{Signed measure $\nu^\star$.}~~~Let $\nu$ be a signed measure on $[0, 1]$ with Lebesgue density a polynomial of degree $M \ge 2C + 2$ and coefficients $\boldsymbol{a} := (a_0, a_1, \ldots, a_{M})$. The moment conditions $\int_0^1 x^k \, d \nu(x) = 0$ for $k = 0, 1, \ldots, C$ translate to
    \begin{equation*}
        \int_0^1 x^k \sum_{j=0}^{M} a_j x^j\,dx = \sum_{j=0}^{M} a_j \int_0^1 x^{k+j}\,dx = \sum_{j=0}^{M} a_j \frac{1}{k+j+1} = 0 \quad \text{for } k = 0, 1, \ldots, C.
    \end{equation*}
    Let $H$ be the $(C + 1) \times (M + 1)$ Hilbert matrix with entries $H_{i, j} = \frac{1}{i + j - 1}$ for $i = 1, 2, \ldots, C + 1$ and $j = 1, 2, \ldots, M + 1$. The previous display can be written as $H \boldsymbol{a} = \boldsymbol{0}$.

    We further have that
    \begin{equation*}
        \int_0^\alpha d \nu(x) = \sum_{j=0}^{M} a_j \int_0^\alpha x^j\,dx = \sum_{j=0}^{M} a_j \frac{\alpha^{j+1}}{j+1}.
    \end{equation*}
    By Lemma \ref{lemma:hilbert-matrix} below, the vector $\boldsymbol{y} := (y_1,\ldots,y_{M+1})$ with $y_j:=\alpha^j/j$ does not lie in the row space of $H$ for $\alpha\in(0,1)$. Since $\ker(H)$ is the orthogonal complement of the row space of $H$, there exists a nonzero vector $\boldsymbol{a}^\star \in \ker(H)$ such that \(\boldsymbol{y}\cdot \boldsymbol{a}^\star \neq 0\). Equivalently,
    \begin{equation*}
        \sum_{j=0}^{M} a_j^\star \frac{\alpha^{j+1}}{j+1} \neq 0.
    \end{equation*}
    Taking $\nu^\star$ to have Lebesgue density $\sum_{j=0}^{M} a_j^\star x^j$ yields the desired signed measure.

    \textbf{Alternative probability measure.}~~~
    Let $b := \sup_{x\in[0,1]}|\sum_{j=0}^{M} a_j^\star x^j| < \infty$, which is finite because a polynomial is bounded on a compact set. Choose $\epsilon > 0$ such that $0 < \epsilon < \frac{c}{b}$.
    Then $G_1 := G + \epsilon \nu^\star$ and $G_2 := G - \epsilon \nu^\star$ are valid probability measures, and $\int_0^1 t^k\,dG_1(t) = \int_0^1 t^k\,dG_2(t)$ for all $k = 0, 1, \ldots, C$, but $\int_0^1 \mathbf{1}(t < \alpha)\,dG_1(t) \neq \int_0^1 \mathbf{1}(t < \alpha)\,dG_2(t)$.
\end{proof}

\begin{lemma}\label{lemma:hilbert-matrix}
    Choose $C, M \in \mathbb{N}$ such that $2C + 2 \le M + 1$ and let $y_j := \frac{\alpha^{j}}{j}$ for $j = 1, 2,  \ldots, M + 1$. If $\boldsymbol{y} := (y_1, y_2, \ldots, y_{M + 1})$ lies in the row space of the $(C + 1) \times (M + 1)$ Hilbert matrix $H$ with entries $H_{i, j} = \frac{1}{i + j - 1}$ for $i = 1, 2, \ldots, C + 1$ and $j = 1, 2, \ldots, M + 1$, then $\alpha = 1$ or $\alpha = 0$.
\end{lemma}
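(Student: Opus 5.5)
The plan is to turn row-space membership into an interpolation identity between an exponential and a rational function. Then a Rolle-type zero-counting argument shows this identity cannot hold at as many as $M+1 \ge 2C+2$ integer points unless $\alpha\in\{0,1\}$. Throughout we take $\alpha>0$, which covers the case $\alpha\in(0,1)$ used in Lemma~\ref{lemma:truncated-moment-restriction}, and suppose $\alpha\neq 1$, aiming for a contradiction.

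\textbf{Step 1: reduce to an interpolation identity.} Suppose $\boldsymbol y$ lies in the row space of $H$. Then there are coefficients $c_1,\dots,c_{C+1}$ with
\[
\frac{\alpha^j}{j}=\sum_{i=1}^{C+1}\frac{c_i}{i+j-1}, \qquad j=1,\dots,M+1 .
\]
Put $R(s):=\sum_{i=1}^{C+1} c_i/(s+i-1)$ and $Q(s):=\prod_{i=1}^{C+1}(s+i-1)$. Then $R=P/Q$ for a polynomial $P$ of degree at most $C$. Since $Q(s)=s\,A(s)$ with $A(s):=\prod_{i=2}^{C+1}(s+i-1)$, a nonzero polynomial of degree exactly $C$, multiplying through by $Q(j)$ and dividing by $j>0$ gives
\[
\alpha^j A(j)-P(j)=0,\qquad j=1,\dots,M+1 .
\]
Consequently the real function $g(s):=e^{\lambda s}A(s)-P(s)$, with $\lambda:=\log\alpha\neq 0$, has at least $M+1\ge 2C+2$ distinct real zeros.

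\textbf{Step 2: zero counting.} The key claim is the following. If $\lambda\neq0$, $A\not\equiv 0$ has degree $a$, and $\deg P\le b$, then $e^{\lambda s}A(s)-P(s)$ has at most $a+b+1$ real zeros.

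To prove it, differentiate $b+1$ times. This annihilates $P$ and leaves $e^{\lambda s}\tilde A(s)$. Each differentiation replaces $A$ by $\lambda A+A'$, so $\tilde A$ has the same degree $a$ and the same leading coefficient times $\lambda^{b+1}\ne0$; in particular $\tilde A\not\equiv0$. Hence the $(b+1)$-th derivative has at most $a$ real zeros. By Rolle's theorem, each differentiation loses at most one zero, so $g$ has at most $a+b+1$ zeros.

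With $a=C$ and $b=C$ this gives at most $2C+1$ zeros. That contradicts the $2C+2$ zeros from Step 1, so $\alpha=1$. The case $\alpha=0$ is excluded by the hypothesis $\alpha>0$, or is trivially allowed by the statement.

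\textbf{Main obstacle.} The delicate point is the bookkeeping in Step 1, namely dividing out the common factor $s$. The naive comparison $\alpha^j Q(j)=jP(j)$ involves polynomials of degree $C+1$ on both sides. It would require $2C+4$ zeros, more than the $2C+2$ available. The factor $s$ of $Q$ matches the factor $j$ on the left, and cancelling it lowers both degrees to $C$. This is exactly what makes the hypothesis $M+1\ge 2C+2$ suffice. I would double-check this cancellation and the fact that $\tilde A\not\equiv 0$ (which uses $\lambda\ne0$) carefully.
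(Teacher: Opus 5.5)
Your proof is correct. It shares the paper's reduction but replaces the paper's key step with its continuous analogue.

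The paper also multiplies by $D(j)=j(j+1)\cdots(j+C)$ and cancels the factor $j$. So $\alpha^j A(j)$ agrees at $j=1,\dots,2C+2$ with a polynomial of degree at most $C$; your $A$ is the paper's $P$, and your $P$ is the paper's $D(j)y_j$. Where you differentiate $C+1$ times and invoke Rolle, the paper applies the forward difference $\Delta^{C+1}$ at $j=1,\dots,C+1$. This annihilates the degree-$C$ polynomial and leaves $\alpha^j S(j)$, with $S(j)=\sum_{r=0}^{C+1}(-1)^{C+1-r}\binom{C+1}{r}\alpha^r A(j+r)$ of degree at most $C$. Vanishing at $C+1$ points forces $S\equiv 0$, but the $j^C$-coefficient of $S$ is $(\alpha-1)^{C+1}$.

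So $(\alpha-1)^{C+1}$ plays the role of your $\lambda^{C+1}=(\log\alpha)^{C+1}$, and root counting for a polynomial replaces Rolle. The discrete route is purely algebraic and valid for every real $\alpha\neq 0$. Yours needs $\alpha>0$ to write $\alpha^s=e^{\lambda s}$. In exchange, it shows the interpolation identity fails on any $2C+2$ distinct real nodes, not just consecutive integers.

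Since the statement does not restrict $\alpha$, your argument leaves $\alpha<0$ open. This does not matter for Lemma~\ref{lemma:truncated-moment-restriction}, where $\alpha\in(0,1)$, and it is easy to close. For $\alpha<0$, $P(j)=\alpha^jA(j)$ is nonzero and alternates in sign on $j=1,\dots,M+1$, because $A(j)>0$. Hence $P$ has at least $M\ge 2C+1$ real roots, which forces $P\equiv 0$ and contradicts $P(1)=\alpha A(1)\neq 0$.
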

\begin{proof}
    Assume that $\boldsymbol{y}$ lies in the row space of $H$. Then there exists
    $\boldsymbol{b} := (b_1, b_2, \ldots, b_{C + 1})$ such that
    \begin{equation*}
        y_j = \sum_{i=1}^{C + 1} b_i H_{i, j} = \sum_{i=1}^{C + 1} b_i \frac{1}{i + j - 1} \quad \text{for } j = 1, 2, \ldots, M + 1.
    \end{equation*}
    Define
    \begin{equation*}
        D(j) := j(j+1)\cdots(j+C) = \prod_{k=0}^C (j+k).
    \end{equation*}
    For each $i = 1, 2, \ldots, C + 1$, the quotient $D(j)/(i+j-1)$ is a polynomial in $j$ of degree $C$, and hence
    \begin{equation*}
        D(j)\,y_j = \sum_{i=1}^{C+1} b_i\frac{D(j)}{i+j-1} \quad \text{for } j = 1, 2, \ldots, M + 1.
    \end{equation*}
    is a polynomial in $j$ of degree at most $C$. 
    Consequently, the $(C+1)$-th forward difference $\Delta^{C+1}[D(j)\,y_j]$ is zero for $j = 1, 2, \ldots, C + 1$. For $j > C + 1$, the forward difference may not be defined as it depends on values of $D(j)\,y_j$ for $j > 2C + 2$.

    On the other hand, by definition
    \begin{equation*}
        D(j)\,y_j = \frac{D(j)\,\alpha^{j}}{j} =: P(j)\,\alpha^{j} \quad \text{for } j = 1, 2, \ldots, M + 1,
    \end{equation*}
    where $P$ is a polynomial of degree $C$. 
    The $(C+1)$-th forward difference is as follows for $j = 1, 2, \ldots, C + 1$:
    \begin{equation*}
        \Delta^{C+1}[P(j)\,\alpha^{j}]
        = \sum_{r=0}^{C+1} (-1)^{C+1-r}\binom{C+1}{r} P(j+r)\,\alpha^{j+r}
        = \alpha^{j} \sum_{r=0}^{C+1} (-1)^{C+1-r}\binom{C+1}{r} \alpha^{r} P(j+r).
    \end{equation*}
    Assume $\alpha \neq 0$, divide both sides by $\alpha^{j}$, and denote
    \begin{equation*}
        S(j) := \sum_{r=0}^{C+1} (-1)^{C+1-r}\binom{C+1}{r} \alpha^{r} P(j+r).
    \end{equation*}
    From the previous paragraph, we have that $\Delta^{C+1}[P(j)\,\alpha^{j}] = \Delta^{C + 1}[D(j) \, y_j] = 0$ and thus $S(j) = 0$ for $j = 1, 2, \ldots, C + 1$. Since $S(j)$ is a polynomial of the same degree as $P$, which is $C$, it follows that $S$ is the zero polynomial and thus all coefficients of $S$ are zero. 
    From the definition of $P$ follows that the leading coefficient is $p_C := 1$. From the previous display follows that the leading coefficient of $S$ is
    \begin{equation*}
        p_C \sum_{r=0}^{C+1} (-1)^{C+1-r}\binom{C+1}{r} \alpha^{r} = p_C(\alpha-1)^{C+1},
    \end{equation*} 
    where the equality follows from the binomial theorem.
    This can only be zero for $\alpha = 1$. We have before excluded the case $\alpha = 0$ when we divided by $\alpha^{j}$, so the only other possibility is $\alpha = 0$.
\end{proof}

The proof of Proposition \ref{prop:barycenter_nonidentifiability} is given below.

\begin{proof}
We provide a simple example showing the failure of identification. Let $\mathcal I := \{0,1\}$. Then each latent distribution $\nu_i$ is Bernoulli, determined by the parameter $p_i := \nu_i(\{0\}) = F_{\nu_i}(0)$.
Its quantile function is
$$
F_{\nu_i}^{-}(\alpha)
=
\begin{cases}
0 & \text{if } p_i \ge \alpha,\\
1 & \text{if } p_i < \alpha.
\end{cases}
$$
Hence, for any fixed $\alpha \in (0,1)$, we have
$
q_\alpha^\star
=
\mathbb{E}_{\nu \sim \Pi} \bigl[F_{\nu}^{-}(\alpha)\bigr]
=
\mathbb{P}_{\nu \sim \Pi}(p < \alpha)$, where $p$ relates to $\nu$ as defined above.
Thus, it suffices to show that the distribution function $t \mapsto \mathbb{P}_{\nu \sim \Pi}(p \le t)$ is not identifiable.

Under this construction, the observed data reduce to the pair $(m_i,Y_i)$ where $Y_i:=\sum_{j=1}^{m_i}\mathbf 1\{X_{i,j}=0\}$. Conditional on $(m_i,p_i)$ we have $Y_i\mid m_i,p_i\sim\operatorname{Bin}(m_i,p_i)$, so the marginal law of $(Y_i,m_i)$ is a binomial mixture with mixing distribution $t\mapsto \mathbb{P}_{\nu \sim \Pi}(p \le t)$. If $\mathbb{P}_{m \sim \eta}(m \le C)=1$ for some $C < \infty$, this mixture model identifies at most the first $C$ moments of the mixing distribution \parencite{wood_binomial_1999}, but not the full mixing distribution itself. By Lemma \ref{lemma:truncated-moment-restriction}, there exist two distinct laws on $p_i$ (and hence two distinct laws $\Pi$ on distributions) that induce the same observed data distribution while yielding different values of $\mathbb{P}_{\nu \sim \Pi}(p<\alpha)$, and therefore different values of $q_\alpha^\star$.
\end{proof}

\begin{remark}
    Note that Proposition \ref{prop:barycenter_nonidentifiability} does not rule out the existence of observed-data distributions that can only be induced by a single law $\Pi$. For example, if $\mathbb{P}(X_{i,1}=0)=1$, then the observed-data law must be induced by the law $\Pi$ that places unit mass on the point-mass distribution concentrated at 0 (i.e., the degenerate distribution at 0). 
    The proposition shows that there exist observed-data distributions that can be induced by multiple distinct laws $\Pi$, and therefore the barycenter quantiles are not identifiable in general.
\end{remark}

\subsection{Proof of Proposition \ref{prop:barycenter_mean}}

\begin{proof}
For each unit $i$, define $k_i := \lceil m_i\alpha\rceil.$
Since $\hat{\nu}_i=\frac{1}{m_i}\sum_{j=1}^{m_i}\delta_{X_{i,j}},$
the empirical quantile of $\hat{\nu}_i$ at level $\alpha$ is the
$k_i$th order statistic of the observations from unit $i$. That is,
$$
F_{\hat{\nu}_i}^{-}(\alpha)
=
X_{i,(k_i)}^{(m_i)}.
$$
Therefore, by the definition of $\hat q_{\alpha}^{\,\mathrm{emp}}$,
\begin{align*}
\mathbb{E}\!\left[\hat q_{\alpha}^{\,\mathrm{emp}}\right]
&=
\mathbb{E}\!\left[
\frac{1}{n}\sum_{i=1}^n F_{\hat{\nu}_i}^{-}(\alpha)
\right] \\
&=
\frac{1}{n}\sum_{i=1}^n
\mathbb{E}\!\left[
F_{\hat{\nu}_i}^{-}(\alpha)
\right] \\
&=
\frac{1}{n}\sum_{i=1}^n
\mathbb{E}\!\left[
X_{i,(\lceil m_i\alpha\rceil)}^{(m_i)}
\right].
\end{align*}
Because the units are identically distributed under the sampling scheme
\eqref{eq:two-stage}, each term in the sum has the same expectation. Hence
\begin{align*}
\mathbb{E}\!\left[\hat q_{\alpha}^{\,\mathrm{emp}}\right]
&=
\mathbb{E}\!\left[
X_{1,(\lceil m_1\alpha\rceil)}^{(m_1)}
\right] \\
&=
\sum_{m'=1}^{\infty}
\eta(m')\,
\mathbb{E}\!\left[
X_{1,(\lceil m'\alpha\rceil)}^{(m')}
\,\middle|\,
m_1=m'
\right],
\end{align*}
where $\eta(m')=\mathbb{P}(m_1=m')$.

Now fix $m'\geq 1$ and set $k:=\lceil m'\alpha\rceil$. Conditional on
$m_1=m'$, the two-stage sampling scheme becomes
$$
\nu_1\sim\Pi,
\qquad
X_{1,1},\dots,X_{1,m'}\mid \nu_1
\overset{\mathrm{i.i.d.}}{\sim}\nu_1.
$$
Since $m_1\indep \nu_1$, conditioning on $m_1=m'$ does not change the law of
$\nu_1$. Therefore, for each fixed $m'$, we can use Lemma~A.1 in
\textcite{bigot_upper_2018}, which gives
$$
\mathbb{E}\!\left[
X_{1,(k)}^{(m')}
\,\middle|\,
m_1=m'
\right]
=
\mathbb{E}_{\bar{\nu}}\!\left[
X_{(k)}^{*(m')}
\right],
$$
where $X_{(1)}^{*(m')},\dots,X_{(m')}^{*(m')}$ are the order statistics of an
i.i.d.~sample of size $m'$ drawn from the barycenter $\bar{\nu}_{\Pi}$.

Substituting $k=\lceil m'\alpha\rceil$ into the previous identity and then
into the preceding sum yields
\begin{align*}
\mathbb{E}\!\left[\hat q_{\alpha}^{\,\mathrm{emp}}\right]
&=
\sum_{m'=1}^{\infty}
\eta(m')\,
\mathbb{E}_{\bar{\nu}}\!\left[
X_{(\lceil m'\alpha\rceil)}^{*(m')}
\right] \\
&=
\mathbb{E}_{m'\sim\eta}\!\left[
\mathbb{E}_{\bar{\nu}}\!\left[
X_{(\lceil m'\alpha\rceil)}^{*(m')}
\right]
\right].
\end{align*}
This proves the claim.
\end{proof}

\subsection{Proof of Theorem \ref{theorem:identification}}

\begin{proof}
    We will show that $G(x, \alpha-) = \mathbb{P}_{\nu \sim \Pi} \left\{ F_{\nu}^-(\alpha) > x \right\}$, from which directly follows $\mathbb{P}_{\nu \sim \Pi}\left\{ F_{\nu}^-(\alpha) \le x \right\} = 1 - G(x, \alpha-)$ as required. 
    % From this immediately follows that 
    % \begin{equation*}
    %     \int_{\mathcal{I}} x \, dF(x) = \int_{\mathcal{I}} x \, d (1 - G_x(\alpha)) = -\int_{\mathcal{I}} x \, d G_x(\alpha),
    % \end{equation*}
    % where $F(x) := \mathbb{P}\left\{ F_{\nu}^-(\alpha) \le x \right\}$, which would complete the proof.
    
    Consider the definition of $G(x, \alpha-)$:
    \begin{align*}
        G(x, \alpha-) 
        & = \lim_{t \uparrow \alpha} \mathbb{P}_{\nu \sim \Pi} \left\{ F_{\nu}(x) \le t  \right\} \\
        & = \mathbb{P}_{\nu \sim \Pi} \left\{ F_{\nu}(x) < \alpha \right\},
    \end{align*}
    where the second equality follows because the events
    $\{F_\nu(x)\le t\}$ increase to $\{F_\nu(x)<\alpha\}$ as $t\uparrow\alpha$.
    We now show that $F_{\nu}(x) < \alpha \iff F_{\nu}^-(\alpha) > x$, which will imply that $G(x, \alpha-) = \mathbb{P}_{\nu \sim \Pi} \left\{ F_{\nu}^-(\alpha) > x \right\}$ and, therefore, will complete the proof.

    \textbf{$F_{\nu}(x) < \alpha \implies F_{\nu}^-(\alpha) > x$}. \quad
    Start by assuming that $F_{\nu}(x) < \alpha$. 
    Because $x \mapsto F_{\nu}(x)$ is a cumulative distribution function, it is right-continuous, which implies that $\lim_{t \downarrow x}F_{\nu}(t) = F_{\nu}(x) < \alpha$ and, hence, there exists a $x^* > x$ such that $F_{\nu}(x^*) < \alpha$.
    Because $F_{\nu}$ is a distribution function, it is also non-decreasing, which implies that $x^* \le \inf \left\{ t: F_{\nu}(t) \ge \alpha \right\} = F_{\nu}^-(\alpha)$. Hence, $x < x^* \le F_{\nu}^-(\alpha)$ and thus, as required, $x < F_{\nu}^-(\alpha)$.

    \textbf{$F_{\nu}(x) < \alpha \impliedby F_{\nu}^-(\alpha) > x$}. \quad
    Start by assuming that $F_{\nu}^-(\alpha) > x$. Now also assume that $F_{\nu}(x) \ge \alpha$, which implies that $F_{\nu}^-(\alpha) = \inf \left\{ t: F_{\nu}(t) \ge \alpha \right\} \le x$. This is a contradiction; hence, we must have that $F_{\nu}(x) < \alpha$.
\end{proof}

\subsection{Proof of Proposition \ref{prop:empirical_barycenter}}

\begin{proof}
Define $
C_i(x) := \mathbf{1}\{F_{\hat\nu_i}(x)\ge \alpha\}.$
By the same equivalence used in the proof of Theorem~\ref{theorem:identification}, $C_i(x) = \mathbf{1}\{F_{\hat\nu_i}^{-}(\alpha)\le x\}$.
Moreover, $1-\hat G_n^{\mathrm{emp}}(x,\alpha-) = \frac{1}{n}\sum_{i=1}^n C_i(x)$ and therefore
\begin{align*}
\hat q_{\alpha}^{\,\mathrm{MCB}}
&=
\int_{\mathcal I} x \, d\left(1-\hat G_n^{\mathrm{emp}}(x,\alpha-)\right) \\
&=
\frac{1}{n}\sum_{i=1}^n \int_{\mathcal I} x \, dC_i(x) \\
&=
\frac{1}{n}\sum_{i=1}^n F_{\hat\nu_i}^{-}(\alpha) = \hat q_{\alpha}^{\,\mathrm{emp}},
\end{align*}
where the third equality follows because $C_i$ is the distribution function of a point mass at $F_{\hat\nu_i}^{-}(\alpha)$.
\end{proof}

\subsection{Proof of Proposition \ref{prop:mixing_nonid}}

\begin{proof}
For any $m\in\{1,\dots,C\}$ and $y\in\{0,\dots,m\}$,
$$
\mathbb{P} \bigl(Y_i(x)=y\mid m_i=m\bigr)
=
\int_0^1
\binom{m}{y} t^y (1-t)^{m-y}\,dG(x,t).
$$
For fixed $(m,y)$, the function $t\mapsto t^y(1-t)^{m-y}$ is a polynomial in $t$ of degree at most $m$, and hence of degree at most $C$. Therefore each conditional count probability depends only on the first $C$ moments of $G(x,\cdot)$. Since distinct probability distributions on $[0,1]$ can share the same first $C$ moments, the mixing distribution is not identifiable.
\end{proof}

\subsection{Proof of Theorem \ref{theorem:consistency-fixed-m-asymptotics}}

\begin{proof}
Let $A_n$ denote the event that $x\mapsto \hat H_{n,\alpha}(x)$ is a cumulative
distribution function on $\mathcal I$. By Condition~\ref{cond:distribution_function},
$\mathbb P(A_n)\to 1$. On $A_n$, let $Q_{n,\alpha}$ denote the probability
measure with distribution function $\hat H_{n,\alpha}$, and let $Q_\alpha$
denote the probability measure with distribution function $H_\alpha$.

By Condition~\ref{cond:pointwise_consistency} and
Lemma~\ref{lemma:from-pointwise-to-weak}, $Q_{n,\alpha}$ converges weakly in probability to $Q_\alpha$:
$$
Q_{n,\alpha} \overset{P}{\rightsquigarrow} Q_\alpha.
$$
If $\mathcal I$ is bounded, then $x\mapsto x$ is a bounded continuous function
on $\mathcal I$, and therefore
$$
\int_{\mathcal I} x\,d\hat H_{n,\alpha}(x)
=
\int_{\mathcal I} x\,dQ_{n,\alpha}(x)
\overset{P}{\to}
\int_{\mathcal I} x\,dQ_\alpha(x)
=
\int_{\mathcal I} x\,dH_\alpha(x).
$$
If $\mathcal I$ is unbounded but Condition~\ref{cond:uniform_integrability}
holds, then the same conclusion follows from
Lemma~\ref{lemma:convergence-of-moments}. Hence
$$
\hat q_{\alpha}^{\,\mathrm{MCB}}
=
\int_{\mathcal I} x\,d\hat H_{n,\alpha}(x)
\overset{P}{\to}
\int_{\mathcal I} x\,dH_\alpha(x)
=
q_\alpha^\star.
$$
This proves the claim.
\end{proof}

\subsection{Proof of Corollary \ref{corollary:consistency-in-L2}}

\begin{proof}
By the assumptions of the corollary, with probability tending to one,
$t\mapsto \hat q_t^{\,\mathrm{MCB}}$ is a quantile function. Let $\hat\nu_n$
denote the corresponding probability measure. By
Theorem~\ref{theorem:consistency-fixed-m-asymptotics}, for every continuity
point $t$ of the true barycenter quantile function
$t\mapsto F_{\bar\nu_\Pi}^{-}(t)$,
$$
F_{\hat\nu_n}^{-}(t)
=
\hat q_t^{\,\mathrm{MCB}}
=
q_t^\star + o_P(1)
=
F_{\bar\nu_\Pi}^{-}(t)+o_P(1).
$$
By Lemma~\ref{lemma:pointwise-to-weak-quantile}, this implies that $\hat{\nu}_n$ converges weakly in probability to $\bar\nu_\Pi$:
$$
\hat\nu_n \overset{P}{\rightsquigarrow} \bar\nu_\Pi.
$$
As explained in the text before Lemma \ref{lemma:from-pointwise-to-weak} and in the proof of Lemma \ref{lemma:from-pointwise-to-weak}, weak convergence in probability (as defined in Definition \ref{def:weak-convergence-in-probability}) is equivalent to the following: for every subsequence $(n_k)_{k\ge 1}$, there exists a further subsequence $(m_l)_{l\ge 1}$ such that
\begin{equation*}
    P \left( \omega: \hat \nu^\omega_{m_l} \rightsquigarrow \bar\nu_\Pi  \right) = 1.
\end{equation*}

By \textcite[Theorem 6.9]{villani2008optimal}, convergence in the
$d_{W_2}$ topology is equivalent to weak convergence together with convergence
of second moments. We next argue along subsequences.
Hence, it remains to show that, for every subsequence $(n_k)_{k\ge 1}$, there exists a further subsequence $(m_l)_{l\ge 1}$ such that
$$
\int_{\mathcal I} x^2\,d\hat\nu_{m_l}(x)
\to
\int_{\mathcal I} x^2\,d\bar\nu_\Pi(x).
$$

If $\mathcal I$ is bounded, then $x\mapsto x^2$ is a bounded continuous
function on $\mathcal I$. Therefore, weak convergence in probability implies
$$
\int_{\mathcal I} x^2\,d\hat\nu_n(x)
= 
\int_{\mathcal I} x^2\,d\bar\nu_\Pi(x) + o_P(1),
$$
which implies the desired conclusion along subsequences.
If $\mathcal I$ is unbounded, the assumed asymptotic uniform integrability
with respect to $f:x\mapsto x^2$, together with
Lemma~\ref{lemma:convergence-of-moments}, gives the same conclusion.

The results above imply that, for every subsequence $(n_k)_{k\ge 1}$, there exists a further subsequence $(m_l)_{l\ge 1}$ such that
\begin{equation*}
    P \left( \omega: \hat\nu_{m_l}^\omega \rightsquigarrow \bar\nu_\Pi  \text{ and } \int_{\mathcal I} x^2\,d\hat\nu_{m_l}^\omega(x) \to \int_{\mathcal I} x^2\,d\bar\nu_\Pi(x)  \right) = 1.
\end{equation*}
\textcite[Theorem 6.9]{villani2008optimal} now implies that, for every subsequence $(n_k)_{k\ge 1}$, there exists a further subsequence $(m_l)_{l\ge 1}$ such that
$$
d_{W_2}\!\left(\hat\nu_{m_l},\bar\nu_\Pi\right) \to 0,
$$
which immediately implies that $d_{W_2}\!\left(\hat\nu_n,\bar\nu_\Pi\right)\overset{P}{\to}0$.
\end{proof}

\subsection{Proof of Theorem \ref{theorem:asymptotic-distribution}}

\begin{proof}
    Without loss of generality, assume that $\mathcal{I} = \mathbb{R}$.
    For any distribution function $F$ for which $\int_{\mathcal{I}} |x| \, dF < \infty$, we have that 
    \begin{equation*}
        \int_{\mathcal{I}} x \, dF = \int_0^{+\infty} 1 - F(t) \, dt - \int_{0}^{+\infty} F(-t) \, dt.
    \end{equation*}
    From the above display and Condition~\ref{cond:AL-cdf-integrability} it follows that, with probability tending to one,
    \begin{equation*}
        \int_{\mathcal{I}} x \, d \hat H_{n,\alpha} - \int_{\mathcal{I}} x \, dH_{\alpha}
        =
        - \int_0^{+\infty} \{\hat H_{n,\alpha}(x) - H_\alpha(x)\} \, dx
        - \int_0^{+\infty} \{\hat H_{n,\alpha}(-x) - H_{\alpha}(-x)\} \, dx.
    \end{equation*}
    Multiplying both sides by $\sqrt{n}$ and using Condition~\ref{cond:AL-linear-expansion}, we obtain
    \begin{align*}
        \sqrt{n} \left( \int x \, d \hat H_{n,\alpha} - \int x \, dH_{\alpha}  \right)
        & =
        - \int_0^{+\infty}
        \left\{
        \frac{1}{\sqrt{n}}\sum_{i = 1}^n \psi_{\alpha}(O_i; x)
        + r_{\alpha,n}(x)
        \right\} \, dx  \\
        &\quad
        - \int_0^{+\infty}
        \left\{
        \frac{1}{\sqrt{n}}\sum_{i = 1}^n \psi_{\alpha}(O_i; -x)
        + r_{\alpha,n}(-x)
        \right\} \, dx \\
        & =
        - \frac{1}{\sqrt{n}} \int_{-\infty}^{+\infty}
        \sum_{i = 1}^n \psi_{\alpha}(O_i; x) \, dx
        - \int_{-\infty}^{+\infty} r_{\alpha,n}(x) \, dx \\
        & =
        - \frac{1}{\sqrt{n}} \int_{-\infty}^{+\infty}
        \sum_{i = 1}^n \psi_{\alpha}(O_i; x) \, dx
        + o_{P}(1).
    \end{align*}
    The second equality follows from the integrability conditions on
    $x \mapsto r_{\alpha,n}(x)$ and $x \mapsto \psi_{\alpha}(O; x)$ where $x \mapsto \psi_{\alpha}(O; x)$ is integrable with probability one because $E\left[ \int_{-\infty}^{+\infty} | \psi_{\alpha}(O; x) | \, dx \right] < \infty$ by Condition~\ref{cond:AL-linear-expansion}. The
    final equality follows because
    $\int_{\mathcal I} |r_{\alpha,n}(x)|\,dx=o_P(1)$.
    Moving the sum outside the integral gives
    \begin{align*}
        \sqrt{n} \left( \int x \, d \hat H_{n,\alpha} - \int x \, dH_{\alpha}  \right)
        &=
        \frac{1}{\sqrt n}\sum_{i=1}^n
        \left\{
        -\int_{-\infty}^{+\infty}\psi_\alpha(O_i;x)\,dx
        \right\}
        +o_P(1) \\
        &=
        \frac{1}{\sqrt n}\sum_{i=1}^n \tilde\psi_\alpha(O_i)+o_P(1).
    \end{align*}

    Condition~\ref{cond:AL-linear-expansion} also requires that $E\left[ \int_{-\infty}^{+\infty} | \psi_{\alpha}(O; x) | \, dx \right] < \infty$, which allows us to apply Fubini's theorem to obtain $E\left[ \tilde\psi_\alpha(O) \right] =  0$ from $E[\psi_\alpha(O;x)] = 0$.
    The asymptotic normality now follows from Condition~\ref{cond:AL-integrated-if} and the central limit theorem.
\end{proof}

\subsection{Proof of Corollary \ref{corollary:uniform-convergence}}

\begin{proof}
    This is a standard application of uniform asymptotic linearity and weak convergence of the empirical process indexed by the influence functions.

    \textbf{Uniform asymptotic linearity.}~~~
    By condition (i) of the corollary, Theorem~\ref{theorem:asymptotic-distribution} applies uniformly to all $\alpha \in [\alpha_{\min},\alpha_{\max}]$. Specifically, the following asymptotic linearity representation holds for all $\alpha \in [\alpha_{\min},\alpha_{\max}]$ with probability tending to one:
    \begin{equation*}
        \sqrt{n}\bigl(\hat q_\alpha^{\,\mathrm{MCB}} - q_\alpha^\star\bigr) = \frac{1}{\sqrt{n}} \sum_{i=1}^n \tilde\psi_\alpha(O_i) + \tilde r_{\alpha,n}
    \end{equation*}
    where $\tilde r_{\alpha,n} := -\int_{\mathcal{I}} r_{\alpha,n}(x) \, dx$ with $\sup_{\alpha \in [\alpha_{\min},\alpha_{\max}]} \left| \tilde r_{\alpha,n} \right| = o_P(1)$ by condition (ii).
    
    \textbf{Weak convergence of empirical process.}~~~
    By condition (iii), the class $\{\tilde\psi_\alpha : \alpha \in [\alpha_{\min},\alpha_{\max}]\}$ is P-Donsker, which implies that
    \begin{equation*}
        \mathbb{G}_n(\tilde\psi_\alpha) := \frac{1}{\sqrt{n}}\sum_{i=1}^n \left[\tilde\psi_\alpha(O_i) - \mathbb{E}[\tilde\psi_\alpha(O)]\right]
    \end{equation*}
    converges weakly in $\ell^\infty([\alpha_{\min},\alpha_{\max}])$ to a centered Gaussian process $\mathbb{G}$ with covariance function
    \begin{equation*}
        \Gamma(\alpha,\alpha') := \mathbb{E}\left[\mathbb{G}(\tilde\psi_\alpha)\mathbb{G}(\tilde\psi_{\alpha'})\right] = \mathbb{E}\left[\tilde\psi_\alpha(O)\,\tilde\psi_{\alpha'}(O)\right] - \mathbb{E}\left[\tilde\psi_\alpha(O)\right]\mathbb{E}\left[\tilde\psi_{\alpha'}(O)\right] = \mathbb{E}\left[\tilde\psi_\alpha(O)\,\tilde\psi_{\alpha'}(O)\right],
    \end{equation*}
    where the last equality follows from the mean-zero property of the influence function $\tilde\psi_\alpha$.
    
    \textbf{Conclusion.}~~~
    From the above two steps, we have that,
    \begin{equation*}
        \sqrt{n}\bigl(\hat q_\alpha^{\,\mathrm{MCB}} - q_\alpha^\star\bigr) = \mathbb{G}_n(\tilde\psi_\alpha) + o_P(1),
    \end{equation*}
    where (i) $\mathbb{G}_n(\tilde\psi_\alpha)_{\alpha \in [\alpha_{\min},\alpha_{\max}]}$
    converges weakly in $\ell^\infty([\alpha_{\min},\alpha_{\max}])$ to the Gaussian process $\mathbb{G}$ with covariance $\Gamma(\alpha,\alpha')$ and (ii) the remainder term is $o_P(1)$ uniformly in $\alpha \in [\alpha_{\min},\alpha_{\max}]$. 
    Hence, by Slutsky's theorem \parencite[Theorem 18.10(iv)]{van2000asymptotic}, we have that $\sqrt{n}\bigl(\hat q_\alpha^{\,\mathrm{MCB}} - q_\alpha^\star\bigr)$ converges weakly in $\ell^\infty([\alpha_{\min},\alpha_{\max}])$ to the same Gaussian process $\mathbb{G}$.
\end{proof}

\section{Additional data applications}\label{appendix:additional-data-applications}

\subsection{Public versus Catholic schools}

\paragraph{Data:} We use the High School and Beyond dataset as presented in the \textcite{raudenbush2002hierarchical} hierarchical data analysis textbook. The data consist of a nationally representative sample of U.S. public and Catholic schools from the 1982 survey. The dataset includes 7,185 students nested within 160 schools, consisting of 90 public schools and 70 Catholic schools. For each school, the data include an indicator of school type (public or Catholic) as well as the school’s mean student socioeconomic status. At the student level, the dataset includes math achievement scores and an indicator of whether the student identifies as a minority. Figure~\ref{fig:school_boxplot} summarizes the school-level sample sizes by school type.

\begin{figure}[!htb]
    \centering
    \includegraphics[width=\textwidth]{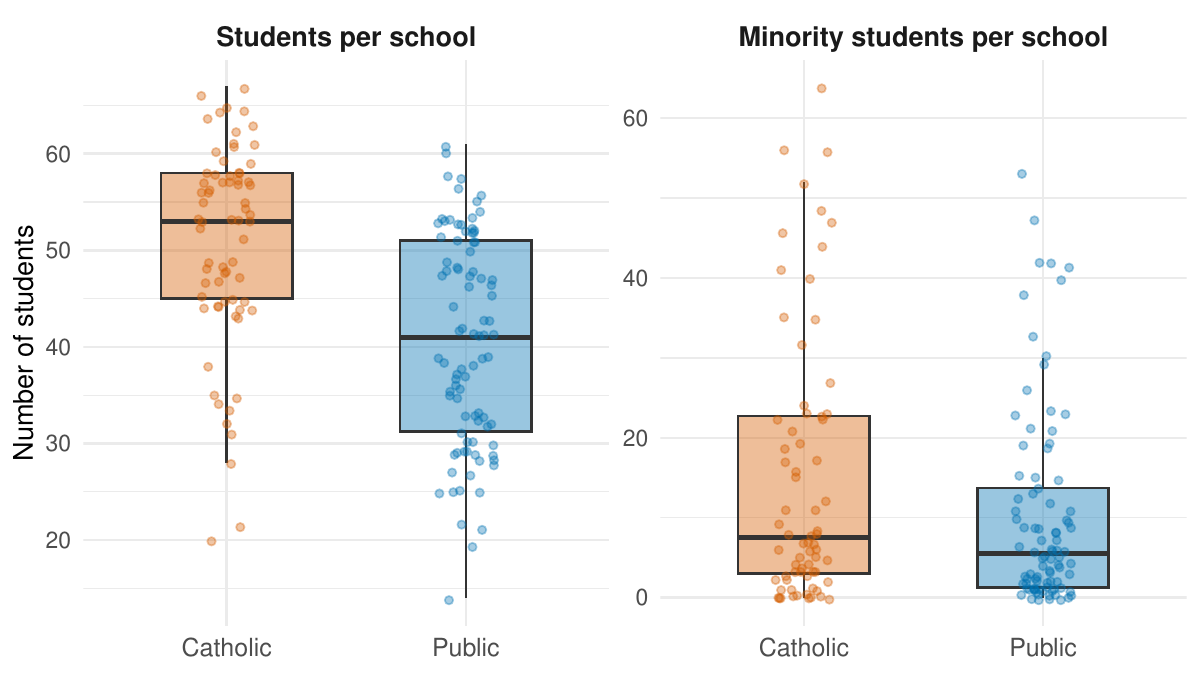}
    \caption{Distribution of the number of students sampled per school, overall and among minority students, stratified by school type. Points represent individual schools.}
    \label{fig:school_boxplot}
\end{figure}

\paragraph{Question of interest:} How does the barycenter distribution of math scores for public schools compare with that for Catholic schools?

\paragraph{Analysis:} Using the MCB estimator, we estimate the Wasserstein barycenter of the distribution of math scores for Catholic schools and compare it with that of public schools. Differences between the two groups are assessed by examining the difference in their Wasserstein barycenter quantile functions. We then repeat this analysis subsetting for minority students. Our primary MCB specification uses splines to estimate the mixing distribution ($df = 5$); for comparison, we also report results from the Beta-configured MCB estimator and the empirical barycenter.

\begin{figure}[!htb]
    \centering
    \includegraphics[width=\textwidth]{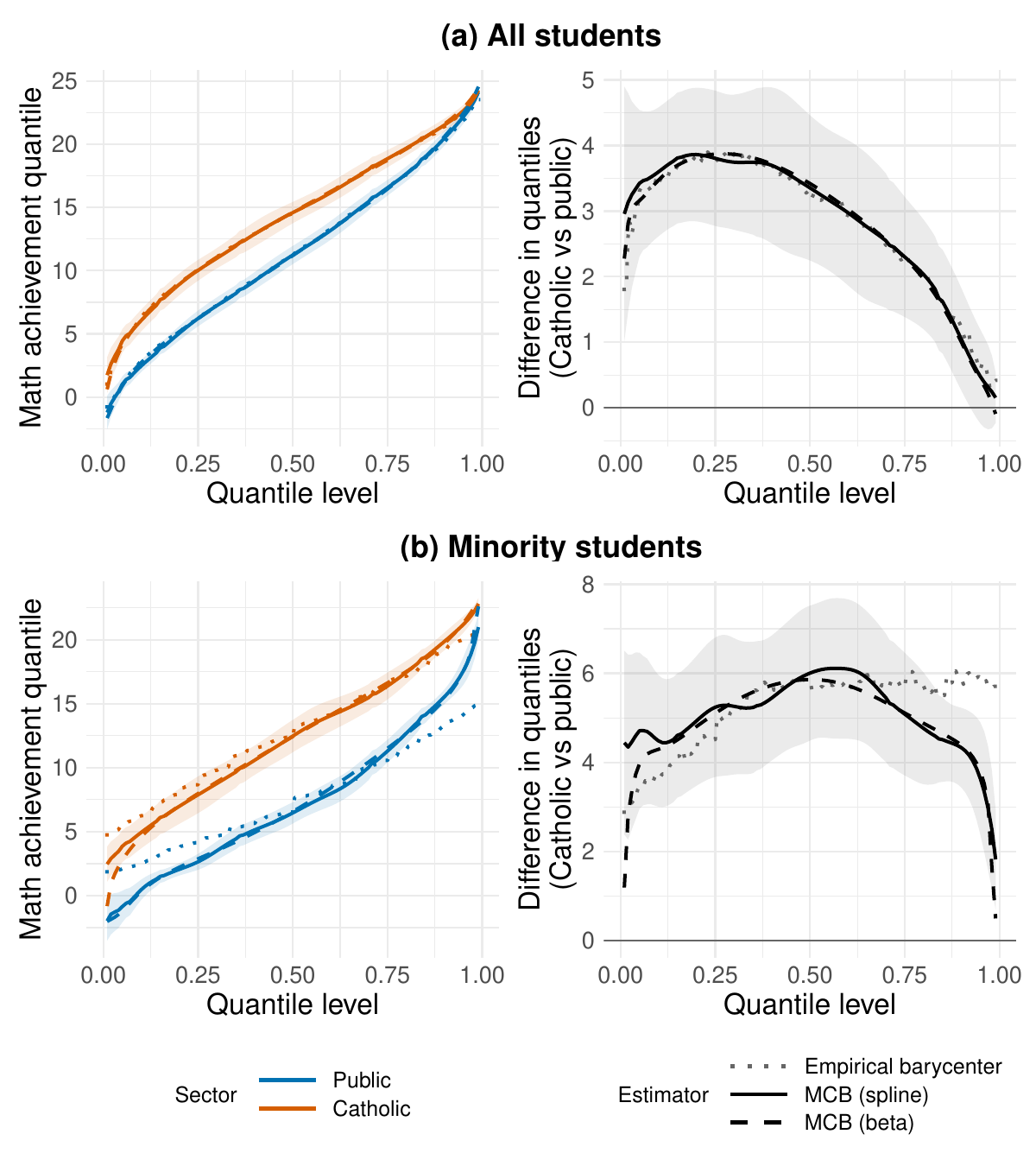}
    \caption{ Estimated Wasserstein barycenters of math achievement score distributions for public and Catholic schools, shown for all students and for minority students. Left panels display the estimated quantile functions by school type, while right panels show the difference in quantile functions comparing Catholic to public schools. Results are shown for the MCB estimators (spline and Beta configurations) and the empirical barycenter. The shaded areas represent pointwise confidence intervals based on the spline-configured MCB estimator.}
    \label{fig:school_results} 
\end{figure}

\paragraph{Results:} Results are shown in Figure \ref{fig:school_results}. For the full student sample, the estimated barycenter quantile function for Catholic schools lies above that for public schools for all quantile levels. 
The quantile difference plot  suggests that the Catholic-public gap is concentrated among lower-to-middle achieving students: for example, the 25th percentile of the Catholic-school distribution is roughly 4 points higher than the corresponding percentile for public schools, whereas the difference is minimal among students near the 99th percentile.
The MCB spline and Beta estimators produce very similar results, and both align closely with the empirical barycenter.

Among minority students, the Catholic-public difference is larger across nearly all quantile levels. Using the spline-based MCB estimator, the differences are most pronounced in the middle of the distribution, reaching their largest values around the 0.50-0.60 quantiles before tapering off toward the upper tail. The spline and Beta MCB estimators generally closely agree with one another, but both differ from the empirical barycenter, particularly in the upper tail where the MCB estimates for the difference-in-quantiles are lower.

\subsection{Ticks on grouse}

\paragraph{Data:} To study parasite aggregation among hosts, \textcite{elston2001analysis} collected data on sheep tick burdens among red grouse chicks sampled in the Glas Choille study area in Scotland. The sample includes 403 chicks from 118 broods observed across three years: 1995, 1996, and 1997, with a median of 3 chicks per brood (range: 1-10). For each chick, the response variable is the number of sheep ticks counted, with counts taken from the chick's head. Brood-level information includes the year of observation and altitude.

\paragraph{Question of interest:} Does the barycenter distribution of tick counts differ between broods captured at lower versus higher altitudes?

\paragraph{Analysis:} We first divide the broods into two groups based on altitude, classifying broods as low-altitude or high-altitude depending on whether their capture altitude is below or above the sample mean. Using the MCB estimator, we estimate the barycenter of the distribution of tick counts for broods in the high-altitude group and compare it with that of broods in the low-altitude group. Differences between the two groups are assessed by examining the difference in their estimated quantile functions.
Our primary MCB specification uses splines to estimate the mixing distribution ($df = 5$); for comparison, we also report results from the Beta-configured MCB estimator and the empirical barycenter.

\begin{figure}[!htb]
    \centering
    \includegraphics[width=\textwidth]{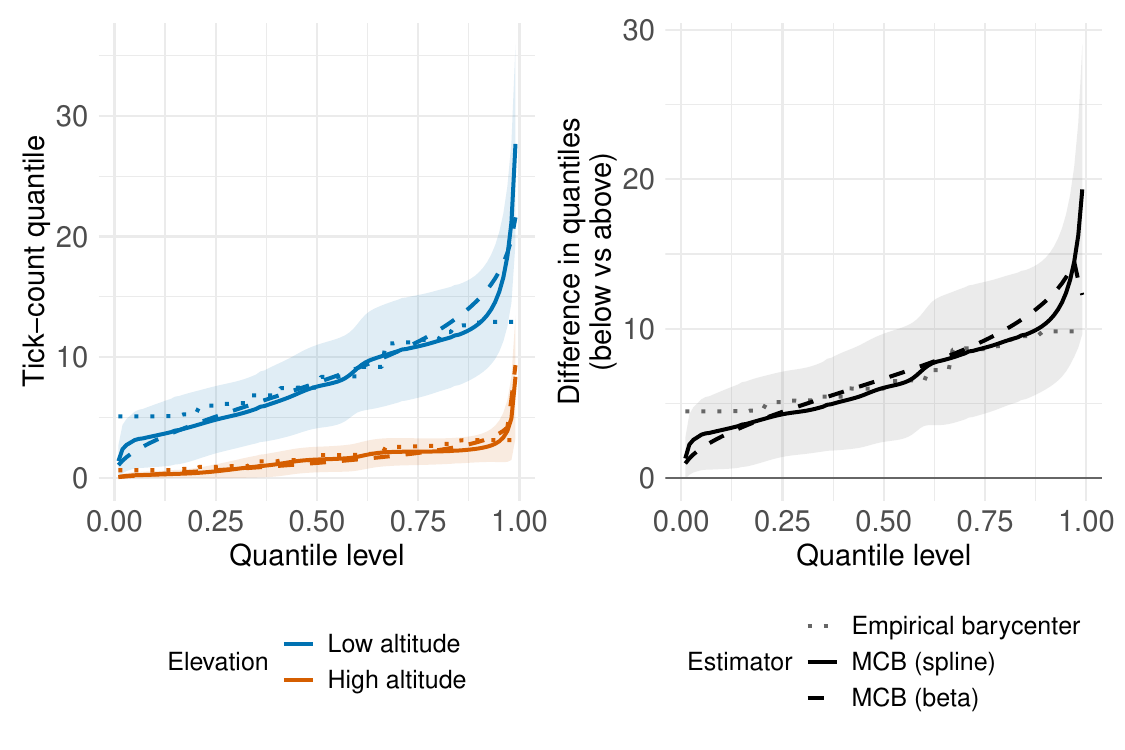}
    \caption{Estimated Wasserstein barycenters of tick-count distributions for broods captured at low and high altitudes. The left panel displays the estimated quantile functions by altitude group, while the right panel shows the difference in quantile functions comparing low-altitude broods to high-altitude broods. Results are shown for the MCB estimators (spline and Beta configurations) and the empirical barycenter. The shaded areas represent pointwise confidence intervals based on the spline-configured MCB estimator.}
    \label{fig:grouse_ticks}
\end{figure}

\paragraph{Results:} Results are shown in Figure \ref{fig:grouse_ticks}. The estimated barycenters suggest substantially higher tick burdens among broods captured at lower altitudes compared with those captured at higher altitudes. The quantile difference plot shows that this difference increases across quantile levels, indicating that the altitude-related difference is most pronounced among chicks with the highest tick counts. The spline and Beta MCB estimators closely agree with one another. Compared with the empirical barycenter, both MCB estimators produce smaller estimated differences in the lower tail and larger estimated differences in the upper tail.

\subsection{Radon exposure in Minnesota}

\paragraph{Background} To study variation in indoor radon concentrations, \textcite{price1996bayesian} analyzed radon measurements from households across Minnesota counties. The radon dataset consists of 919 households from 85 counties, with a median of four household measurements per county (range: 1--100) \parencite{gelman2007data}. For each household, the response variable is the log radon measurement in log picoCuries per liter, and household-level information includes whether the measurement was taken in the basement or on the first floor. County-level information includes average soil uranium content.

\paragraph{Question of interest:} Is the barycenter of the household-level radon distribution within a county  associated with the county's soil uranium concentration?

\paragraph{Analysis:} We estimate the conditional barycenter of radon measurements given a county's mean log-uranium level using the kernel-regression conditional barycenter estimator described in Example~\ref{sec:conditional_bary}.
Our primary MCB specification uses splines to estimate the mixing distribution ($df = 5$); for comparison, we also report results from the empirical barycenter. 

\paragraph{Results:} Figure~\ref{fig:radon} displays the estimated conditional barycenter of household log-radon distributions at five county log-uranium levels. The estimated barycenter shifts upward as county log-uranium increases, indicating higher household radon levels in counties with greater soil uranium content.

\begin{figure}[!htb]
    \centering
    \includegraphics[width=\textwidth]{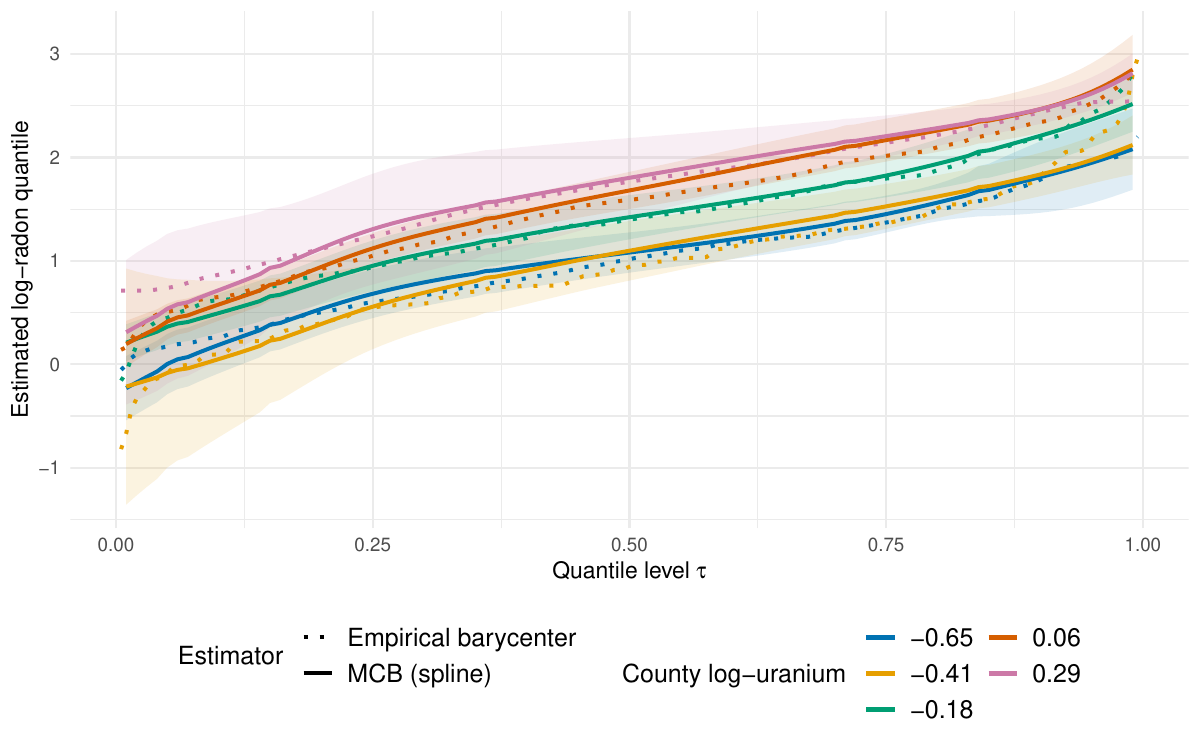}
    \caption{Estimated conditional barycenters of household log-radon distributions across
county log-uranium levels. The curves show estimated log-radon quantile
functions at five fixed values of county log-uranium. Results are shown for
the empirical barycenter and for the MCB estimator with a spline configuration. Shaded bands denote pointwise confidence intervals for the
spline-configured MCB estimator.
    }
    \label{fig:radon}
\end{figure}

\section{Technical tools for weak convergence}

This section develops the technical tools used in the proofs of the consistency results in Theorem~\ref{theorem:consistency-fixed-m-asymptotics} and Corollary~\ref{corollary:consistency-in-L2}.
Since our estimators are naturally defined through estimated distribution functions or quantile functions, we need a way to translate pointwise convergence statements into convergence of the induced random probability measures. In section \ref{appendix:deterministic_sequences}, we review deterministic results showing that, for monotone functions such as CDFs and quantile functions, convergence on a countable dense set of continuity points is enough to imply convergence at all continuity points. Then, in section \ref{appendix:weak-convergence-in-prob}, we extend these ideas to random probability measures and define weak convergence in probability. The main consequence is that pointwise convergence in probability of either the CDFs or the quantile functions at continuity points is equivalent to weak convergence in probability of the associated measures. Finally, we state a uniform integrability condition under which weak convergence in probability also implies convergence of unbounded
moments.

\subsection{Deterministic sequences of distribution functions and quantile functions}\label{appendix:deterministic_sequences}

\begin{lemma}\label{lemma:weak-convergence-countable-pointwise}
    Let $(F_n)_{n \ge 1}$ be a deterministic sequence of cumulative distribution functions on $\mathbb{R}$ and $F$ a distribution function on $\mathbb{R}$. 
    There exists a countable set of continuity points of $x \mapsto F(x)$ that is dense in $\mathbb{R}$, which we denote by $\mathcal{D} \subset \mathbb{R}$.
    If $F_n(x) \to F(x)$ for every $x \in \mathcal{D}$, then $F_n(x) \to F(x)$ for every continuity point of $x \mapsto F(x)$. 
\end{lemma}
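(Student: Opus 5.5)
The argument has two parts: first the existence of the countable dense set $\mathcal D$, then a monotonicity sandwich.

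\textbf{Existence of $\mathcal D$.} Since $F$ is nondecreasing, its discontinuity set $J$ is at most countable. For each rational $r$ with $J\neq\emptyset$, or in general, the complement $\mathbb R\setminus J$ meets every nonempty open interval, because an interval is uncountable. So for each pair of rationals $p<q$ we pick a continuity point in $(p,q)$. The collection of these picks is countable, and it is dense because every open interval contains some rational interval $(p,q)$. This collection is $\mathcal D$.

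\textbf{Convergence at every continuity point.} Fix a continuity point $x$ of $F$ and let $\varepsilon>0$. By continuity at $x$, there is $\delta>0$ with $|F(y)-F(x)|<\varepsilon$ whenever $|y-x|<\delta$. Density of $\mathcal D$ gives points $d_1\in(x-\delta,x)\cap\mathcal D$ and $d_2\in(x,x+\delta)\cap\mathcal D$. Monotonicity of each $F_n$ then yields
$$
F_n(d_1)\le F_n(x)\le F_n(d_2).
$$
Letting $n\to\infty$ and using convergence on $\mathcal D$ gives
$$
F(x)-\varepsilon<F(d_1)\le\liminf_n F_n(x)\le\limsup_n F_n(x)\le F(d_2)<F(x)+\varepsilon.
$$
Since $\varepsilon$ is arbitrary, $F_n(x)\to F(x)$.

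There is no real obstacle here. The only point needing care is that the sandwich requires $\mathcal D$ to lie on both sides of $x$, arbitrarily close to it. This is exactly what density provides; continuity of $F$ at the points $d_i$ themselves is not used in this step.
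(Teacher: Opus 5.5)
Your proposal is correct and follows essentially the same route as the paper: countability of the discontinuity set of $F$ lets you pick a continuity point in each rational interval, and then the monotone sandwich $F_n(d_1)\le F_n(x)\le F_n(d_2)$ with $d_1<x<d_2$ in $\mathcal D$ close to $x$ finishes the argument. Your final chain $F(x)-\varepsilon<F(d_1)\le\liminf_n F_n(x)\le\limsup_n F_n(x)\le F(d_2)<F(x)+\varepsilon$ is slightly tidier than the paper's version. The paper only bounds $\limsup_n F_n(x)-\liminf_n F_n(x)$ by $F(u)-F(l)$, which continuity makes $<2\epsilon$ rather than $<\epsilon$, whereas your chain directly identifies the limit as $F(x)$.
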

\begin{proof}
    \textbf{Part 1: Existence of $\mathcal{D}$.}~~
    Note that the set of discontinuity points of $x \mapsto F(x)$, denoted by $C$, is at most countable because $F$ is non-decreasing càdlàg and bounded. If we remove these points from $\mathbb{R}$, the remaining set of points $\mathbb{R} \setminus C$ is dense in $\mathbb{R}$.\footnote{Assume for contradiction that $\mathbb{R} \setminus C$ is not dense in $\mathbb{R}$. Then there exists a $x \in C$ and $\delta > 0$ such that $B_{\delta}(x) \cap (\mathbb{R} \setminus C) = \emptyset$. This implies that $B_{\delta}(x) \subset C$, which is a contradiction because $C$ is countable and thus cannot contain a non-empty open interval.}
    A countable subset of $\mathbb{R} \setminus C$ can now be constructed as follows:
    Let $\mathcal{B}$ be the collection of open intervals with rational endpoints. This set contains a countable number of intervals, denoted by $(B_n)_{n \ge 1}$. Since $\mathbb{R} \setminus C$ is dense, $B_n \cap (\mathbb{R} \setminus C)$ is non-empty for every $n \in \mathbb{N}$. For every $n \in \mathbb{N}$, let $d_n$ be an element of $B_n \cap (\mathbb{R} \setminus C)$. 
    The set $\{ d_n : n \ge 1\}$ is countable and dense in $\mathbb{R}$.

    \textbf{Part 2: Pointwise convergence at all continuity points.}~~
    Let $x$ be a continuity point of $x \mapsto F(x)$.
    Hence, for every $\epsilon > 0$, there exists a $\delta > 0$ such that $|F(y) - F(x)| < \epsilon$ for every $y \in B_{\delta}(x)$. 
    Now choose $l < x$ and $u > x$ such that $u, l \in \mathcal{D} \cap B_{\delta}(x)$ for $\mathcal{D}$ the set mentioned in the first part of the lemma.

    Because $F_n$ is non-decreasing for all $n \ge 1$, we have that 
    \begin{equation*}
        F_n(l) \le F_n(x) \le F_n(u).
    \end{equation*}
    We consequently have that
    \begin{equation*}
        F(l) = \lim_{n \to \infty} F_n(l) \le \liminf_{n \to \infty} F_n(x) \le \limsup_{n \to \infty} F_n(x) \le \lim_{n \to \infty} F_n(u) = F(u),
    \end{equation*}
    where the equalities follow from $u, l \in \mathcal{D}$. By definition of $u, l$, we have that $|F(l) - F(u)| < \epsilon$. 
    Hence, from the above display follows that
    \begin{equation*}
        \left| \limsup_{n \to \infty} F_n(x) - \liminf_{n \to \infty} F_n(x)  \right| \le \left| F(u) - F(l) \right| < \epsilon.
    \end{equation*}
    Because $\epsilon$ was arbitrary, this implies that $\liminf_{n \to \infty} F_n(x) = \limsup_{n \to \infty} F_n(x) = \lim_{n \to \infty} F_n(x) = F(x)$.
\end{proof}

\begin{lemma}\label{lemma:weak-convergence-countable-pointwise-quantile-function}
    Let $(Q_n)_{n \ge 1}$ be a deterministic sequence of quantile functions and $Q$ a quantile function. 
    There exists a countable set of continuity points of $t \mapsto Q(t)$ that is dense in $(0, 1)$, which we denote by $\mathcal{D} \subset (0, 1)$.
    If $Q_n(t) \to Q(t)$ for every $t \in \mathcal{D}$, then this implies that $Q_n(t) \to Q(t)$ for every continuity point of $t \mapsto Q(t)$. 
\end{lemma}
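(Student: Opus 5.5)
The proof mirrors Lemma~\ref{lemma:weak-convergence-countable-pointwise}, with quantile functions in place of distribution functions. The two ingredients there were (1) a countable dense set of continuity points and (2) a monotone sandwiching argument; both carry over to $t \mapsto Q(t)$ on $(0,1)$ because quantile functions are non-decreasing. The only changes are that quantile functions are left-continuous rather than right-continuous, and that the domain is the open interval $(0,1)$ rather than $\mathbb{R}$.

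\textbf{Part 1: existence of $\mathcal D$.} Since $Q$ is non-decreasing, its set $C$ of discontinuity points in $(0,1)$ is at most countable. This holds because each jump contains a distinct rational in $(Q(t-),Q(t+))$; finiteness of $Q$ on $(0,1)$ is used here, and no global boundedness is needed. Then $(0,1)\setminus C$ is dense in $(0,1)$, since a countable set cannot contain a nonempty open interval. Enumerate the open intervals with rational endpoints contained in $(0,1)$ as $(B_k)_{k\ge1}$, and pick $d_k \in B_k \cap ((0,1)\setminus C)$. Then $\mathcal D = \{d_k : k\ge1\}$ is countable, dense in $(0,1)$, and consists of continuity points of $Q$.

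\textbf{Part 2: extension to all continuity points.} Fix a continuity point $t\in(0,1)$ of $Q$ and $\epsilon>0$. Choose $\delta>0$ such that $(t-\delta,t+\delta)\subset(0,1)$ and $|Q(s)-Q(t)|<\epsilon$ for all $s$ in this interval. By density, pick $l,u\in\mathcal D$ with $t-\delta<l<t<u<t+\delta$. Monotonicity of each $Q_n$ gives $Q_n(l)\le Q_n(t)\le Q_n(u)$. Taking limits along $\mathcal D$ yields
\[
Q(l)\le\liminf_n Q_n(t)\le\limsup_n Q_n(t)\le Q(u).
\]
Both $Q(l)$ and $Q(u)$ lie within $\epsilon$ of $Q(t)$, so $\limsup_n Q_n(t)$ and $\liminf_n Q_n(t)$ lie within $\epsilon$ of $Q(t)$ and within $2\epsilon$ of each other. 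Since $\epsilon$ is arbitrary, $Q_n(t)\to Q(t)$.

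I do not expect a real obstacle. The one point needing care is that the neighbourhood must stay inside the open interval $(0,1)$, where $Q$ is finite. A quantile function may be unbounded near $0$ or $1$, so $\delta$ has to be chosen small enough that $l$ and $u$ remain in $(0,1)$. With that, the sandwich argument involves only finite values.
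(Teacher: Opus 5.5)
Your proof is correct and follows the paper's argument: a countable dense set of continuity points picked from rational intervals in $(0,1)$, then the monotone sandwich $Q_n(l)\le Q_n(t)\le Q_n(u)$ with $l,u\in\mathcal D$ near $t$. Your bookkeeping is slightly tighter than the paper's. You keep the neighbourhood inside $(0,1)$, you note that $|Q(u)-Q(l)|<2\epsilon$ rather than $\epsilon$, and you place both $\liminf_n Q_n(t)$ and $\limsup_n Q_n(t)$ within $\epsilon$ of $Q(t)$, instead of only showing that their difference is small.
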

\begin{proof}
    The proof has the same structure as the proof of Lemma \ref{lemma:weak-convergence-countable-pointwise}.
    
    \textbf{Part 1: Existence of $\mathcal{D}$.}~~
    The set of discontinuity points of $t \mapsto Q(t)$, denoted by $C$, is countable because $Q$ is a non-decreasing function on the unit interval. If we remove these points from $(0, 1)$, the remaining set of points $(0, 1) \setminus C$ is dense in $(0, 1)$.
    A countable subset of $(0, 1) \setminus C$ can now be constructed as follows:
    Let $\mathcal{B}$ be the collection of open intervals with rational endpoints in $(0, 1)$. This set contains a countable number of intervals, denoted by $(B_n)_{n \ge 1}$. Since $(0, 1) \setminus C$ is dense, $B_n \cap ((0, 1) \setminus C)$ is non-empty for every $n \in \mathbb{N}$. For every $n \in \mathbb{N}$, let $d_n$ be an element of $B_n \cap ((0, 1) \setminus C)$. 
    The set $\{ d_n : n \ge 1\}$ is countable and dense in $(0, 1)$.

    \textbf{Part 2: Pointwise convergence at all continuity points}~~
    Let $t$ be a continuity point of $t \mapsto Q(t)$.
    Hence, for every $\epsilon > 0$, there exists a $\delta > 0$ such that $|Q(y) - Q(t)| < \epsilon$ for every $y \in B_{\delta}(t)$. 
    Now choose $l < t$ and $u > t$ such that $u, l \in \mathcal{D} \cap B_{\delta}(t)$ for $\mathcal{D}$ the set mentioned in the first part of the lemma. 

    Because $Q_n$ is non-decreasing for all $n \ge 1$, we have that 
    \begin{equation*}
        Q_n(l) \le Q_n(t) \le Q_n(u).
    \end{equation*}
    We consequently have that
    \begin{equation*}
        Q(l) = \lim_{n \to \infty} Q_n(l) \le \liminf_{n \to \infty} Q_n(t) \le \limsup_{n \to \infty} Q_n(t) \le \lim_{n \to \infty} Q_n(u) = Q(u),
    \end{equation*}
    where the equalities follow from $u, l \in \mathcal{D}$. By the definition of $u, l$, we have that $|Q(l) - Q(u)| < \epsilon$. Hence, from the above display follows that
    \begin{equation*}
        \left| \limsup_{n \to \infty} Q_n(t) - \liminf_{n \to \infty} Q_n(t)  \right| \le \left| Q(u) - Q(l) \right| < \epsilon.
    \end{equation*}
    Because $\epsilon$ was arbitrary, this implies that $\liminf_{n \to \infty} Q_n(t) = \limsup_{n \to \infty} Q_n(t) = \lim_{n \to \infty} Q_n(t) = Q(t)$.
\end{proof}

\subsection{Random probability measures and weak convergence in probability}\label{appendix:weak-convergence-in-prob}

In what follows, we will focus on weak convergence in probability. 
We start by summarizing weak convergence of a deterministic sequence of probability measures and then extend this to weak convergence in probability of a sequence of random probability measures. The latter will largely follow \textcite{schmon_large-sample_2021}; although, we present some novel technical results\footnote{We have not found these statements and proofs in the literature. However, we do not claim that we are the first to present these results, especially since these results can be proven along the same lines as some well-known results.}.

\subsubsection{Weak convergence of deterministic sequences of measures}

Let $\left(\mu_n\right)_{n \ge 1}$ be a deterministic sequence of probability measures on $\mathcal{I} \subset \mathbb{R}$, and let $F_{\mu_n}$ be the corresponding cumulative distribution functions (CDFs). One definition of weak convergence to $\mu$, with CDF $F_{\mu}$, is as follows:
\begin{equation}
    F_{\mu_n}(x) \to F_{\mu}(x) \quad \text{for all continuity points of } x \mapsto F_{\mu}(x).
\end{equation}
By the Portmanteau Lemma, this is equivalent to the following definition:
\begin{equation*}
    \mathbb{E}_{\mu_n}f(X) \to \mathbb{E}_{\mu} f(X) \quad \forall \; f \in C_b(\mathcal{I}),
\end{equation*}
where $C_b(\mathcal{I})$ is the set of bounded continuous functions on $\mathcal{I}$.

\subsubsection{Random probability measures}

Before we define weak convergence in probability of a sequence of random probability measures, we more rigorously define random probability measures. We restate the definition of \textcite{schmon_large-sample_2021}, but specialized to random probability measures on subsets of the real line, denoted by $\mathcal{I} \subset \mathbb{R}$ for $\mathcal{B}(\mathcal{I})$ the corresponding Borel $\sigma$-algebra. Further let $\mathcal{P}(\mathcal{I})$ be the set of probability measures on the measurable space $(\mathcal{I}, \mathcal{B}(\mathcal{I}))$.

\begin{definition}[Random probability measure]
    A random probability  measure is a map $\nu: \Omega \times \mathcal{B}(\mathcal{I}) \to [0, 1]$ such that for every $B \in \mathcal{B}(\mathcal{I})$ the map $\omega \mapsto \nu(\omega, B) = \nu^{\omega}(B)$ is measurable while $\nu(\omega, \cdot) \in \mathcal{P}(\mathcal{I})$ for almost every $\omega \in \Omega$.
\end{definition}

By \textcite[Proposition 3.3]{crauel_random_2002}, we have that $\omega \mapsto \mathbb{E}_{\nu^{\omega}}f := \int_{\mathcal{I}} f \, d\nu^{\omega}$ for $f \in C_b(\mathcal{I})$ is measurable if $\nu$ is a random probability measure.

\subsubsection{Weak convergence in probability}

Weak convergence in probability is defined below for random probability measures on (a subset of) the real line. This can easily be extended to more general probability measures, but this is not required for our purposes.

\begin{definition}[Weak convergence in probability]\label{def:weak-convergence-in-probability}
    Let $\mathcal{P}(\mathcal{I})$ be the set of all probability measures on $(\mathcal{I}, \mathcal{B}(\mathcal{I}))$ and let $(\nu_n)_{n \ge 1}$ be a sequence of random probability measures in $\mathcal{P}(\mathcal{I})$. 
    This sequence converges weakly in probability to $\mu \in \mathcal{P}(\mathcal{I})$ if 
    \begin{equation*}
    \mathbb{E}_{\nu_n}f(X) = \mathbb{E}_{\mu} f(x) + o_P(1) \quad \forall \; f \in C_b(\mathcal{I}),
\end{equation*}
where $C_b(\mathcal{I})$ is the set of bounded continuous functions on $\mathcal{I}$.
\end{definition}
Note that, in the above definition, $\mathbb{E}_{\nu_n} f(X)$ is a random variable because $\omega \mapsto \mathbb{E}_{\nu_n^{\omega}} f(X)$ is measurable (which follows from $\nu_n$ being a random probability measure, see previous subsection).

Definition \ref{def:weak-convergence-in-probability} is different from the definition of weak convergence in probability in \textcite{schmon_large-sample_2021}. However, \textcite[Theorem 1 in Supplementary Material]{schmon_large-sample_2021} show that their definition is equivalent to ours.
The following lemma is a useful extension of the Portmanteau lemma, which complements \textcite[Theorem 1 in Supplementary Material]{schmon_large-sample_2021}.
\begin{lemma}\label{lemma:from-pointwise-to-weak}
    Let $(\nu_n)_{n \ge 1}$ be a sequence of random probability measures and $(F_{\nu_n})_{n \ge 1}$ the corresponding random cumulative distribution functions.
    Further let $\mu$ and $F_{\mu}$ be a probability measure and corresponding cumulative distribution function on $\mathcal{I}$.
    Then, $F_{\nu_n}(x) = F_{\mu}(x) + o_P(1)$ for every $x$ a continuity point of $x \mapsto F_{\mu}(x)$ if and only if $\nu_n$ converges weakly in probability to $\mu$.
\end{lemma}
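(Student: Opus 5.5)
We will prove both directions through the subsequence characterization of convergence in probability. A sequence of real random variables $Z_n$ satisfies $Z_n = o_P(1)$ if and only if every subsequence has a further subsequence along which $Z_n\to 0$ almost surely. The plan is to reduce each direction to the deterministic Portmanteau equivalence, applied $\omega$ by $\omega$ along suitable subsequences. Countability is what allows a single almost-sure subsequence to serve infinitely many test points, and Lemma~\ref{lemma:weak-convergence-countable-pointwise} supplies the countable set we need.

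\textbf{($\Rightarrow$).} Assume $F_{\nu_n}(x)=F_\mu(x)+o_P(1)$ at every continuity point $x$. Let $\mathcal D=\{d_1,d_2,\dots\}$ be the countable dense set of continuity points of $F_\mu$ from Lemma~\ref{lemma:weak-convergence-countable-pointwise}. Fix $f\in C_b(\mathcal I)$ and an arbitrary subsequence $(n_k)$. By a diagonal argument, first extract a further subsequence along which $F_{\nu_n}(d_1)\to F_\mu(d_1)$ almost surely. From that, extract one along which the same holds at $d_2$, and continue in this way. The diagonal subsequence $(m_l)$ then satisfies $F_{\nu_{m_l}}^\omega(d)\to F_\mu(d)$ for all $d\in\mathcal D$ on a single event of probability one, since it is a countable intersection of probability-one events. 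On that event, Lemma~\ref{lemma:weak-convergence-countable-pointwise} upgrades this to convergence at every continuity point, so $\nu^\omega_{m_l}\rightsquigarrow\mu$ deterministically. Portmanteau then gives $\int f\,d\nu^\omega_{m_l}\to\int f\,d\mu$ almost surely. Since $(n_k)$ was arbitrary, $\int f\,d\nu_n-\int f\,d\mu=o_P(1)$.

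\textbf{($\Leftarrow$).} Fix a continuity point $x$ of $F_\mu$ and $\epsilon>0$. The indicator $\mathbf 1\{\cdot\le x\}$ is not continuous, so we sandwich it. Choose $\delta>0$ with $F_\mu(x+\delta)-F_\mu(x-\delta)<\epsilon$, and take piecewise-linear functions $f_-,f_+\in C_b(\mathcal I)$ with $\mathbf 1\{\cdot\le x-\delta\}\le f_-\le\mathbf 1\{\cdot\le x\}\le f_+\le\mathbf 1\{\cdot\le x+\delta\}$. Then $\int f_-d\nu_n\le F_{\nu_n}(x)\le\int f_+d\nu_n$, and by assumption the two outer integrals converge in probability to $\int f_\mp d\mu$, both of which lie within $\epsilon$ of $F_\mu(x)$. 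Hence $\mathbb P(|F_{\nu_n}(x)-F_\mu(x)|>2\epsilon)\to0$, and since $\epsilon$ is arbitrary the claim follows. (Equivalently, one can use the subsequence criterion with the countable family $\{f_\pm^{(\delta)}:\delta\in\mathbb Q_{>0}\}$.)

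The main obstacle is in ($\Rightarrow$): the hypothesis holds pointwise in probability, while Portmanteau needs convergence at all continuity points simultaneously for a fixed $\omega$. There are uncountably many continuity points, so the null sets cannot be combined directly. Restricting to the countable set $\mathcal D$, extracting a diagonal subsequence, and then applying the monotone interpolation of Lemma~\ref{lemma:weak-convergence-countable-pointwise} resolves this. There is also a minor measurability point: $F_{\nu_n}(x)=\nu_n(\cdot,(-\infty,x]\cap\mathcal I)$ is measurable in $\omega$ by the definition of a random probability measure. If $\mathcal I$ is a proper subset of $\mathbb R$, we treat the measures as measures on $\mathbb R$ supported in $\mathcal I$, extending $f\in C_b(\mathcal I)$ continuously where necessary.
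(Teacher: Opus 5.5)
Your proof is correct. The ($\Rightarrow$) direction follows the paper's own argument. It takes the countable dense set $\mathcal D$ of continuity points from Lemma~\ref{lemma:weak-convergence-countable-pointwise}, extracts a diagonal subsequence that converges almost surely on all of $\mathcal D$ at once, upgrades this deterministically to every continuity point, and applies Portmanteau $\omega$ by $\omega$. The only difference is the final step. You fix $f\in C_b(\mathcal I)$ and apply the subsequence criterion to the real random variables $\int f\,d\nu_n$, which lands directly in Definition~\ref{def:weak-convergence-in-probability}. The paper stops at ``every subsequence has a further subsequence with $\nu^\omega_{m_l}\rightsquigarrow\mu$ almost surely'' and then cites the equivalence theorem of Schmon et al.\ to reach its definition.

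The ($\Leftarrow$) direction is where you genuinely depart from the paper. The paper again uses the subsequence formulation, so it relies on that cited equivalence, and then applies the deterministic Portmanteau lemma on a probability-one event. You instead sandwich $\mathbf 1\{\cdot\le x\}$ between two bounded continuous functions and use only the convergence in probability of $\int f_\pm\,d\nu_n$. In effect, you reprove the relevant half of Portmanteau ``in probability.'' Your route is self-contained relative to the paper's own $C_b$-based definition and needs no external equivalence result. The paper's route is shorter once that result is granted, and it stays inside the subsequence framework reused later in Corollary~\ref{corollary:consistency-in-L2} and Lemma~\ref{lemma:convergence-of-moments}.

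One small technical correction: when $\mathcal I$ is not closed, a function in $C_b(\mathcal I)$ need not extend continuously to $\mathbb R$, so ``extending $f$ where necessary'' is not always possible. No extension is needed, though. All the measures give full mass to $\mathcal I$, and open sets of $\mathcal I$ are traces of open sets of $\mathbb R$, so weak convergence on $\mathbb R$ and on $\mathcal I$ coincide. The paper leaves this point implicit as well.
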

\begin{proof}
    \textbf{$\Rightarrow$.}~~The difficulty of the proof in this direction is that the $o_P(1)$ term in $F_{\nu_n}(x) = F_{\mu}(x) + o_P(1)$ only holds pointwise. In other words, $\|F_{\nu_n} - F_{\mu}\|_{\infty}$ may not converge in probability to zero. 
    This is addressed by considering a countable dense subset of continuity points and by arguing along subsequences. 
    
    By definition of random probability measures, $\omega \mapsto F_{\nu^{\omega}_n}(x) := \nu^{\omega}_n((-\infty, x])$ is measurable for any $x \in \mathcal{I}$. Hence, $F_{\nu_n}(x)$ is a random variable for any $x \in \mathcal{I}$. 
    For ease of notation, we further let $F_n^{\omega} := F_{\nu_n^{\omega}}$ and $F_n := F_{\nu_n}$.
    
    By Lemma \ref{lemma:weak-convergence-countable-pointwise}, there exists a countable subset of continuity points of $x \mapsto F_{\mu}(x)$ that is dense in $\mathbb{R}$, which we denote by $\mathcal{D}$. Order these points in the sequence $(x_n)_{n \ge 1}$. Now consider an arbitrary subsequence $(n_k)_{k \ge 1}$ of $(n)_{n \ge 1}$.
    
    Let $x_1$ be the first element of $(x_n)_{n \ge 1}$. Because $x_1 \in \mathcal{D}$, it is a continuity point of $x \mapsto F_{\mu}(x)$ and, consequently, we have that $F_n(x_1) = F(x_1) + o_P(1)$. Hence, the subsequence $(n_k)_{k \ge 1}$ has a further subsequence, denoted by $(n^1_l)_{l \ge 1}$, such that $F_{n_l^1}(x_1) \rightarrow F(x_1)$ $P$-a.s. Or equivalently, $P ( A_1 ) = 1$ for $A_1 := \left\{\omega \in \Omega: F_{n_l^1}^\omega(x_1) \to F_{\mu}(x_1) \right\}$. 
    %Note that the size of $A^1$ can only increase if we replace $(n_l^1)_{l \ge 1}$ with a further subsequence.

    We can repeat the above construction for every element of $(x_n)_{n \ge 1}$, where the subsequence $(n^2_l)_{l \ge 1}$ is a subsequence of $(n^1_l)_{l \ge 1}$, the subsequence $(n^3_l)_{l \ge 1}$ is a subsequence of $(n^2_l)_{l \ge 1}$, and so on.
    We now define the subsequence $(m_l)_{l \ge 1}$ element-wise as $m_l = n^l_l$. 
    By construction, for any $k \in \mathbb{N}$ and $\omega \in \Omega$, we have that $F_{n^k_l}^{\omega}(x_k) \to F_{\mu}(x_k) \implies F_{m_l}^{\omega}(x_k) \to F_{\mu}(x_k)$ because $(m_l)_{l \ge 1}$ is eventually a subsequence of $(n^k_l)_{l \ge 1}$. This implies that 
    \begin{equation*}
        \left\{ \omega \in \Omega: F^{\omega}_{n^k_l}(x_k) \to F_{\mu}(x_k) \right\} =: A_k \subset \tilde A_k := \left\{ \omega \in \Omega: F^{\omega}_{m_l}(x_k) \to F_{\mu}(x_k) \right\}.
    \end{equation*}
    Because $P(A_k) = 1$ for every $k \in \mathbb{N}$, we have that $P\left( \bigcap_{k \ge 1} A_k \right) = 1$ and, because $A_k \subset \tilde A_k$ for every $k$, also that $P \left( \bigcap_{k \ge 1} \tilde A_k \right) = 1$.

    For any $\omega \in \bigcap_{k \ge 1} \tilde A_k$, we have that $F^{\omega}_{m_l}(x) \to F_{\mu}(x)$ for all $x \in \mathcal{D}$. By Lemma \ref{lemma:weak-convergence-countable-pointwise}, this implies that $F^{\omega}_{m_l}(x) \to F_{\mu}(x)$ for all continuity points. By the Portmanteau lemma, this implies that $F^{\omega}_{m_l}$ converges weakly to $F_{\mu}$, or equivalently, that $\nu_{m_l}^{\omega}$ converges weakly to $\mu$. 

    Note that the initial subsequence $(n_k)_{k \ge 1}$ was arbitrary. Hence, we have that for every subsequence $(n_k)_{k \ge 1}$, there exists a further subsequence $(m_l)_{l \ge 1}$ such that 
    \begin{equation*}
        P \left( \omega: \nu^{\omega}_{m_l} \rightsquigarrow \mu \right) = 1.
    \end{equation*}
    This is the definition of weak convergence in probability given in \textcite[Definition 3]{schmon_large-sample_2021}\footnote{As explained in the text before Lemma \ref{lemma:from-pointwise-to-weak}, this is equivalent to our Definition \ref{def:weak-convergence-in-probability}.}.

    \textbf{$\Leftarrow$.}~~
    By the definition of weak convergence in probability, for every subsequence $(n_k)_{k \ge 1}$, there exists a further subsequence $(n_l)_{l \ge 1}$ such that 
    \begin{equation*}
        P \left( \omega: \nu^{\omega}_{n_l} \rightsquigarrow \mu \right) = 1.
    \end{equation*}
    For all $\omega$ in the above set, we have that $F_{\nu_{n_l}^{\omega}}(x) \to F_{\mu}(x)$ for every $x$ a continuity point. Because the initial subsequence was arbitrary, this implies that $F_{\nu_n}(x) = F_{\mu}(x) + o_P(1)$ for every $x$ a continuity point.
\end{proof}

The following lemma resembles the previous one, but considers pointwise convergence of the quantile functions (instead of the distribution functions) of a sequence of random probability measures.
The structure of the proof below also resembles the proof of Lemma \ref{lemma:from-pointwise-to-weak}. 

\begin{lemma}[Relation between quantile and cdf estimators]\label{lemma:pointwise-to-weak-quantile}
    Let $(\nu_n)_{n \ge 1}$ be a sequence of random probability measures and $\mu$ a probability measure on $\mathcal{I} \subset \mathbb{R}$. Denote the corresponding quantile functions by $F^-_{\nu_n}$ and $F^-_{\mu}$, respectively. 
    We then have that $F_{\nu_n}^-(t) = F_{\mu}^-(t) + o_P(1)$ for every $t$ a continuity point of $t \mapsto F^-_{\mu}(t)$ if and only if $\nu_n$ converges weakly in probability to $\mu$.
\end{lemma}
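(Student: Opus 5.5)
The plan is to mirror the proof of Lemma~\ref{lemma:from-pointwise-to-weak}, with quantile functions in place of CDFs and Lemma~\ref{lemma:weak-convergence-countable-pointwise-quantile-function} in place of Lemma~\ref{lemma:weak-convergence-countable-pointwise}. We reduce everything to a deterministic fact and then pass between the random and deterministic settings using the subsequence characterization of weak convergence in probability in \textcite[Definition 3]{schmon_large-sample_2021}, which is equivalent to Definition~\ref{def:weak-convergence-in-probability}. The deterministic fact is the classical one: for probability measures $\mu_n,\mu$ on $\mathbb R$, $\mu_n\rightsquigarrow\mu$ if and only if $F^-_{\mu_n}(t)\to F^-_\mu(t)$ at every continuity point $t$ of $F^-_\mu$. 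One proof realizes $X_n=F^-_{\mu_n}(U)$ and $X=F^-_\mu(U)$ with $U$ uniform, which is the Skorokhod representation. A direct alternative is the duality $F(x)\ge t\iff F^-(t)\le x$ used in Theorem~\ref{theorem:identification}, combined with sandwiching by continuity points. I would state and prove this deterministic equivalence first.

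\textbf{Forward direction ($\Rightarrow$).} Measurability of $\omega\mapsto F^-_{\nu_n^\omega}(t)$ follows because $\{F^-_{\nu_n}(t)\le x\}=\{F_{\nu_n}(x)\ge t\}$, and the right-hand side is measurable. Take the countable dense set $\mathcal D=\{t_k\}$ of continuity points of $F^-_\mu$ from Lemma~\ref{lemma:weak-convergence-countable-pointwise-quantile-function}. Given any subsequence, use the $o_P(1)$ hypothesis at each $t_k$ to extract nested almost surely convergent subsequences, then take the diagonal $m_l=n^l_l$, exactly as in Lemma~\ref{lemma:from-pointwise-to-weak}. On the resulting probability-one event we have $F^-_{\nu^\omega_{m_l}}(t)\to F^-_\mu(t)$ for all $t\in\mathcal D$. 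By Lemma~\ref{lemma:weak-convergence-countable-pointwise-quantile-function} this extends to all continuity points, and by the deterministic fact it gives $\nu^\omega_{m_l}\rightsquigarrow\mu$. Since the starting subsequence was arbitrary, this yields weak convergence in probability.

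\textbf{Reverse direction ($\Leftarrow$).} For every subsequence there is a further subsequence along which $\nu^\omega_{m_l}\rightsquigarrow\mu$ almost surely. The deterministic fact then gives $F^-_{\nu^\omega_{m_l}}(t)\to F^-_\mu(t)$ almost surely at each continuity point $t$. The subsequence criterion for convergence in probability yields $F^-_{\nu_n}(t)=F^-_\mu(t)+o_P(1)$.

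\textbf{Main obstacle.} I expect the main obstacle to be the deterministic equivalence, specifically showing that weak convergence implies quantile convergence at continuity points of $F^-_\mu$. The plan for this direction is as follows. Fix such a $t$ and $\epsilon>0$. Choose continuity points $x_1<F^-_\mu(t)<x_2$ of $F_\mu$ within $\epsilon$ of $F^-_\mu(t)$. By the duality, $F_\mu(x_1)<t$. Continuity of $F^-_\mu$ at $t$ lets us pick $t'>t$ with $F^-_\mu(t')<x_2$, which gives $F_\mu(x_2)\ge t'>t$. Weak convergence then gives $F_{\mu_n}(x_1)<t\le F_{\mu_n}(x_2)$ eventually, and hence $x_1<F^-_{\mu_n}(t)\le x_2$. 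The converse uses the same sandwich argument with the roles of $F$ and $F^-$ swapped.
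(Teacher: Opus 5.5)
Your proposal is correct and follows the paper's proof essentially step for step. Both use diagonal extraction of almost surely convergent subsequences over a countable dense set of continuity points of $F^-_\mu$, then extension via Lemma~\ref{lemma:weak-convergence-countable-pointwise-quantile-function}, and the subsequence characterization of weak convergence in probability for both directions. The only differences are that you prove the deterministic equivalence between weak convergence and quantile convergence at continuity points (your sandwich argument is correct), where the paper simply cites Lemmas 21.1--21.2 of \textcite{van2000asymptotic}. You also explicitly check measurability of $F^-_{\nu_n}(t)$ via $\{F^-_{\nu_n}(t)\le x\}=\{F_{\nu_n}(x)\ge t\}$, which the paper leaves implicit.
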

\begin{proof}
    \textbf{$\Rightarrow$.}~~
     For ease of notation, we further let $Q_n^{\omega} := F^-_{\nu_n^{\omega}}$, $F_n := F_{\nu_n}$, and $Q = F^-_{\mu}$.
    
    By Lemma \ref{lemma:weak-convergence-countable-pointwise-quantile-function}, there exists a countable set of continuity points of $t \mapsto Q(t)$ that is dense in $(0, 1)$, which we denote by $\mathcal{D}$. Order these points in the sequence $(t_n)_{n \ge 1}$. Now consider an arbitrary subsequence $(n_k)_{k \ge 1}$ of $(n)_{n \ge 1}$.
    
    Let $t_1$ be the first element of $(t_n)_{n \ge 1}$. Because $t_1 \in \mathcal{D}$, it is a continuity point of $t \mapsto Q(t)$ and, consequently, we have that $Q_n(t_1) = Q(t_1) + o_P(1)$. Hence, the subsequence $(n_k)_{k \ge 1}$ has a further subsequence, denoted by $(n^1_l)_{l \ge 1}$, such that $Q_{n_l^1}(t_1) \rightarrow Q(t_1)$ $P$-a.s. Or equivalently, $P ( A_1 ) = 1$ for $A_1 := \left\{\omega \in \Omega: Q_{n_l}^\omega(t_1) \to Q(t_1) \right\}$. 
    %Note that the size of $A^1$ can only increase if we replace $(n_l^1)_{l \ge 1}$ with a further subsequence.

    We can repeat the above construction for every element of $(t_n)_{n \ge 1}$, where the subsequence $(n^2_l)_{l \ge 1}$ is a subsequence of $(n^1_l)_{l \ge 1}$, the subsequence $(n^3_l)_{l \ge 1}$ is a subsequence of $(n^2_l)_{l \ge 1}$, and so on.
    We now define the subsequence $(m_l)_{l \ge 1}$ element-wise as $m_l = n^l_l$. 
    By construction, for any $k \in \mathbb{N}$ and $\omega \in \Omega$, we have that $Q_{n^k_l}^{\omega}(t_k) \to Q(t_k) \implies Q_{m_l}^{\omega}(t_k) \to Q(t_k)$ because $(m_l)_{l \ge 1}$ is eventually a subsequence of $(n^k_l)_{l \ge 1}$. This implies that 
    \begin{equation*}
        \left\{ \omega \in \Omega: Q^{\omega}_{n^k_l}(t_k) \to Q(t_k) \right\} =: A_k \subset \tilde A_k := \left\{ \omega \in \Omega: Q^{\omega}_{m_l}(t_k) \to Q(t_k) \right\}.
    \end{equation*}
    Because $P(A_k) = 1$ for every $k \in \mathbb{N}$, we have that $P\left( \bigcap_{k \ge 1} A_k \right) = 1$ and, because $A_k \subset \tilde A_k$ for every $k$, also that $P \left( \bigcap_{k \ge 1} \tilde A_k \right) = 1$.

    For any $\omega \in \bigcap_{k \ge 1} \tilde A_k$, we have that $Q^{\omega}_{m_l} \to Q(t)$ for all $t \in \mathcal{D}$. By Lemma \ref{lemma:weak-convergence-countable-pointwise-quantile-function}, this implies that $Q^{\omega}_{m_l}(t) \to Q(t)$ for all continuity points. By \textcite[Lemma 21.1]{van2000asymptotic}, this implies that $F^{\omega}_{m_l}$ converges weakly to $F_{\mu}$, or equivalently, that $\nu_{m_l}^{\omega}$ converges weakly to $\mu$. 

    Note that the initial subsequence $(n_k)_{k \ge 1}$ was arbitrary. Hence, we have that for every subsequence $(n_k)_{k \ge 1}$, there exists a further subsequence $(m_l)_{l \ge 1}$ such that 
    \begin{equation*}
        P \left( \omega: \nu^{\omega}_{m_l} \rightsquigarrow \mu \right) = 1.
    \end{equation*}
    This is the definition of weak convergence in probability given in \textcite[Definition 3]{schmon_large-sample_2021}\footnote{As explained in the text before Lemma \ref{lemma:from-pointwise-to-weak}, this is equivalent to our Definition \ref{def:weak-convergence-in-probability}.}.

    \textbf{$\Leftarrow$.}~~
    By the definition of weak convergence in probability, for every subsequence $(n_k)_{k \ge 1}$, there exists a further subsequence $(n_l)_{l \ge 1}$ such that 
    \begin{equation*}
        P \left( \omega: \nu^{\omega}_{n_l} \rightsquigarrow \mu \right) = 1.
    \end{equation*}
    For all $\omega$ in the above set, \textcite[Lemma 21.2]{van2000asymptotic} implies that $F^-_{\nu^{\omega}_{n_l}}(t) \to F^-_{\mu}(t)$ for every $t$ a continuity point of $t \mapsto F^-_{\mu}(t)$. 
    Because the initial subsequence was arbitrary, this implies that $F^-_{\nu_n}(t) = F^-_{\mu}(t) + o_P(1)$ for every $t$ a continuity point of $t \mapsto F^-_{\mu}(t)$.    
\end{proof}

\begin{definition}[Asymptotic uniform integrability]\label{def:asymptotic-uniform-integrability}
    The sequence of random probability measures $\nu_n^{\omega}$ on $\mathbb{R}^k$ is asymptotically uniformly integrable with respect to $f: \mathbb{R}^k \to \mathbb{R}$ if for every subsequence, there exists a further subsequence $\nu^{\omega}_{m_n}$ such that for all $\omega \in A$ with $P(A) = 1$:
    \begin{equation*}
        \lim_{M \to \infty} \lim_{n \to \infty} \mathbb{E}_{\nu_{m_n}^{\omega}} \left[ |f(X)| \cdot \mathbf{1}\left\{ |f(X)| > M \right\} \right] = 0.
    \end{equation*}
\end{definition}

\begin{lemma}[Convergence of moments]\label{lemma:convergence-of-moments}
    Let $\nu_n$ be a sequence of random probability measures on $\mathbb{R}^k$ converging weakly in probability to $\mu$. Let $f: \mathbb{R}^k \to \mathbb{R}$ be a continuous function where $|\mathbb{E}_{\mu}f(X)| < \infty$. Then $\mathbb{E}_{\nu_n} f(X) \overset{P}{\to} \mathbb{E}_{\mu}f(X)$ if the sequence of random probability measures $\nu_n$ is asymptotically uniformly integrable with respect to $f$.
\end{lemma}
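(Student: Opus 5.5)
The plan is to reduce to the classical deterministic fact: weak convergence plus uniform integrability implies convergence of expectations. The reduction goes through the subsequence characterization of weak convergence in probability already used in Lemma~\ref{lemma:from-pointwise-to-weak}. Recall that a sequence of real random variables $Z_n$ converges in probability to $z$ if and only if every subsequence has a further subsequence along which $Z_n\to z$ almost surely. So it suffices to fix an arbitrary subsequence $(n_k)$ and extract a further subsequence $(m_l)$ with $\mathbb{E}_{\nu_{m_l}^\omega}f(X)\to\mathbb{E}_\mu f(X)$ for $P$-almost every $\omega$.

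To build $(m_l)$, I will use weak convergence in probability in the form of Definition~3 of \textcite{schmon_large-sample_2021}, which is equivalent to Definition~\ref{def:weak-convergence-in-probability}. This gives a further subsequence $(n'_l)$ of $(n_k)$ and a set $A_1$ with $P(A_1)=1$ such that $\nu_{n'_l}^\omega\rightsquigarrow\mu$ for all $\omega\in A_1$. Next, I apply Definition~\ref{def:asymptotic-uniform-integrability} to the subsequence $(n'_l)$. This yields a further subsequence $(m_l)$ and a set $A_2$ with $P(A_2)=1$ on which
\[
\lim_{M\to\infty}\lim_{l\to\infty}\mathbb{E}_{\nu_{m_l}^\omega}\bigl[|f(X)|\mathbf 1\{|f(X)|>M\}\bigr]=0 .
\]
Weak convergence persists along further subsequences, so on $A:=A_1\cap A_2$, which has probability one, both properties hold along $(m_l)$.

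The core step is the deterministic argument for fixed $\omega\in A$. I use a continuous truncation rather than an indicator so that weak convergence applies directly. For $M>0$ define the bounded continuous function $f_M:=\max(-M,\min(f,M))$. Since $f_M\in C_b$, weak convergence gives $\mathbb{E}_{\nu_{m_l}^\omega}f_M\to\mathbb{E}_\mu f_M$. For any probability measure $\rho$, the truncation error satisfies $|\mathbb{E}_\rho f-\mathbb{E}_\rho f_M|\le\mathbb{E}_\rho[|f|\mathbf 1\{|f|>M\}]$. For the limit $\mu$, this bound tends to zero as $M\to\infty$ by dominated convergence, provided $f$ is $\mu$-integrable. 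The hypothesis only states $|\mathbb{E}_\mu f|<\infty$, so I will read it as $\mathbb{E}_\mu|f|<\infty$. If needed, this can be derived from uniform integrability: $|f|\wedge M$ is bounded and continuous, so $\mathbb{E}_\mu(|f|\wedge M)=\lim_l\mathbb{E}_{\nu_{m_l}^\omega}(|f|\wedge M)$, and this is bounded uniformly in $M$ because the uniformly integrable tails control the first moments along the subsequence. The triangle inequality then gives
\[
\limsup_{l}\bigl|\mathbb{E}_{\nu_{m_l}^\omega}f-\mathbb{E}_\mu f\bigr|\le \lim_{l}\mathbb{E}_{\nu_{m_l}^\omega}[|f|\mathbf 1\{|f|>M\}]+\mathbb{E}_\mu[|f|\mathbf 1\{|f|>M\}].
\]
Letting $M\to\infty$ drives both terms to zero.

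I expect the main obstacle to be handling the iterated limit in Definition~\ref{def:asymptotic-uniform-integrability}: the inner $\lim_{l}$ may not exist, so it must be treated as a $\limsup$. I would also need to check that $\mathbb{E}_{\nu_{m_l}^\omega}f$ is finite for all large $l$. Both issues are addressed by interpreting the definition with $\limsup$ and noting that finiteness of the tail term forces integrability eventually. Once the almost sure convergence along $(m_l)$ is established, the subsequence criterion yields $\mathbb{E}_{\nu_n}f(X)\overset{P}{\to}\mathbb{E}_\mu f(X)$.
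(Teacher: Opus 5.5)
Your proposal is correct and takes essentially the same route as the paper. You extract a further subsequence along which both almost-sure weak convergence and the asymptotic uniform integrability tail condition hold, and then, for fixed $\omega$, you split $|\mathbb{E}_{\nu_{m_l}^\omega}f-\mathbb{E}_\mu f|$ into two truncation tails plus a bounded-continuous middle term. The differences are minor: you clip $f$ on both sides instead of reducing to $f\ge 0$, and you state explicitly that $\mathbb{E}_\mu|f|<\infty$ and eventual finiteness of $\mathbb{E}_{\nu_{m_l}^\omega}f$ are needed, which the paper's proof uses implicitly and which, as you note, follow from the uniform integrability hypothesis.
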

\begin{proof}
    Without loss of generality, assume that $f(X)$ is non-negative; otherwise, one can follow the same arguments below for the negative and positive parts of $f(X)$ separately.
    Let $(\nu_{l_n}^{\omega})_{n \ge 1}$ be a subsequence of $(\nu_n^{\omega})_{n \ge 1}$. By the definition of asymptotic uniform integrability, there exists a further subsequence $(\nu^\omega_{m_n})_{n \ge 1}$, and set $A$ with $P(A)$ such that for all $\omega \in A$:
    \begin{equation*}
            \lim_{M \to \infty} \lim_{n \to \infty} \mathbb{E}_{\nu_{m_n}^{\omega}} \left[ |f(X)| \cdot \mathbf{1}\left\{ |f(X)| > M \right\} \right] = 0.
    \end{equation*}
    By the definition of weak convergence in probability\footnote{Our Definition \ref{def:weak-convergence-in-probability} is not formulated in terms of subsequences, but this definition is equivalent to \textcite[Definition 3 in Supplementary Material]{schmon_large-sample_2021} which is formulated in terms of subsequences.}, there is a further subsequence $\nu_{m'_n}^{\omega}$ such that $\mathbb{E}_{\nu_{m'_n}^{\omega}} g(X) \to \mathbb{E}_{\mu} g(X)$ for all $g \in C_b(\mathbb{R}^k)$ and $\omega \in A'$ with $P(A') = 1$. 
    To simplify notation, we denote the latter subsequence  and set simply by $\nu_{m_n}$ and $A$, respectively.

    The proof is complete if we can show that $\mathbb{E}_{\nu_{m_n}^{\omega}}f(X) - \mathbb{E}_{\mu}f(X) \to 0$ for all $\omega \in A$ because the initial subsequence $(\nu_{l_n}^{\omega})_{n \ge 1}$ was arbitrary. Indeed, this would imply that for every subsequence of $(\nu_n)_{n \ge 1}$, there exists a further subsequence $(\nu_{m_n})_{n \ge 1}$ where $\mathbb{E}_{\nu_{m_n}} f(X) \overset{a.e.}{\to} \mathbb{E}_{\mu} f(X)$. This in turn implies that $\mathbb{E}_{\nu_{n}} f(X) \overset{P}{\to} \mathbb{E}_{\mu} f(X)$. 
    
    We now show that $\mathbb{E}_{\nu_{m_n}^{\omega}}f(X) - \mathbb{E}_{\mu}f(X) \to 0$ for all $\omega \in A$. Select $\omega \in A$ and apply the triangle inequality as follows:
    \begin{align*}
        \left| \mathbb{E}_{\nu_{m_n}^{\omega}}f(X) - \mathbb{E}_{\mu}f(X) \right|  \le & \left| \mathbb{E}_{\nu_{m_n}^{\omega}}f(X) - \mathbb{E}_{\nu_{m_n}^{\omega}}f(X) \wedge M \right| \\
        & + \left| \mathbb{E}_{\nu_{m_n}^{\omega}}f(X) \wedge M - \mathbb{E}_{\mu}f(X) \wedge M \right| \\
        & + \left| \mathbb{E}_{\mu}f(X) \wedge M - \mathbb{E}_{\mu}f(X) \right|.
    \end{align*}
    For the expressions above, we first take the limit with respect to $n \to \infty$ and, second, with respect to $M \to \infty$. The left-hand side then simply is $\lim_{n \to \infty} \left| \mathbb{E}_{\nu_{m_n}^{\omega}}f(X) - \mathbb{E}_{\mu}f(X) \right|$ (as there is no dependence on $M$). We now consider the three expressions of the right-hand side separately. 

    We can upper bound the first expression of the right-hand side as follows:
    \begin{align*}
        \left| \mathbb{E}_{\nu_{m_n}^{\omega}}f(X) - \mathbb{E}_{\nu_{m_n}^{\omega}}f(X) \wedge M \right| & = \big| \mathbb{E}_{\nu_{m_n}^{\omega}}f(X) \mathbf{1}(f(X) > M) + \mathbb{E}_{\nu_{m_n}^{\omega}} f(X) \mathbf{1}(f(X) \le M) \\
        & \quad - \mathbb{E}_{\nu_{m_n}^{\omega}} M \mathbf{1}(f(X) > M) - \mathbb{E}_{\nu_{m_n}^{\omega}}f(X) \mathbf{1}(f(X) \le M)   \big| \\
        & = \left| \mathbb{E}_{\nu_{m_n}^{\omega}}f(X) \mathbf{1}(f(X) > M) -  \mathbb{E}_{\nu_{m_n}^{\omega}} M \mathbf{1}(f(X) > M) \right| \\
        & \le \left| \mathbb{E}_{\nu_{m_n}^{\omega}}f(X) \mathbf{1}(f(X) > M)  \right|
    \end{align*}
    The inequality follows from the fact that $f(X)$ is non-negative and that $f(X) \mathbf{1}(f(X) > M) > M \mathbf{1}(f(X) > M)$.
    From asymptotic uniform integrability follows $$\lim_{M \to \infty} \lim_{n \to \infty} \mathbb{E}_{\nu_{m_n}^{\omega}}f(X) \mathbf{1}(f(X) > M) = 0.$$ Hence, we have that $\lim_{M \to \infty} \lim_{n \to \infty} \left| \mathbb{E}_{\nu_{m_n}^{\omega}}f(X) - \mathbb{E}_{\nu_{m_n}^{\omega}}f(X) \wedge M \right| = 0$.

    Because $\nu^{\omega}_{m_n}$ converges weakly to $\mu$ and $x \mapsto f(x) \wedge M$ is bounded and continuous, we have that $\lim_{n \to \infty}  \mathbb{E}_{\nu_{m_n}^{\omega}}f(X) \wedge M = \mathbb{E}_{\mu}f(X) \wedge M$. Hence, it follows that $$\lim_{M \to \infty} \lim_{n \to \infty}  \left| \mathbb{E}_{\nu_{m_n}^{\omega}}f(X) \wedge M - \mathbb{E}_{\mu}f(X) \wedge M \right| = 0.$$ 

    The third expression is upper bounded by $\mathbb{E}_{\mu}f(X) \mathbf{1}(f(X) > M)$ (following the same argument as before). Because $|\mathbb{E}_{\mu} f(X)| < \infty$, we have that $\lim_{M \to \infty} f(X) \mathbf{1}(f(X) > M) = 0$.

    The above results now imply that $\mathbb{E}_{\nu_{m_n}^{\omega}}f(X) - \mathbb{E}_{\mu}f(X) \to 0$ for all $\omega \in A$, which completes the proof.
\end{proof}

\section{Asymptotic theory for parametric binomial mixture estimators}\label{appendix:parametric-binomial-mixtures}

% \textcolor{blue}{Can you review this high-level overview of this section?}

This section develops the asymptotic theory for the parametric binomial mixture estimators used in the main text. For each threshold \(x \in \mathcal I\), the observable count \(Y_i(x) = \sum_{j=1}^{m_i} \mathbf 1\{X_{ij} \le x\}\) follows a binomial mixture distribution, where the mixing distribution is the law of \(F_{\nu_i}(x)\) for $\nu_i \sim \Pi$. In the parametric setting, we assume that this mixing distribution belongs to a family \(G(\cdot;\beta(x))\) with $\beta(x) \in \mathbb R^p$, so that the target is the function-valued parameter
\(x \mapsto \beta_0(x)\).

We first consider a general class of parametric mixture estimators. In this
setting, \(\hat\beta_n(x)\) is obtained by solving estimating equations
pointwise in \(x\), and we analyze the resulting estimator as an element of
\(\ell^\infty(K)^p\) for compact subsets \(K \subset \mathcal I\). We state a
general Z-estimation result giving sufficient conditions for uniform asymptotic
linearity of \(\hat\beta_n\), provide sufficient conditions for these
assumptions to hold, and then apply a delta method argument to obtain
uniform asymptotic linearity of \(G(\alpha;\hat\beta_n(x))\).

We then specialize the general theory to the Beta--Binomial model. We first
define the model and derive the score, Hessian, Fisher information matrix, and
gradient of the Beta CDF. We next verify the conditions needed for uniform
asymptotic linearity of both \(\hat\beta_n\) and
\(G(\alpha;\hat\beta_n(x))\) on \(K\). We also verify the conditions used in
Theorem~\ref{theorem:asymptotic-distribution} for the integrated estimator
\(\int_K G(\alpha;\hat\beta_n(x))\,dx\), which is the object needed for the
asymptotic distribution results in the main text.

The final part of the section collects the supporting technical results and proofs. 
% These include function-analytic results for pointwise operators on \(\ell^\infty(K)^p\), uniform eigenvalue and nonsingularity results needed for the Beta--Binomial Fisher information, bounds on derivatives of the Beta CDF, and entropy calculations used to verify the Glivenko--Cantelli and Donsker conditions.

\subsection{General parametric mixture estimators}

We consider a general semiparametric model in which the mixing distribution at $x \in \mathcal{I}$, denoted by $t \mapsto G(x,t)$, is indexed by a parameter $\beta(x) \in \mathbb{R}^p$. To emphasize this dependence, we also write the mixing distribution as $t \mapsto G(t; \beta(x))$, where we assume the same parametric submodel for the mixing distribution at each $x \in \mathcal{I}$.
The collection of mixing distributions on $\mathcal{I}$ is indexed by the function-valued parameter $\beta \in \ell^{\infty}(\mathcal{I})^p$, with true value $\beta_0 \in \ell^{\infty}(\mathcal{I})^p$.
We study estimation of $\beta_0$ by pointwise parametric maximum likelihood and use Z-estimation theory to analyze $\hat \beta_n$ on compact subsets $K \subset \mathcal{I}$. %In Section \ref{sec:examples} of the main text, we assumed that the regularity conditions hold for $K = \mathcal{I}$.
% We then apply the obtained results to the Beta specification as first considered in Section \ref{sec:structural-restrictions} of the main text. 
% The last part of this appendix contains proofs and technical lemmas used in the first parts of this appendix.

\subsubsection{Estimation of $\beta_0$ and $G(\cdot, \alpha)$}

We consider estimation of $\beta_0(x)$ uniformly on a compact interval $K \subset \mathcal{I}$, where the estimator $\hat \beta_n(x)$ is obtained as the solution of empirical estimating equations. We assume throughout that the estimating equations at $x$ arise from the binomial mixture problem at $x$. Hence, the estimator $\hat \beta_n \in \ell^{\infty}(K)$ is defined through a set of binomial mixture problems.
% We also discuss control of tails when the regularity conditions for uniform asymptotic linearity hold for every compact subset of $\mathcal{I}$ but not for $\mathcal{I}$ itself. 

% \subsubsection{Z-estimation for function-valued parameters on $K \subset \mathcal{I}$}

Let $\left( O_i \right)_{i = 1}^n$ for $O_i := \left\{ m_i, \left( X_{i, j} \right)_{j = 1}^{m_i} \right\}$ be the observed data sampled i.i.d.~from $P$ as described in Section \ref{sec:data-structure} of the main text. 
The observed data used for estimating the mixture distribution at $x$ are $\left\{ Y_i(x), m_i \right\}_{i = 1}^n$ for $Y_i(x) := \sum_{j = 1}^{m_i} I(X_{i, j} \le x)$. 
Let the estimating function for the binomial mixture at $x$ be denoted by $\psi(O_i; \beta(x), x) \in \mathbb{R}^p$, which only depends on $O_i$ through $\left(Y_i(x), m_i\right)$. 
Define the empirical and population estimating functions pointwise as follows:
\begin{align}
    \Psi_n(\beta(x), x) := \mathbb{P}_n \psi(O_i; \beta(x), x), \quad \text{and} \quad \Psi(\beta(x), x) := P \psi( O_i; \beta(x), x).
\end{align}
We further view $x \mapsto \beta_0(x)$ as the function-valued estimand and define the corresponding empirical and true estimating functions as operators from $\ell^{\infty}(K)^p$ into $\ell^{\infty}(K)^p$:
\begin{align*}
    & \Psi_n: \ell^{\infty}(K)^p \to \ell^{\infty}(K)^p, \quad \text{where} \quad \Psi_n(\beta)(x) := \Psi_n(\beta(x), x) \\
    & \Psi: \ell^{\infty}(K)^p \to \ell^{\infty}(K)^p, \quad \text{where} \quad \Psi(\beta)(x) := \Psi(\beta(x), x).
\end{align*}
The space $\ell^{\infty}(K)^p$ is equipped with the norm $\lVert h \rVert_{\infty} := \sup_{x \in K} \lVert h(x) \rVert$ for $\lVert \cdot \rVert$ the Euclidean norm. The estimator $x \mapsto \hat{\beta}_n(x)$, further denoted by $\hat{\beta}_n$, is defined as an estimator satisfying $\lVert \Psi_n(\hat{\beta}_n) \rVert_{\infty} = 0$.
We now reproduce \textcite[Theorem 13.4]{kosorok2008introduction} with some modifications to the current setting, which states the conditions under which $\hat \beta_n$ is uniformly asymptotically linear on $K$.
\begin{theorem}\label{theorem:master-z-estimation}
    Assume that $\Psi(\beta_0) = 0$ and the following hold:
    \begin{enumerate}
        \item[(A)] $\beta \mapsto \Psi(\beta)$ satisfies the following:
        \begin{equation*}
            \lVert \Psi(\beta_n) \rVert_{\infty} \to 0 \quad \text{implies} \quad \lVert \beta_n - \beta_0 \rVert_{\infty} \to 0 \quad \text{for any} \quad (\beta_n)_{n \ge 1}.
        \end{equation*}
        \item[(B)] The class $\left\{ \psi(\cdot; \beta(x), x): \beta \in \ell^{\infty}(K)^p, x \in K \right\}$ is $P$-Glivenko-Cantelli
        \item[(C)] The class $\left\{ \psi(\cdot; \beta(x), x): \lVert \beta - \beta_0 \rVert_{\infty} < \delta, x \in K \right\}$ is $P$-Donsker for some $\delta > 0$.
        \item[(D)] The following holds:
        \begin{equation*}
            \sup_{x \in K} P \lVert \psi(\cdot; \beta_n(x), x) - \psi(\cdot; \beta_0(x), x) \rVert^2 \to 0, \quad \text{as} \quad \beta_n \to \beta_0 \quad \text{in} \quad \ell^{\infty}(K)^p.
        \end{equation*}
        \item[(E)] $\beta \mapsto \Psi(\beta)$ is Fréchet-differentiable at $\beta_0$ with continuously invertible derivative $\dot{\Psi}_{\beta_0}$.
    \end{enumerate}
    Then $n^{1/2} \left( \hat{\beta}_n - \beta_0 \right) = -n^{1/2} \dot{\Psi}_{\beta_0}^{-1} \Psi_n(\beta_0) + o_P(1)$ where $n^{1/2} \left( \Psi_n - \Psi  \right)(\beta_0) \rightsquigarrow Z$ for $Z \in \ell^{\infty}(K)^p$ a tight, mean zero Gaussian limiting distribution.
\end{theorem}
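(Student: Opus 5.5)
The argument follows the standard Z-estimation template in Banach spaces (Kosorok, Theorem 13.4; van der Vaart--Wellner, Theorem 3.3.1), applied with $\mathbb L = \ell^\infty(K)^p$ and the operators $\Psi_n,\Psi$ defined pointwise in $x$. It has three steps, in order.

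\textbf{Step 1: consistency.} By (B), the class $\{\psi(\cdot;\beta(x),x)\}$ is Glivenko--Cantelli. This gives
\[
\sup_{\beta\in\ell^\infty(K)^p}\lVert \Psi_n(\beta)-\Psi(\beta)\rVert_\infty=o_P(1)
\]
(outer probability). Since $\Psi_n(\hat\beta_n)=0$, we get $\lVert\Psi(\hat\beta_n)\rVert_\infty=o_P(1)$. The identifiability condition (A) then yields $\lVert\hat\beta_n-\beta_0\rVert_\infty=o_P(1)$. To pass from a deterministic-sequence statement to one in probability, I would argue along almost surely convergent subsequences, as in Lemma~\ref{lemma:from-pointwise-to-weak}.

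\textbf{Step 2: asymptotic equicontinuity.} Write $\mathbb G_n=\sqrt n(\mathbb P_n-P)$. I would show
\[
\bigl\lVert \mathbb G_n\psi(\cdot;\hat\beta_n(x),x)-\mathbb G_n\psi(\cdot;\beta_0(x),x)\bigr\rVert_{\infty}=o_P(1).
\]
The ingredients are:
\begin{enumerate}
\item By (C), the empirical process indexed by the $\delta$-neighbourhood class is asymptotically uniformly equicontinuous in the $L_2(P)$ semimetric.
\item By (D), together with Step 1, the $L_2(P)$ distance between $\psi(\cdot;\hat\beta_n(x),x)$ and $\psi(\cdot;\beta_0(x),x)$ tends to zero uniformly in $x\in K$ in probability.
\item With probability tending to one, $\hat\beta_n$ lies in the $\delta$-ball.
\end{enumerate}
The standard lemma (van der Vaart--Wellner, Lemma 3.3.5) then gives the display. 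The same Donsker property (C), evaluated at $\beta_0$, shows that $\sqrt n(\Psi_n-\Psi)(\beta_0)=\mathbb G_n\psi(\cdot;\beta_0(x),x)$ converges weakly in $\ell^\infty(K)^p$ to a tight mean-zero Gaussian $Z$.

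\textbf{Step 3: linearization and inversion.} Combining Step 2 with $\Psi_n(\hat\beta_n)=0$ and $\Psi(\beta_0)=0$ gives
\[
\sqrt n\bigl(\Psi(\hat\beta_n)-\Psi(\beta_0)\bigr)=-\sqrt n\,\Psi_n(\beta_0)+o_P(1).
\]
By Fréchet differentiability (E), the left side equals $\dot\Psi_{\beta_0}\sqrt n(\hat\beta_n-\beta_0)+o_P(\sqrt n\lVert\hat\beta_n-\beta_0\rVert_\infty)$. Continuous invertibility means there is $c>0$ with $\lVert\dot\Psi_{\beta_0}h\rVert_\infty\ge c\lVert h\rVert_\infty$. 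Since the right side is $O_P(1)$ by Step 2, this gives $\sqrt n\lVert\hat\beta_n-\beta_0\rVert_\infty=O_P(1)$, so the remainder is $o_P(1)$. Applying the bounded operator $\dot\Psi_{\beta_0}^{-1}$ then yields the stated expansion $n^{1/2}(\hat\beta_n-\beta_0)=-n^{1/2}\dot\Psi_{\beta_0}^{-1}\Psi_n(\beta_0)+o_P(1)$.

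The main obstacle is Step 2, specifically the fact that the index set is function-valued ($\beta\in\ell^\infty(K)^p$, paired with $x$). Because $\psi$ depends on $\beta$ only through $\beta(x)$, the class is really $\{\psi(\cdot;b,x): b\in\mathbb R^p \text{ near } \beta_0(x),\,x\in K\}$. I would use this reduction to make the equicontinuity argument, and the measurability caveats (outer probabilities), concrete.
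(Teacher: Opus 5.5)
Your proposal is correct and follows the same route as the paper. The paper gives no separate proof but imports \textcite[Theorem 13.4]{kosorok2008introduction}, and that result is proved by exactly your three steps: consistency from the Glivenko--Cantelli condition (B) plus identifiability (A); asymptotic equicontinuity of $\mathbb G_n$ from the Donsker and $L_2(P)$-continuity conditions (C)--(D); and Fr\'echet linearization with the bounded-below derivative (E), which gives $\sqrt n\lVert\hat\beta_n-\beta_0\rVert_\infty=O_P(1)$ and then the expansion.
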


Condition (A) is a standard identifiability condition. 
Conditions (B) and (C) restrict the complexity of the class of estimating functions, and should hold for most parametric models. 
Condition (D) follows from regularity conditions that depend on the parametric model. Condition (E) is a smoothness condition that is closely tied to the uniform asymptotic linearity of $\hat{\beta}_n$.

The proposition below provides sufficient conditions for conditions (A), (D), and (E) to hold. 
\begin{proposition}\label{prop:frechet-derivative-pointwise}
Suppose the following conditions hold:
\begin{enumerate}
    \item[(1)] $\beta(x) \mapsto \psi(O_i; \beta(x), x)$ is twice continuously differentiable for every $x \in K$, with probability one. The first and second-order partial derivatives are denoted by $\dot{\psi}(O_i; \beta(x), x)$ and $\ddot{\psi}(O_i; \beta(x), x)$.
    \item[(2)] For every $x \in K$, there exists an envelope function $M_{x}(O_i)$ with $E[M_{x}(O_i)] < \infty$ such that, for all $\beta(x)$ in a neighborhood of $\beta_0(x)$, we have
    \begin{equation*}
        \lVert \dot{\psi}(O_i; \beta(x), x) \rVert \le M_{x}(O_i), \quad \lVert \ddot{\psi}(O_i; \beta(x), x) \rVert \le M_{x}(O_i).
    \end{equation*}
    \item[(3)] There exists a convex open set $U \subset \mathbb{R}^p$ containing an $\varepsilon$-enlargement of $\left\{ \beta_0(x): x \in K \right\} \subset \mathbb{R}^p$ such that
    \begin{equation*}
        \sup_{x \in K} \sup_{\beta(x) \in U}  P \lVert \dot{\psi}(O_i; \beta(x), x) \rVert^2 < \infty, \qquad  \sup_{x \in K} \sup_{\beta(x) \in U} P \lVert \ddot{\psi}(O_i; \beta(x), x) \rVert < \infty.
    \end{equation*}
    \item[(4)] There exists $c > 0$ such that 
    \begin{equation*}
        \lVert P \dot{\psi}(O_i; \beta_0(x), x) v \rVert \ge c \lVert v \lVert \quad \text{for all $x \in K$ and $v \in \mathbb{R}^p$}.
    \end{equation*}
\end{enumerate}
Then $\beta \mapsto \Psi(\beta)$ satisfies conditions (A), (D), and (E) of Theorem \ref{theorem:master-z-estimation} where the Fréchet derivative and its continuous inverse are defined pointwise as follows:
    \begin{align*}
        (\dot{\Psi}h)(x) & = P \dot{\psi}(O_i; \beta_0(x), x) h(x), \quad \text{for $h \in \ell^{\infty}(K)^p$}, \\
        (\dot{\Psi}^{-1}f)(x) & = (P \dot{\psi}(O_i; \beta_0(x), x))^{-1} f(x), \quad \text{for $f \in \ell^{\infty}(K)^p$}.
    \end{align*}
\end{proposition}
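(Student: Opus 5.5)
We verify (A), (E) and (D) of Theorem~\ref{theorem:master-z-estimation} in turn, working pointwise in $x$ and then taking suprema over $K$ using the uniform bounds in (3) and (4).

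\textbf{Step 1 (Fréchet differentiability, (E)).} For $\beta \in \ell^\infty(K)^p$ and $h$ with $\lVert h\rVert_\infty$ small, both $\beta_0(x)$ and $\beta_0(x)+h(x)$ lie in the convex set $U$ for every $x\in K$. Conditions (1)--(2) permit differentiation under $P$ via dominated convergence, so $b\mapsto P\psi(\cdot;b,x)$ is $C^2$ on $U$. A second-order Taylor expansion with integral remainder along the segment joining $\beta_0(x)$ and $\beta_0(x)+h(x)$ gives
\[
\bigl\lVert \Psi(\beta_0+h)(x)-\Psi(\beta_0)(x)-P\dot\psi(\cdot;\beta_0(x),x)h(x)\bigr\rVert \le \tfrac12 \sup_{x\in K}\sup_{b\in U}P\lVert\ddot\psi(\cdot;b,x)\rVert\,\lVert h(x)\rVert^2 .
\]
By (3), the supremum over $x$ of the right-hand side is $O(\lVert h\rVert_\infty^2)=o(\lVert h\rVert_\infty)$. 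Hence $\dot\Psi$, defined pointwise as stated, is the Fréchet derivative. It is a bounded linear operator because $\sup_x\lVert P\dot\psi(\cdot;\beta_0(x),x)\rVert<\infty$, which follows from (3) and Jensen's inequality.

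Invertibility comes from (4): each matrix $A(x):=P\dot\psi(\cdot;\beta_0(x),x)$ satisfies $\lVert A(x)v\rVert\ge c\lVert v\rVert$, so it is invertible with $\lVert A(x)^{-1}\rVert\le 1/c$. The pointwise inverse therefore maps $\ell^\infty(K)^p$ into itself with operator norm at most $1/c$, which makes it continuous.

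\textbf{Step 2 (identifiability, (A)).} This is the step I expect to be the main obstacle, since (4) is only a local condition at $\beta_0$. The local part follows from the Taylor bound in Step 1: $\lVert\Psi(\beta)(x)\rVert\ge c\lVert h(x)\rVert - C\lVert h(x)\rVert^2$ with $h=\beta-\beta_0$, so on a uniform ball $\lVert h\rVert_\infty\le c/(2C)$ we get $\lVert\beta-\beta_0\rVert_\infty\le (2/c)\lVert\Psi(\beta)\rVert_\infty$.

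To get the global implication, I would argue by contradiction. Suppose $\lVert\Psi(\beta_n)\rVert_\infty\to0$ but there are $x_n\in K$ with $\lVert\beta_n(x_n)-\beta_0(x_n)\rVert\ge\varepsilon$. The conditions as stated do not exclude distant zeros of $b\mapsto\Psi(b,x)$. The natural remedy is to invoke the paper's standing assumption that $\beta_0(x)$ is the unique zero, so that the model is identifiable, and combine it with continuity in $(b,x)$ and compactness of $K$. That yields $\inf_{x\in K}\inf_{\lVert b-\beta_0(x)\rVert\ge\varepsilon}\lVert\Psi(b,x)\rVert>0$, provided there is a compactness or coercivity argument in $b$ (for example, restricting $\beta$ to $U$). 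The local bound above then finishes the argument. If a purely local version of (A) is all that is needed, restricting to $U$ suffices and the local bound is already the proof.

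\textbf{Step 3 ((D)).} By the mean value theorem along the segment in $U$,
\[
\lVert\psi(O;\beta_n(x),x)-\psi(O;\beta_0(x),x)\rVert\le \sup_{b\in U}\lVert\dot\psi(O;b,x)\rVert\,\lVert\beta_n(x)-\beta_0(x)\rVert .
\]
Squaring and taking $P$, the second-moment bound in (3) gives $\sup_xP\lVert\cdot\rVert^2\lesssim\lVert\beta_n-\beta_0\rVert_\infty^2\to0$. This needs the supremum over $b$ inside the expectation; if (3) controls it only outside, I would strengthen (3) slightly or use the envelope $M_x$ from (2) to bound it.
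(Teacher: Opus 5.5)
Your treatment of (E) and (D) is essentially the paper's. The paper packages the same three ingredients into Lemma~\ref{lemma:frechet-differentiability-superposition}: the pointwise second-order Taylor bound, the uniform control from (3), and the bound $\lVert A(x)^{-1}\rVert\le 1/c$ from (4). It proves (D) by the same mean-value estimate.

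Your worry at the end of Step~3 is well founded. In the paper, the intermediate point $\tilde\beta_{n,j}(x)$ depends on $O$, yet $\sup_{b\in U}$ is pulled outside $P$ without comment. The fix needs no strengthening of (3). The envelope $M_x$ would not help anyway, since it has only a first moment and is not uniform in $x$. Use the integral form of the mean value theorem instead. With $h_n:=\beta_n-\beta_0$ and $\lVert h_n\rVert_\infty<\varepsilon$, the segment stays in the $\varepsilon$-enlargement and hence in $U$. Jensen's inequality and Tonelli then give, for every $x\in K$,
\[
P\bigl\lVert \psi(\cdot;\beta_n(x),x)-\psi(\cdot;\beta_0(x),x)\bigr\rVert^2
\le \lVert h_n(x)\rVert^2\int_0^1 P\bigl\lVert\dot\psi\bigl(\cdot;\beta_0(x)+t\,h_n(x),x\bigr)\bigr\rVert^2\,dt
\le \lVert h_n\rVert_\infty^2\,\sup_{x\in K}\sup_{b\in U}P\lVert\dot\psi(\cdot;b,x)\rVert^2 ,
\]
so (3) exactly as stated suffices.

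The same device repairs Step~1. There you appeal to differentiating under $P$ on all of $U$, which (2) does not license: its envelopes live only on $x$-dependent neighbourhoods of $\beta_0(x)$. The paper's proof has the same gloss. Instead, apply Taylor's formula with integral remainder to $\psi(O;\cdot,x)$ for each fixed $O$, then integrate.

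On (A) your suspicion is right, and this is where the paper's own proof breaks. The paper obtains (A) from Corollary~\ref{corollary:identifiability-z-estimator-frechet}. That corollary applies the Fr\'echet remainder bound $o(\lVert\theta_n-\theta_0\rVert_\infty)$ to an arbitrary sequence. The bound is only available once $\lVert\theta_n-\theta_0\rVert_\infty$ is already small, so the argument is circular.

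In fact (1)--(4) do not imply (A). Take $p=1$, $\psi(O;b,x)=\sin b$, $\beta_0\equiv 0$ and $U=(-1,1)$. All four conditions hold with $c=1$ and $M_x\equiv1$, yet $\beta_n\equiv\pi$ gives $\Psi(\beta_n)=0$.

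What is provable is your local bound, with one correction. Restricting to $U$ is not enough: replace $U$ by $(-4,4)$ in the example, and (3) still holds while $\pi\in U$. What works is restricting to the sup-norm ball of radius $\min\{\varepsilon,c/(2C)\}$ about $\beta_0$, with $C:=\tfrac12\sup_{x\in K}\sup_{b\in U}P\lVert\ddot\psi(\cdot;b,x)\rVert$. On that ball, $\lVert\beta-\beta_0\rVert_\infty\le(2/c)\lVert\Psi(\beta)\rVert_\infty$. To use Theorem~\ref{theorem:master-z-estimation} you would then have to restrict the parameter space to that ball, or establish consistency of $\hat\beta_n$ separately.

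A global (A) needs hypotheses outside the proposition, and your proposed remedy conflates two different things:
\begin{itemize}
\item The paper's standing assumption, that $G(x,\cdot)$ arises from a unique $\beta_0(x)$, is model identifiability. It makes $\beta_0(x)$ the unique maximizer of the expected log-likelihood. It does not exclude other zeros (stationary points) of $\Psi(\cdot,x)$.
\item Uniformity over $x\in K$ would further require compactness in $b$ and continuity of $\Psi(b,x)$ in $x$. Conditions (1)--(4) supply neither. In this application, $\Psi(b,x)$ depends on $x$ through the law of $Y(x)$, which can jump in $x$ when the $\nu_i$ have atoms.
\end{itemize}
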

In the following section, we will also verify conditions (B) and (C) for the Beta specification; however, general and directly useful sufficient conditions for (B) and (C) seem to be difficult to obtain.

Finally, the uniform asymptotic linearity of $\hat{\beta}_n$ on $K$, following from Theorem \ref{theorem:master-z-estimation}, implies uniform asymptotic linearity of $G(\alpha; \hat{\beta}_n(\cdot))$ on $K$ under smoothness conditions on the map $\beta(x) \mapsto G(\alpha; \beta(x))$.

\begin{proposition}\label{proposition:smooth-transformation-z-estimator}
Assume the conditions of Theorem \ref{theorem:master-z-estimation} hold.
Further assume that $\beta(x) \mapsto G(\alpha; \beta(x))$ is continuously differentiable on a convex compact set $U \subset \mathbb{R}^p$ containing an $\varepsilon$-enlargement of $\left\{ \beta_0(x): x \in K \right\} \subset \mathbb{R}^p$, with derivative $\dot{G}(\alpha; \beta_0(x))$. Then
\begin{equation*}
    n ^{1/2} \left( G(\alpha; \hat{\beta}_n(x)) - G(\alpha; \beta_0(x))  \right) = -n^{1/2} \dot{G}(\alpha; \beta_0(x)) \dot{\Psi}_{\beta_0}^{-1}\Psi_n(\beta_0)(x)  + r_n(x), \quad \sup_{x \in K} |r_n(x)| = o_P(1).
\end{equation*}
\end{proposition}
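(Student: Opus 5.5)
This is a functional delta method argument, done pointwise in $x$ but with remainders controlled uniformly over $K$. Write $\Delta_n(x) := \hat\beta_n(x)-\beta_0(x)$. By Theorem~\ref{theorem:master-z-estimation}, $n^{1/2}\Delta_n = -n^{1/2}\dot\Psi_{\beta_0}^{-1}\Psi_n(\beta_0) + \rho_n$ with $\|\rho_n\|_\infty = o_P(1)$. Also $n^{1/2}\Psi_n(\beta_0) = n^{1/2}(\Psi_n-\Psi)(\beta_0)$ converges weakly to a tight limit in $\ell^\infty(K)^p$, because $\Psi(\beta_0)=0$. By Proposition~\ref{prop:frechet-derivative-pointwise}, $\dot\Psi_{\beta_0}^{-1}$ acts pointwise by the matrices $(P\dot\psi(O;\beta_0(x),x))^{-1}$, whose operator norms are bounded by $1/c$ uniformly in $x$. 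Hence $\dot\Psi_{\beta_0}^{-1}$ is a bounded linear operator, and $\|n^{1/2}\dot\Psi_{\beta_0}^{-1}\Psi_n(\beta_0)\|_\infty = O_P(1)$. Together these give $\|\Delta_n\|_\infty = O_P(n^{-1/2})$.

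On the event $E_n := \{\|\Delta_n\|_\infty<\varepsilon\}$, each $\hat\beta_n(x)$ lies in the $\varepsilon$-ball around $\beta_0(x)$, which is contained in $U$. Since $\mathbb P(E_n)\to 1$, it suffices to work on $E_n$. Because $U$ is convex, the segment joining $\beta_0(x)$ and $\hat\beta_n(x)$ lies in $U$, so the mean value theorem in integral form gives
\[
G(\alpha;\hat\beta_n(x)) - G(\alpha;\beta_0(x)) = \dot G(\alpha;\beta_0(x))\Delta_n(x) + \int_0^1 \bigl\{\dot G(\alpha;\beta_0(x)+s\Delta_n(x)) - \dot G(\alpha;\beta_0(x))\bigr\}\,ds\;\Delta_n(x).
\]
Since $\dot G$ is continuous on the compact set $U$, it is bounded there, say by $L$, and it is uniformly continuous there with modulus $\omega$. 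The integral term is therefore at most $\omega(\|\Delta_n\|_\infty)\,\|\Delta_n(x)\|$ uniformly in $x\in K$. After multiplying by $n^{1/2}$, this term is bounded by $\omega(\|\Delta_n\|_\infty)\cdot O_P(1) = o_P(1)$ uniformly in $x$.

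For the leading term, substitute the expansion of $\Delta_n$:
\[
n^{1/2}\dot G(\alpha;\beta_0(x))\Delta_n(x) = -n^{1/2}\dot G(\alpha;\beta_0(x))\dot\Psi_{\beta_0}^{-1}\Psi_n(\beta_0)(x) + \dot G(\alpha;\beta_0(x))\rho_n(x).
\]
The last term is bounded by $L\|\rho_n\|_\infty = o_P(1)$ uniformly in $x$. Collecting the two $o_P(1)$ pieces together with the contribution from $E_n^c$ gives $r_n$ with $\sup_{x\in K}|r_n(x)|=o_P(1)$.

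The main obstacle is uniformity in $x$, in two places. The first is obtaining $\|\Delta_n\|_\infty = O_P(n^{-1/2})$, which relies on tightness of the Donsker limit and on boundedness of $\dot\Psi_{\beta_0}^{-1}$ via condition (4). The second is that the Taylor remainder must be controlled with a single modulus of continuity, which is why compactness of $U$, rather than mere openness, is used.
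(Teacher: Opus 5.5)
Your proposal is correct and is essentially the paper's argument: restrict to the event that $\hat\beta_n$ stays in $U$, and expand $G(\alpha;\cdot)$ around $\beta_0(x)$ by the mean value theorem. The remainder is then killed by uniform continuity of $\dot G$ on the compact $U$ together with $n^{1/2}\|\hat\beta_n-\beta_0\|_\infty=O_P(1)$, and the remainder of Theorem~\ref{theorem:master-z-estimation} is transferred using $\sup_U\|\dot G\|<\infty$ (you also spell out the $O_P(n^{-1/2})$ rate, which the paper simply asserts). The only quibble is that boundedness of $\dot\Psi_{\beta_0}^{-1}$ should be taken from condition (E), which already gives continuous invertibility, rather than from Proposition~\ref{prop:frechet-derivative-pointwise}, whose hypotheses are not part of this statement.
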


\subsection{Beta--Binomial}\label{appendix:beta-binomial}

This subsection applies the general Z-estimation theory from the previous section to the Beta specification.
The structure is as follows:
\begin{enumerate}
    \item \textbf{Preliminaries}: We define the Beta--Binomial model, derive the score vector $\psi$, the Hessian $\dot{\psi}$, and the Fisher information matrix $I(x)$.
    \item \textbf{Asymptotic linearity of $\hat{\beta}_n$ and $G(\alpha; \hat{\beta}_n)$}: We establish uniform asymptotic linearity of $\hat{\beta}_n$ and $G(\alpha; \hat{\beta}_n(x))$ with explicit influence function formulas.
    \item \textbf{Conditions of Theorem~\ref{theorem:asymptotic-distribution}}: We verify the three conditions required for the integrated estimator $\int_K G(\alpha; \hat{\beta}_n(x)) \, dx$ to be asymptotically normal, which are used in the main text for $K = \mathcal I$.
\end{enumerate}

\subsubsection{Preliminaries}

For each $x \in K$, let the mixing distribution be Beta with parameter $\beta(x) := (a_x,b_x)^\top \in (0,\infty)^2$.
The function-valued parameter is therefore $\beta \in \ell^\infty(K)^2$ with true value $\beta_0 \in \ell^\infty(K)^2$ where $\beta_0(x) = (a_{0,x}, b_{0,x})^\top$.
Conditional on $m_i$, the model is $Y_i(x) \sim \mathrm{Beta\text{-}Binomial}(m_i, a_x, b_x)$, with log-likelihood contribution
\begin{equation*}
\ell_i(\beta(x),x)
=
\log B(Y_i(x)+a_x,m_i-Y_i(x)+b_x)
-
\log B(a_x,b_x).
\end{equation*}
The corresponding estimating function is the score function
\begin{equation*}
\psi(O_i;\beta(x),x)
=
\begin{pmatrix}
s_a(O_i;\beta(x),x) \\
s_b(O_i;\beta(x),x)
\end{pmatrix},
\end{equation*}
where
\begin{equation}\label{eq:score-beta-binomial}
\begin{aligned}
s_a(O_i; \beta(x), x) &= \psi_0(Y_i(x)+a_x) - \psi_0(a_x) + \psi_0(a_x+b_x) - \psi_0(m_i+a_x+b_x), \\
s_b(O_i; \beta(x), x) &= \psi_0(m_i-Y_i(x)+b_x) - \psi_0(b_x) + \psi_0(a_x+b_x) - \psi_0(m_i+a_x+b_x),
\end{aligned}
\end{equation}
and $\psi_0$ denotes the digamma function.
The Hessian is
\begin{equation*}
\dot\psi(O_i;\beta(x),x)
=
\begin{pmatrix}
\partial_a s_a & \partial_b s_a \\
\partial_a s_b & \partial_b s_b
\end{pmatrix},
\end{equation*}
where
\begin{equation*}
\begin{aligned}
    \partial_a s_a
    & =
    \psi_1(Y_i(x)+a_x)
    -
    \psi_1(a_x)
    +
    \psi_1(a_x+b_x)
    -
    \psi_1(m_i+a_x+b_x), \\
    \partial_b s_b
    & =
    \psi_1(m_i-Y_i(x)+b_x)
    -
    \psi_1(b_x)
    +
    \psi_1(a_x+b_x)
    -
    \psi_1(m_i+a_x+b_x), \\
    \partial_b s_a
    & =
    \partial_a s_b
    =
    \psi_1(a_x+b_x)
    -
    \psi_1(m_i+a_x+b_x),
\end{aligned}
\end{equation*}
and $\psi_1$ denotes the trigamma function.
The entries of the Fisher information matrix, defined as $I(x) := -P \dot\psi(O_i;\beta_0(x),x)$, are
\begin{equation}\label{eq:fisher-information-beta-mixing}
\begin{split}
    I_{aa}(x)
    & =
    -
    P\left[\psi_1(Y_i(x)+a_{0,x})\right]
    +
    \psi_1(a_{0,x})
    -
    \psi_1(a_{0,x}+b_{0,x})
    +
    P \left[ \psi_1(m_i+a_{0,x}+b_{0,x}) \right], \\
    I_{bb}(x)
    & = 
    -
    P\left[\psi_1(m_i-Y_i(x)+b_{0,x})\right]
    +
    \psi_1(b_{0,x})
    -
    \psi_1(a_{0,x}+b_{0,x})
    +
    P \left[ \psi_1(m_i+a_{0,x}+b_{0,x}) \right], \\
    I_{ab}(x)
    & =
    -
    \psi_1(a_{0,x}+b_{0,x})
    +
    P\left[\psi_1(m_i+a_{0,x}+b_{0,x})\right].
\end{split}
\end{equation}

The Beta CDF with parameters $(a_x, b_x)$ evaluated in $\alpha$ equals $I_{\alpha}(a_x, b_x)$ where $I_{\alpha}(a_x, b_x)$ is the regularized incomplete beta function defined as follows:
\begin{equation*}
    I_{\alpha}(a_x, b_x) := \frac{\operatorname{B}_{\alpha}(a_x, b_x)}{\operatorname{B}(a_x, b_x)}, \quad \text{for} \quad \operatorname{B}_{\alpha}(a_x, b_x) := \int_0^{\alpha} t^{a_x - 1} (1 - t)^{b_x - 1} \, dt.
\end{equation*}
This function is differentiable with respect to $a_x$ and $b_x$ for $a_x, b_x > 0$ and $\alpha \in (0, 1)$. The partial derivatives are given by
\begin{equation}\label{eq:beta-cdf-gradient}
\begin{aligned}
    \partial_{a_x} I_{\alpha}(a_x, b_x) &= \frac{1}{\mathrm{B}(a_x, b_x)} \int_0^\alpha t^{a_x - 1}(1 - t)^{b_x - 1} \log t \, dt - I_\alpha(a_x, b_x) \big( \psi_0(a_x) - \psi_0(a_x + b_x) \big), \\
    \partial_{b_x} I_{\alpha}(a_x, b_x) &= \frac{1}{\mathrm{B}(a_x, b_x)} \int_0^\alpha t^{a_x - 1}(1 - t)^{b_x - 1} \log (1 - t) \, dt - I_{\alpha}(a_x, b_x) \big( \psi_0(b_x) - \psi_0(a_x + b_x) \big).
\end{aligned}
\end{equation}

\subsubsection{Uniform Asymptotic Linearity of $\hat{\beta}_n$ and $G(\alpha; \hat{\beta}_n)$ on $K$}

\paragraph{Uniform Asymptotic Linearity of $\hat{\beta}_n$ on $K$.}

We verify the conditions of Theorem \ref{theorem:master-z-estimation} next. For this, we make the following assumptions on the data generating-process and the true parameter $\beta_0$:
\begin{enumerate}[label=(B\arabic*), ref=B\arabic*, leftmargin=4em]
    \item \label{assumption:compact-pm} $\left\{ (a_{0, x}, b_{0, x}): x \in K  \right\} \subset (0, \infty)^2$ is covered by a compact subset of $(0, \infty)^2$,
    \item \label{assumption:bounded-m} $1 \le m_i < C < \infty$ with probability one and $P(m_i>1) > 0$.
\end{enumerate}
Conditions (A), (D), and (E) of Theorem \ref{theorem:master-z-estimation} follow from Proposition \ref{prop:frechet-derivative-pointwise} where the conditions of this proposition are verified as follows:
\begin{enumerate}
    \item The twice continuous differentiability of $\beta(x) \mapsto \psi(O_i; \beta(x), x)$ follows from the polygamma functions being infinitely differentiable on $(0, \infty)$.
    \item The existence of an envelope function $M_x(O_i)$ with finite expectation follows from the fact that $1 \le m_i < C < \infty$ with probability one and that the digamma and trigamma functions are bounded on all compact subsets of $(0, \infty)$.
    \item The uniform boundedness of $P \lVert \dot{\psi}(O_i; \beta(x), x) \rVert^2$ and $P \lVert \ddot{\psi}(O_i; \beta(x), x) \rVert$ follows from the fact that $1 \le m_i < C < \infty$ with probability one and that the digamma and trigamma functions are bounded on all compact subsets of $(0, \infty)$.
    \item This follows from the uniform nonsingularity of the Fisher information matrix $I(x)$, which itself follows from Lemma \ref{lemma:uniform-nonsingularity-beta}, which is applicable under Assumptions \ref{assumption:compact-pm} and \ref{assumption:bounded-m}.
\end{enumerate}

Conditions (B) and (C) of Theorem \ref{theorem:master-z-estimation} follow from Proposition \ref{proposition:bracketing-entropy-beta} under condition (i). Note that condition (B) only holds by Proposition \ref{proposition:bracketing-entropy-beta} if we restrict the parameter space to a compact subset of $(0, \infty)^2$ that contains $\left\{ (a_{0, x}, b_{0, x}): x \in K  \right\}$.

It now follows from Theorem \ref{theorem:master-z-estimation} that $\hat{\beta}_n$ is uniformly asymptotically linear on $K$ with influence function $I(x)^{-1} \psi(O_i; \beta_0(x), x)$.

\paragraph{Uniform Asymptotic Linearity of $G(\alpha; \hat \beta_n(x))$ on $K$}

The uniform asymptotic linearity of $\hat{\beta}_n$ on $K$ implies the uniform asymptotic linearity of $G(\alpha; \hat{\beta}_n(x))$ on $K$ under the smoothness conditions of Proposition \ref{proposition:smooth-transformation-z-estimator}, which are satisfied in the current setting. Indeed, the map $\beta(x) \mapsto G(\alpha; \beta(x))$ is continuously differentiable on $(0, \infty)^2$, which contains a convex compact set that contains a $\varepsilon$-enlargement of $\left\{ \beta_0(x): x \in K \right\}$.
The influence function of $G(\alpha; \hat{\beta}_n(x))$ is then
\begin{equation}\label{eq:influence-function-beta-G}
    \operatorname{IF}(O_i; x) := \dot{G}(\alpha, \beta_0(x)) I^{-1}(x) \psi(O_i; \beta_0(x), x).
\end{equation} 
% Define the determinant of the Fisher information matrix as $\Delta(x) := I_{aa}(x) I_{bb}(x) - I_{ab}(x)^2$, where $I_{aa}(x)$, $I_{bb}(x)$, and $I_{ab}(x)$ are given in equation \eqref{eq:fisher-information-beta-mixing}. Then the inverse of $I(x)$ is
% \begin{equation*}
%     I(x)^{-1} = \frac{1}{\Delta(x)} \begin{pmatrix} I_{bb}(x) & -I_{ab}(x) \\ -I_{ab}(x) & I_{aa}(x) \end{pmatrix}.
% \end{equation*}
% We then have that 
% \begin{equation*}
%     I(x)^{-1} \begin{pmatrix} s_a(O_i; \beta_0(x), x) \\ s_b(O_i; \beta_0(x), x) \end{pmatrix} = \frac{1}{\Delta(x)} \begin{pmatrix} I_{bb}(x) s_a - I_{ab}(x) s_b \\ -I_{ab}(x) s_a + I_{aa}(x) s_b \end{pmatrix},
% \end{equation*}
% where, for brevity, $s_a := s_a(O_i; \beta_0(x), x)$ and $s_b := s_b(O_i; \beta_0(x), x)$. Then the influence function is
% \begin{equation}\label{eq:influence-function-beta-G}
%     \text{IF}(O_i; x) = \partial_{a_x} I_\alpha(a_{0,x}, b_{0,x}) \cdot \frac{I_{bb}(x) s_a - I_{ab}(x) s_b}{\Delta(x)} + \partial_{b_x} I_\alpha(a_{0,x}, b_{0,x}) \cdot \frac{-I_{ab}(x) s_a + I_{aa}(x) s_b}{\Delta(x)},
% \end{equation}
% where the partial derivatives $\partial_{a_x} I_\alpha(a_x, b_x)$ and $\partial_{b_x} I_\alpha(a_x, b_x)$ are given in equation \eqref{eq:beta-cdf-gradient}, and the score functions are given in equation \eqref{eq:score-beta-binomial}.

\begin{remark}
    Assumption \ref{assumption:compact-pm} requires that the distribution of $F_{\nu}(x)$ for $x$ near the endpoints of $\mathcal I$ is not approaching a point mass at 0 or 1, which implies that the subject-specific distributions $F_{\nu}$ have non-zero mass near the endpoints of $\mathcal I$. This ensures uniform bounds on the scores and guarantees that the Beta-binomial Fisher information is nonsingular uniformly over $\mathcal I$.\footnote{We suspect that this assumption can be relaxed. However, our current strategy of proof requires this assumption. This strategy involves showing uniform asymptotic linearity of $(\hat{a}_{n, x}, \hat{b}_{n, x})$, which in turn implies uniform asymptotic linearity of $I_{\alpha}(\hat a_{n, x}, \hat b_{n, x})$. We suspect that uniform asymptotic linearity may hold for $I_{\alpha}(\hat a_{n, x}, \hat b_{n, x})$, but not for $(\hat a_{n, x}, \hat b_{n, x})$, under relaxed conditions.} 
    If the mixing distribution of the endpoints of $\mathcal{I}$ are point masses at zero and one, then $a_x \to 0$ as $x \to \inf \mathcal{I}$, and $b_x \to 0$ as $x \to \sup \mathcal{I}$. Hence, $\left\{ \beta_0(x): x \in\mathcal{I} \right\}$ is not contained in a compact set. Furthermore, we then have that $\lVert I(x) \rVert^{-1} \to \infty$ as $x$ approaches the endpoints of $\mathcal{I}.$ 
\end{remark}

\begin{remark}
    Assumption \ref{assumption:bounded-m} is reasonable in practically any application with sparse distributional data. We expect that this assumption will change to $P(m_i>p) > 0$ if a parametric binomial mixture submodel with $p + 1$ parameters is used at every $x \in \mathcal I$. 
\end{remark}

\subsubsection{Conditions of Theorem \ref{theorem:asymptotic-distribution}}\label{appendix:conditions-beta-binomial}

We now verify the conditions of Theorem \ref{theorem:asymptotic-distribution} for the Beta specification. In addition to Assumptions \ref{assumption:compact-pm} and \ref{assumption:bounded-m} above, we make the following additional assumption:
\begin{enumerate}[label=(B\arabic*), ref=B\arabic*, leftmargin=4em, start=3]
    \item \label{assumption:lipschitz} There exist $c_1, c_2 > 0$ such that $c_1 |x_1 - x_2| \le |H_{0, \alpha}(x_1) - H_{0, \alpha}(x_2)| \le c_2 |x_1 - x_2|$ for all $x_1, x_2 \in K$, where $H_{0, \alpha}(x) := 1 - G(\alpha; \beta_0(x))$.
\end{enumerate}

To ensure that $\hat H_{n, \alpha}(x) := 1 - G(\alpha; \hat{\beta}_n(x))$ is a valid CDF, we further apply isotonic regression to the initial estimator $x \mapsto 1 - G(\alpha; \hat{\beta}_n(x))$ to obtain a non-decreasing estimator, which we denote by $\hat H_{n, \alpha}^{\text{iso}}(x)$. \textcite{westling2020correcting} show that, if $\hat H_{n, \alpha}(x)$ is uniformly asymptotically linear on $K$ and Assumption \ref{assumption:lipschitz} holds, then $\hat H_{n, \alpha}^{\text{iso}}(x)$ is also uniformly asymptotically linear on $K$ with the same (pointwise) influence function as $\hat H_{n, \alpha}(x)$. In the arguments below, we use $\hat H_{n, \alpha}$ and $H_{n, \alpha}^{\text{iso}}$ interchangeably. 

\paragraph{Condition \ref{cond:AL-linear-expansion}.}

As shown in the previous subsection, $G(\alpha, \hat{\beta}_n(x))$ is uniformly asymptotically linear on $K$ with influence function given in \eqref{eq:influence-function-beta-G}. Because $K$ is assumed to be compact, the integrated remainder is also $o_P(1)$.

Below for condition \ref{cond:AL-integrated-if}, we verify that $\text{IF}(O_i; x)$ is uniformly bounded on $K$ for all $O_i$, which implies that $\mathbb{E}\! \left[  \int_{K} \left| \psi_{\alpha}(O; x)  \right| \right] < \infty$. 
%Hence, $x \mapsto \text{IF}(O_i; x)$ is integrable on $K$ with probability one, and the integrated influence function is well defined. 

\paragraph{Condition \ref{cond:AL-cdf-integrability}.}

By construction, isotonic regression applied to $x \mapsto 1 - G(\alpha; \hat{\beta}_n(x))$ produces a non-decreasing function $\hat H_{n, \alpha}^{\text{iso}}$ on $K$. Without loss of generality,  $\hat H_{n,\alpha}^{\text{iso}}$ can be defined to be right-continuous with limits 0 and 1 at the boundaries of $K$. Since $K$ is compact, the first moment of $\hat{H}_{n,\alpha}^{\text{iso}}$ is finite.
Hence, $\hat{H}_{n,\alpha}^{\text{iso}}$ is a valid CDF (i.e., non-decreasing, right-continuous, with limits 0 and 1 at the boundaries) with finite first moment.

\paragraph{Condition \ref{cond:AL-integrated-if}.}

This condition requires that $\operatorname{Var}(\tilde{\psi}_{\alpha}) \le E[\int_K |\text{IF}(O_i, x)|^2 \, dx]$ is finite, where $\text{IF}(O_i, x)$ is the influence function given in equation \eqref{eq:influence-function-beta-G}.
This influence function is a product of three components: (i) partial derivatives $\partial_{a_x} I_\alpha$ and $\partial_{b_x} I_\alpha$, (ii) entries of $I(x)^{-1}$, and (iii) score functions $s_a$ and $s_b$.
The partial derivatives do not depend on the data and are continuous functions of $\beta_0(x)$. Since $\{ \beta_0(x) : x \in K \}$ is covered by a compact subset of $(0, \infty)^2$, they are bounded on $K$.
The smallest eigenvalue of $I(x)$ is bounded away from zero by Lemma \ref{lemma:uniform-nonsingularity-beta}, so the entries of $I(x)^{-1}$ are bounded on $K$.
The score functions themselves are bounded on $K$ because they only involve digamma functions evaluated at $a_{0,x}$, $Y_i(x) + a_{0,x}$, $b_{0,x}$, $a_{0,x}+b_{0,x}$, $m_i - Y_i(x) + b_{0,x}$, and $m_i + a_{0,x} + b_{0,x}$, which belong to a compact subset of $(0, \infty)$ due to Assumptions \ref{assumption:compact-pm} and \ref{assumption:bounded-m}. Hence, $\text{IF}(O_i, x)$ is uniformly bounded on $K$, and $E[\int_K |\text{IF}(O_i, x)|^2 \, dx]$ is finite.

\begin{remark}
    Assumption \ref{assumption:lipschitz} basically requires that the distribution of subject-specific $\alpha$-quantiles, which has CDF $x \mapsto H_{0,\alpha}(x)$, has a Lebesgue density bounded between $c_1 > 0$ and $c_2 < \infty$. This assumption is required for applying the theoretical results in \textcite{westling2020correcting} which imply that the isotonic regression step does not change uniform asymptotic linearity. This condition can probably be relaxed, especially the lower bound \parencite{westling2020correcting}.
\end{remark}

\subsection{Technical Results and Proofs}

\subsubsection{Function-Analytic Results}

\begin{lemma}[Fréchet differentiability and continuous invertibility of pointwise operators]\label{lemma:frechet-differentiability-superposition}
Let $K$ be an arbitrary index set and define
\begin{equation*}
    \ell^\infty(K)^p
    =
    \Big\{
    \theta : K \to \mathbb{R}^p \; \text{such that} \; \|\theta\|_\infty := \sup_{x\in K} \|\theta(x)\| < \infty
    \Big\}.
\end{equation*}
Let $g : K \times \mathbb{R}^p \to \mathbb{R}^p$ and define the operator
\begin{equation*}
    T : \ell^\infty(K)^p \to \ell^\infty(K)^p, \qquad (T\theta)(x) := g(x,\theta(x)).
\end{equation*}
Fix $\theta_0 \in \ell^\infty(K)^p$ and assume:
\begin{enumerate}
    \item For every $x\in K$, the map $y \mapsto g(x,y)$ is twice continuously differentiable with first and second-order partial derivatives denoted by $\dot{g}(x, y)$ and $\ddot{g}(x, y)$ respectively.
    \item Let $R(\theta_0)$ be the range of $\theta_0$ on $K$. There exists a convex set $U \subset \mathbb{R}^p$ containing an $\varepsilon$-enlargement\footnote{The $\varepsilon$-enlargement of $A$, denoted by $A^{\varepsilon}$, is defined as $A^{\varepsilon} := \left\{ x \in \mathbb{R}^p : \operatorname{dist}(x, A) < \varepsilon \right\}$ where $\operatorname{dist}(x, A) := \inf_{y \in A} \lVert x - y \rVert$. } of $R(\theta_0)$ such that 
    \begin{equation*}
            \sup_{x\in K}\sup_{y\in U} \|\dot{g}(x,y)\| < \infty, \quad
            \sup_{x\in K}\sup_{y\in U} \|\ddot{g}(x,y)\| < \infty.
    \end{equation*}
    \item There exists $c>0$ such that
    \[
    \|\dot{g}(x,\theta_0(x)) v\| \ge c \|v\| \quad \text{for all } x\in K \text{ and } v\in \mathbb{R}^p.
    \]
\end{enumerate}

Then:
\begin{enumerate}
    \item[(i)] $T$ is Fréchet differentiable at $\theta_0$ where the derivative $\dot{T}_{\theta_0}: \ell^{\infty}(K)^p \to \ell^{\infty}(K)^p$ is the following bounded linear operator:
    \begin{equation*}
        (\dot{T}_{\theta_0} h)(x) = \dot{g}(x,\theta_0(x))\, h(x), \qquad h\in \ell^\infty(K)^p.
    \end{equation*}
    \item[(ii)] The derivative $\dot{T}_{\theta_0}$ has a continuous inverse defined pointwise as follows: 
    \begin{equation*}
        (\dot{T}^{-1}_{\theta_0}f)(x) := \dot{g}(x,\theta_0(x))^{-1} f(x), \qquad f \in \ell^{\infty}(K)^p.
    \end{equation*}
\end{enumerate}
\end{lemma}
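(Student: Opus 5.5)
The plan is a direct Taylor expansion, done pointwise in $x$, with every constant made uniform over $x \in K$ by hypotheses 2 and 3.

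\textbf{Step 1: candidate derivative.} By hypothesis 2, $\dot g(x,\theta_0(x))$ is bounded uniformly in $x$, say by $L$. Hence the map $h \mapsto \dot g(\cdot,\theta_0(\cdot))h(\cdot)$ sends $\ell^\infty(K)^p$ into itself, is linear, and has operator norm at most $L$. So $\dot T_{\theta_0}$ is a bounded linear operator.

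\textbf{Step 2: Fréchet differentiability (part (i)).} Fix $h$ with $\|h\|_\infty<\varepsilon$. For every $x$, the segment $\{\theta_0(x)+s h(x): s\in[0,1]\}$ lies in the $\varepsilon$-enlargement of $R(\theta_0)$, and therefore in the convex set $U$. Apply Taylor's theorem with integral remainder to $y\mapsto g(x,y)$ along this segment, using hypothesis 1:
\[
\bigl\|g(x,\theta_0(x)+h(x)) - g(x,\theta_0(x)) - \dot g(x,\theta_0(x))h(x)\bigr\| \le \tfrac12 \sup_{y\in U}\|\ddot g(x,y)\|\,\|h(x)\|^2 .
\]
Take the supremum over $x$ and use the uniform bound $M:=\sup_{x\in K}\sup_{y\in U}\|\ddot g\|<\infty$. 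This gives a remainder of size at most $\tfrac M2\|h\|_\infty^2 = o(\|h\|_\infty)$, which is Fréchet differentiability. The only care needed is to state the vector-valued Taylor bound as an integral of $\ddot g$ against $h(x)\otimes h(x)$, so the bound holds componentwise without a mean-value point.

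\textbf{Step 3: invertibility (part (ii)).} Hypothesis 3 says $\|\dot g(x,\theta_0(x))v\|\ge c\|v\|$, so each matrix $A_x:=\dot g(x,\theta_0(x))$ is injective. Since it is square, it is invertible, and $\|A_x^{-1}\|\le 1/c$ uniformly in $x$. Define the pointwise inverse $(Sf)(x):=A_x^{-1}f(x)$. It maps $\ell^\infty(K)^p$ to itself with $\|S\|\le 1/c$. One checks directly that $S\dot T_{\theta_0}=\dot T_{\theta_0}S=\mathrm{id}$, so $S$ is the continuous inverse.

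\textbf{Main obstacle.} The main obstacle is minor: ensuring the Taylor segment stays in the region where the uniform second-derivative bound holds. The $\varepsilon$-enlargement together with convexity of $U$ handles this, and it is the reason those hypotheses appear.
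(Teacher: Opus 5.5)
Your proposal is correct and follows essentially the same route as the paper: a pointwise second-order Taylor expansion whose remainder is bounded uniformly in $x$ by hypothesis 2 (the segment from $\theta_0(x)$ to $\theta_0(x)+h(x)$ stays in the $\varepsilon$-enlargement, hence in $U$), and a pointwise matrix inverse whose boundedness comes from hypothesis 3. The only differences are cosmetic: you use the integral form of the Taylor remainder where the paper uses a componentwise Lagrange form, and you bound $\|A_x^{-1}\|\le 1/c$ directly where the paper states the equivalent lower bound $\|\dot T_{\theta_0}h\|_\infty \ge c\|h\|_\infty$.
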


\begin{proof}
Let $\varepsilon > 0$ be the value corresponding to the $\varepsilon$-enlargement in condition 3. For any $h \in \ell^\infty(K)^p$ with $\lVert h \rVert_{\infty} < \varepsilon$, we then have that $\theta_0(x)+h(x)\in U$ for all $x\in K$.  

\textbf{(Fréchet differentiability)}: Let $j \in \{1, \dots, p\}$ and use subscript $j$ to denote the $j$'th element of $g$ such that $g_j: K \times \mathbb{R}^p \to \mathbb{R}$ and $g = (g_1, \dots, g_p)$. 
For each $x \in K$, by the standard multivariate Taylor expansion and condition 1,
\begin{equation*}
    g_j(x,\theta_0(x)+h(x)) = g_j(x,\theta_0(x)) + \dot{g}_j(x,\theta_0(x)) h(x) + R_j(h)(x),
\end{equation*}
where
\begin{equation*}
    R_j(h)(x) = \frac12 h(x)^\top \ddot{g}_j(x,\tilde\theta(x)) h(x),
\end{equation*}
for some $\tilde\theta(x)$ on the line segment between $\theta_0(x)$ and $\theta_0(x)+h(x)$. Because $U$ is convex and $\theta_0(x), \theta_0(x) + h(x) \in U$, the point $\tilde\theta(x)$ also belongs to $U$ for all $x \in K$. 

By uniform boundedness of the Hessian (condition 2), we have that 
\begin{equation*}
    \|R_j(h)\|_\infty := \sup_{x\in K} \left| R_j(h)(x) \right| \le \lVert h \rVert_{\infty}^2 \cdot \sup_{x\in K}\sup_{y\in U} \|\ddot{g}_j(x,y)\| < \infty.
\end{equation*}
This holds for any $j \in \{1, \dots, p\}$ and thus implies that $\|R(h)\|_\infty < \infty$.
Letting $\dot{T}_{\theta_0}(h)(x) = \dot{g}(x,\theta_0(x))\, h(x)$, this thus shows that 
\begin{equation*}
    \frac{\lVert T(\theta_0 + h) - T(\theta_0) - \dot{T}_{\theta_0}(h)  \rVert_{\infty}}{\lVert h \rVert_{\infty}} = \frac{\lVert R_{\theta_0}(h) \rVert_{\infty}}{\lVert h \rVert_{\infty}} =  O(\lVert h \rVert_{\infty}) = o(1) \quad \text{as} \quad \lVert h \rVert_{\infty} \to 0.
\end{equation*}
Linearity and boundedness of $\dot{T}$ follows directly as follows, for any $h_1, h_2 \in \ell^\infty(K)^p$ and any $x \in K$:
\begin{align*}
    & \dot{T}_{\theta_0}(h_1 + h_2)(x) := \dot{g}(x,\theta_0(x))\, \left( h_1(x) + h_2(x) \right) = \dot{T}_{\theta_0}(h_1)(x) + \dot{T}_{\theta_0}(h_2)(x), \\
    & \lVert \dot{T}_{\theta_0}(h) \rVert_{\infty} = \sup_{x \in K} \lVert \dot{g}(x,\theta_0(x)) h(x) \rVert \le \sup_{x\in K} \|\dot{g}(x,\theta_0(x))\| \cdot \lVert h \rVert_{\infty} < \infty.
\end{align*}
This proves Fréchet differentiability.

\textbf{(Continuous invertibility)}: 
The inverse of $\dot{T}_{\theta_0}$ is defined pointwise as follows $\dot{T}^{-1}(f)(x) := \dot{g}(x,\theta_0(x))^{-1} f(x)$ where the matrix inverse exists by condition 3.
Also, by condition 3, we have for any $h \in \ell^\infty(K)^p$ and some $c > 0$ that,
\begin{equation*}
    \|\dot{T}_{\theta_0}(h)\|_\infty := \sup_{x\in K} \| \dot{g}(x,\theta_0(x)) h(x)\| \ge c \sup_{x\in K} \|h(x)\| = c \|h\|_\infty,
\end{equation*}
which shows that $\dot{T}_{\theta_0}$ is invertible.
\end{proof}

\begin{corollary}\label{corollary:identifiability-z-estimator-frechet}
    Under the assumptions of Lemma \ref{lemma:frechet-differentiability-superposition} and for $\theta_0 \in \ell^{\infty}(K)^p$ such that $\lVert T\theta_0 \rVert_{\infty} = 0$, $T$ satisfies the following:
    \begin{equation*}
        \lVert T \theta_n \rVert_{\infty} \to 0 \quad \text{implies} \quad \lVert \theta_n - \theta_0 \rVert_{\infty} \to 0 \quad \text{for any} \quad (\theta_n)_{n \ge 1}.
    \end{equation*}
\end{corollary}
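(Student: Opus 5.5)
The plan is to reduce the implication to a quantitative \emph{local} lower bound $\lVert T\theta\rVert_\infty \ge \tfrac{c}{2}\lVert\theta-\theta_0\rVert_\infty$ on a sup-norm ball around $\theta_0$. I would then try to show that any sequence with $\lVert T\theta_n\rVert_\infty\to0$ must eventually enter that ball. The second step is where the real difficulty lies, and I do not expect the lemma's hypotheses alone to suffice for it.

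\textbf{Step 1 (local bound).} Fix $r := \min\{\varepsilon/2,\ c/(2B)\}$, where $B := \sup_{x\in K}\sup_{y\in U}\lVert \ddot g(x,y)\rVert$ (times $p$ if one bounds componentwise), which is finite by condition 2 of Lemma~\ref{lemma:frechet-differentiability-superposition}. For $h\in\ell^\infty(K)^p$ with $\lVert h\rVert_\infty\le r$, the segment from $\theta_0(x)$ to $\theta_0(x)+h(x)$ stays in $U$ for all $x\in K$. The Taylor expansion from the proof of Lemma~\ref{lemma:frechet-differentiability-superposition}, together with $T\theta_0=0$, then gives for each $x$
\[
\lVert (T(\theta_0+h))(x)\rVert \;\ge\; \lVert \dot g(x,\theta_0(x))h(x)\rVert - B\lVert h(x)\rVert^2 \;\ge\; c\lVert h(x)\rVert - B\lVert h(x)\rVert^2 \;\ge\; \tfrac{c}{2}\lVert h(x)\rVert .
\]
Here the middle inequality uses condition 3. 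Taking the supremum over $x$ yields $\lVert T(\theta_0+h)\rVert_\infty \ge \tfrac c2\lVert h\rVert_\infty$. Consequently, for any sequence that eventually lies in the ball of radius $r$, $\lVert T\theta_n\rVert_\infty\to0$ forces $\lVert\theta_n-\theta_0\rVert_\infty\le \tfrac2c\lVert T\theta_n\rVert_\infty\to0$.

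\textbf{Step 2 (arbitrary sequences).} Suppose, for contradiction, that $\lVert\theta_n-\theta_0\rVert_\infty\not\to0$ while $\lVert T\theta_n\rVert_\infty\to0$. Passing to a subsequence, there are points $x_n\in K$ with $\lVert\theta_n(x_n)-\theta_0(x_n)\rVert\ge\eta>0$, and by Step 1 we may take $\eta\ge r$. Since $T$ acts pointwise, this produces points $y_n=\theta_n(x_n)$ with $\lVert g(x_n,y_n)\rVert\to0$ and $\lVert y_n-\theta_0(x_n)\rVert\ge r$. The contradiction would follow from a uniform separation property of the form
\[
\inf_{x\in K}\ \inf\bigl\{\lVert g(x,y)\rVert:\ y\ \text{admissible},\ \lVert y-\theta_0(x)\rVert\ge r\bigr\}>0 .
\]

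\textbf{Main obstacle.} The lemma's hypotheses control $g$ only on the enlargement $U$ of the range of $\theta_0$, so by themselves they cannot exclude $g(x,\cdot)$ having other zeros, or near-zeros, far from $\theta_0(x)$. Step 2 is therefore the crux.

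I would first try a continuation argument along $s\mapsto\theta_0+s(\theta_n-\theta_0)$ inside the convex set $U$. At the parameter $s$ where the path exits the $r$-ball, Step 1 gives $\lVert T\rVert_\infty\ge cr/2$, and one would hope to propagate this bound to $s=1$. That propagation needs some monotonicity or global nondegeneracy of $\dot g$ along the path, which the hypotheses do not supply.

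Failing that, I would obtain the separation property from the setting in which the corollary is used, not from the lemma's hypotheses. In the Beta specification, the parameter set is restricted to a compact subset of $U$. There, $\Psi(\cdot,x)$ has the unique zero $\beta_0(x)$ by identifiability, and it is jointly continuous with $x$ ranging over the compact $K$. A compactness argument then gives the displayed infimum as a positive minimum, which closes the contradiction. I would flag explicitly that this global step rests on that restriction: under the lemma's assumptions alone the argument only yields the local statement of Step 1, so the corollary as stated for arbitrary sequences does not follow from those assumptions by this route.
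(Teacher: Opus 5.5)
Your diagnosis is right, and it can be sharpened: under the hypotheses of Lemma~\ref{lemma:frechet-differentiability-superposition} the corollary is false, not just unprovable by your route.

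Take $p=1$, any $K$, $g(x,y)=\sin y$ and $\theta_0\equiv 0$. All three hypotheses hold with $U=\mathbb{R}$ and $c=1$, and $T\theta_0=0$. Yet $\theta_n\equiv\pi$ gives $T\theta_n=0$ while $\lVert\theta_n-\theta_0\rVert_\infty=\pi$.

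A unique zero of each $g(x,\cdot)$ does not fix this either. The map $g(x,y)=y/(1+y^2)$ also satisfies the hypotheses with $\theta_0\equiv 0$. Taking $\theta_n\equiv n$ gives $\lVert T\theta_n\rVert_\infty=n/(1+n^2)\to 0$ while $\lVert\theta_n-\theta_0\rVert_\infty=n$.

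So to obtain condition (A) of Theorem~\ref{theorem:master-z-estimation}, two things must be assumed or verified directly: a uniform separation condition of the kind you display, and a bounded (compact) parameter set. Neither follows from Fréchet differentiability with a continuously invertible derivative.

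The paper's own proof has exactly the gap you anticipated. It bounds $\lVert\dot T_{\theta_0}(\theta_n-\theta_0)\rVert_\infty$ by $\lVert T\theta_n-T\theta_0-\dot T_{\theta_0}(\theta_n-\theta_0)\rVert_\infty+\lVert T\theta_n\rVert_\infty$. It then writes the first term as $o(\lVert\theta_n-\theta_0\rVert_\infty)$ for an arbitrary sequence. But the Fréchet remainder is $o(\lVert h\rVert_\infty)$ only as $\lVert h\rVert_\infty\to 0$, so that step presupposes the conclusion. Made rigorous, the argument gives only the local statement. Your Step~1 is a cleaner version of that local statement: it has an explicit radius $r$ and constant $2/c$, and it holds pointwise at every $x$ with $\lVert\theta(x)-\theta_0(x)\rVert\le r$.

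Two details of your application-level rescue need repair.

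First, continuity of $x\mapsto\beta_0(x)$ is not assumed, so $\Psi(\beta,x)$ need not be jointly continuous in $x$. Under the Beta model, however, $\Psi(\beta,x)=\Phi(\beta,\beta_0(x))$. Here $\Phi(\beta,b)$ is the expected score when $m\sim\eta$ and $Y\mid m$ is Beta--binomial with parameter $b$. Since $m$ is bounded, $\Phi$ is a finite sum and hence jointly continuous. Run the compactness argument over $\beta$ in the restricted parameter set and $b$ in a compact set containing the range of $\beta_0$.

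Second, identifiability gives a unique maximizer of the expected log-likelihood, not a unique root of the expected score. You must therefore do one of two things:
\begin{itemize}
\item rule out other stationary points of $\Phi(\cdot,b)$ separately; or
\item prove uniform consistency by an argmax argument, where identifiability plus compactness and continuity do suffice, and feed that into the $Z$-estimation step.
\end{itemize}
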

\begin{proof}
    Let $(\theta_n)_{n \ge 1}$ be any sequence such that $\lVert T \theta_n \rVert_{\infty} \to 0$. By Lemma \ref{lemma:frechet-differentiability-superposition}, $T$ is Fréchet differentiable at $\theta_0$ with derivative $\dot{T}_{\theta_0}$. 
    \begin{align*}
        \lVert \dot{T}_{\theta_0} (\theta_n - \theta_0) \rVert_{\infty} & \le \lVert \dot{T}_{\theta_0} (\theta_n - \theta_0) - T(\theta_n) \rVert_{\infty} + \lVert T(\theta_n) \rVert_{\infty} \\
        & \le \lVert T(\theta_n) - T(\theta_0) - \dot{T}_{\theta_0} (\theta_n - \theta_0) \rVert_{\infty} + \lVert T(\theta_n) \rVert_{\infty} \\ 
        & \le o \left( \lVert \theta_n - \theta_0 \rVert_{\infty} \right) + o(1) 
    \end{align*}
    By continuous invertibility (see last part of the proof of Lemma \ref{lemma:frechet-differentiability-superposition}), we have that $c \lVert \theta_n - \theta_0 \rVert_{\infty} \le  \lVert \dot{T}_{\theta_0} (\theta_n - \theta_0) \rVert_{\infty}$ for some $c > 0$. Together with the previous display, this gives the following:
    \begin{equation*}
        \lVert \theta_n - \theta_0 \rVert_{\infty} \le c^{-1} \lVert \dot{T}_{\theta_0} (\theta_n - \theta_0) \rVert_{\infty} \le o \left( \lVert \theta_n - \theta_0 \rVert_{\infty} \right) + o(1).
    \end{equation*}
    This implies that $\lVert \theta_n - \theta_0 \rVert_{\infty} = o(1)$.
\end{proof}

\begin{lemma}\label{lemma:uniform-min-eigenvalue}
    Let $I(\alpha)$ be a positive definite matrix for each $\alpha \in A$, where $A$ is compact, and assume that the entries of $I(\alpha)$ are continuous in $\alpha$ on $A$. Then there exists a constant $c > 0$ such that $\inf_{\alpha \in A}\lambda_{\text{min}}(I(\alpha)) \ge c$, where $\lambda_{\text{min}}(I(\alpha))$ is the smallest eigenvalue of $I(\alpha)$.
\end{lemma}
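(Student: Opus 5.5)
The plan is a continuity-plus-compactness argument. First I write $\lambda_{\min}(I(\alpha))$ through the Rayleigh quotient,
$$
\lambda_{\min}(I(\alpha)) = \min_{\|v\|=1} v^\top I(\alpha) v .
$$
This formula presumes that $I(\alpha)$ is symmetric, which holds for the Fisher information matrices to which the lemma will be applied. If symmetry were not available, I would replace $I(\alpha)$ by its symmetrization $\tfrac12(I(\alpha)+I(\alpha)^\top)$, whose entries are still continuous and which has the same quadratic form. In either case the map $(\alpha,v)\mapsto v^\top I(\alpha)v$ is jointly continuous on $A\times S^{p-1}$, because it is a polynomial in the entries of $I(\alpha)$ and the coordinates of $v$, and those entries are continuous in $\alpha$.

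Next I show that $\alpha\mapsto\lambda_{\min}(I(\alpha))$ is continuous on $A$. One route is to note that a minimum over the compact set $S^{p-1}$ of a jointly continuous function is continuous in the remaining variable (Berge's maximum theorem). A more elementary route is Weyl's inequality,
$$
|\lambda_{\min}(I(\alpha))-\lambda_{\min}(I(\alpha'))| \le \|I(\alpha)-I(\alpha')\|_{\mathrm{op}},
$$
where the right-hand side tends to zero as $\alpha'\to\alpha$ because the entries are continuous. Either argument suffices.

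With continuity established, the extreme value theorem gives some $\alpha^*\in A$ at which $\lambda_{\min}(I(\cdot))$ attains its infimum over the compact set $A$. Positive definiteness of $I(\alpha^*)$ then gives $c:=\lambda_{\min}(I(\alpha^*))>0$, which is the required constant.

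There is no serious obstacle here. The only points needing care are the symmetry caveat behind the Rayleigh-quotient formula and the justification that the minimum eigenvalue depends continuously on $\alpha$; Weyl's inequality settles the latter most cleanly.
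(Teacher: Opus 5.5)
Your proof is correct and follows the paper's argument: continuity of $\alpha\mapsto\lambda_{\min}(I(\alpha))$, attainment of the minimum on the compact set $A$ by the extreme value theorem, and positivity at the minimizer from positive definiteness. The only difference is that you justify the continuity step explicitly, via the Rayleigh quotient or Weyl's inequality under the symmetry that holds for the Fisher information matrices in the application, whereas the paper simply invokes continuous dependence of eigenvalues on the matrix entries.
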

\begin{proof}
    Define the function
    \begin{equation*}
        f: A \to \mathbb{R}, \quad f(\alpha) := \lambda_{\text{min}}(I(\alpha)).
    \end{equation*}
    Since the entries of $I(\alpha)$ are continuous in $\alpha$, and the eigenvalues of a matrix depend continuously on the matrix entries, $f$ is continuous on $A$. A continuous real-valued function on a compact set attains its minimum; therefore, there exists some $\alpha_0 \in A$ such that
    \begin{equation*}
        \inf_{\alpha \in A} \lambda_{\text{min}} (I(\alpha)) = \min_{\alpha \in A} \lambda_{\text{min}} (I(\alpha)) = \lambda_{\text{min}} (I(\alpha_0)).
    \end{equation*}
    Since $I(\alpha)$ is positive definite for all $\alpha \in A$, we have that $\lambda_{\text{min}} (I(\alpha_0)) > 0$. Set $c := \lambda_{\text{min}} (I(\alpha_0))$. Hence, $\inf_{\alpha \in A} \lambda_{\text{min}} (I(\alpha)) = c > 0$.
\end{proof}

\subsubsection{Beta-binomial}\label{sec:beta-binomial-technical-results}

\begin{lemma}\label{lemma:uniform-nonsingularity-beta}
Let $I(x)$ be the Fisher information of the beta-binomial mixing distribution for $x\in K$ with the entries defined in \eqref{eq:fisher-information-beta-mixing}.
Assume that $\left\{(a_{0, x}, b_{0, x}): x \in K \right\}$ belongs to a compact subset of $(0, \infty)^2$, that $1 \le m_i < C < \infty$ with probability one, and that $P(m_i>1)>0$.
Then there exists $c > 0$ such that $\lambda_{\min}(I(x))\ge c$ for all $x \in K$.
% In particular, $I(x)$ is uniformly invertible on $\mathcal K$, and the Jacobian is uniformly nonsingular in the sense required for Theorem~\ref{theorem:master-z-estimation}.
\end{lemma}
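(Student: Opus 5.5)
The argument reduces to Lemma~\ref{lemma:uniform-min-eigenvalue}. We show that $I(x)$ depends continuously on $x$ only through a continuous function $\mathcal{J}$ of $(a,b)$, that $\mathcal{J}(a,b)$ is positive definite at every point of $(0,\infty)^2$, and then take a minimum over a compact parameter set. Concretely, write $I(x)=\mathcal J(a_{0,x},b_{0,x})$, where for $(a,b)\in(0,\infty)^2$ the matrix $\mathcal J(a,b)$ is the Fisher information of the Beta--Binomial model mixed over $m\sim\eta$:
$$\mathcal J(a,b)=\sum_{m}\eta(m)\,\mathcal J_m(a,b),$$
with $\mathcal J_m$ the Fisher information for a single $\mathrm{BetaBin}(m,a,b)$ observation. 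This factorization holds because the law of $Y_i(x)$ given $m_i$ is exactly $\mathrm{BetaBin}(m_i,a_{0,x},b_{0,x})$ under correct specification, so the expectations in \eqref{eq:fisher-information-beta-mixing} are sums over finitely many $(m,y)$ with $m<C$. Each entry is a finite combination of trigamma values times Beta--Binomial probabilities, so $\mathcal J$ is continuous on $(0,\infty)^2$. Let $A\subset(0,\infty)^2$ be the compact set from the hypothesis covering $\{(a_{0,x},b_{0,x}):x\in K\}$. Lemma~\ref{lemma:uniform-min-eigenvalue}, applied to $\mathcal J$ on $A$, then gives $c>0$ with $\lambda_{\min}(\mathcal J(a,b))\ge c$ on $A$, and hence $\lambda_{\min}(I(x))\ge c$ for all $x\in K$.

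The substantive step, and the main obstacle, is positive definiteness of $\mathcal J(a,b)$ at each fixed $(a,b)$. By the information identity (valid since the model has finite support in $y$ and smooth probabilities), $\mathcal J(a,b)=\mathbb E[s s^\top]$, where $s$ is the score vector of \eqref{eq:score-beta-binomial}, so $\mathcal J$ is positive semidefinite. Singularity would require a nonzero $v=(v_1,v_2)$ with $v^\top s(Y,m)=0$ almost surely. Because Beta--Binomial probabilities are strictly positive for every $y\in\{0,\dots,m\}$, this would force $v^\top s(y,m)=0$ for every $y$ and every $m$ in the support of $\eta$.

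Using $P(m_i>1)>0$, we fix some $m\ge2$ in the support and take differences in $y$. Since $s_a(y)-s_a(y-1)=1/(a+y-1)$ and $s_b(y)-s_b(y-1)=-1/(b+m-y)$, this gives
$$\frac{v_1}{a+y-1}=\frac{v_2}{b+m-y},\qquad y=1,\dots,m.$$
At $y=1$ and $y=2$ this reads $v_1(b+m-1)=v_2a$ and $v_1(b+m-2)=v_2(a+1)$. Subtracting the second from the first gives $v_1=-v_2$. Substituting back into the first equation gives $v_2(a+b+m-1)=0$, and since $a+b+m-1>0$ we get $v_2=0$ and therefore $v=0$, a contradiction. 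So $\mathcal J_m(a,b)$ is positive definite for this $m$, and $\mathcal J\succeq\eta(m)\mathcal J_m\succ0$.

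It remains to justify the continuity claim needed by Lemma~\ref{lemma:uniform-min-eigenvalue}. Trigamma is continuous on $(0,\infty)$, and the probabilities $\binom{m}{y}B(y+a,m-y+b)/B(a,b)$ are continuous in $(a,b)$. With $m<C$ these are finite sums, so $\mathcal J$ is continuous on $A$. If one prefers to apply Lemma~\ref{lemma:uniform-min-eigenvalue} literally with $x$ as the index, the compact index set is $A$ itself, via the reparametrization above. Note that continuity of $x\mapsto\beta_0(x)$ is not needed, since the infimum is taken over the larger set $A$.
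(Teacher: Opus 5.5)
Your proof is correct and follows essentially the same route as the paper: reduce to Lemma~\ref{lemma:uniform-min-eigenvalue} over the compact parameter set, write the information as $\mathbb{E}[ss^\top]$, rule out singularity by taking first differences of the scores in $y$ at some $m\ge 2$ with positive probability, and get continuity from finite sums. The only difference is in how singularity is excluded. You phrase it as $v^\top s=0$ for a nonzero $v$ and solve the $y=1,2$ equations explicitly. The paper instead first checks that the diagonal entries are positive and then rules out equality in Cauchy--Schwarz ($s_\beta=c\,s_\alpha$) by noting that the implied ratio varies with $y$.
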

\begin{proof}
Let $A$ be the compact subset containing $\left\{(a_{0, x}, b_{0, x}): x \in K \right\}$. We now use Lemma \ref{lemma:uniform-min-eigenvalue} where we have to show that $I(\alpha, \beta)$, defined as the Fisher information matrix for the beta-binomial mixture with parameters $(\alpha, \beta)$, is positive definite for all $(\alpha, \beta) \in A$ and its entries are continuous in $(\alpha, \beta)$.
We further denote the entries of the Fisher information matrix at $(\alpha, \beta)$ by $I_{\alpha, \alpha}$, $I_{\beta, \beta}$, and $I_{\alpha, \beta}$.

By standard likelihood theory, we have that $I_{\alpha, \alpha} = \mathbb{E}_{(\alpha, \beta)}\! \left[ s_{\alpha}^2 \right]$, $I_{\beta, \beta} = \mathbb{E}_{(\alpha, \beta)}\! \left[ s_{\beta}^2 \right]$, and $I_{\alpha, \beta} = \mathbb{E}_{(\alpha, \beta)}\! \left[ s_{\alpha} s_{\beta} \right]$ where the scores are defined in \eqref{eq:score-beta-binomial} (replacing $(a, b)$ with $(\alpha, \beta)$). Since $I(\alpha, \beta)$ is a covariance matrix, it is positive semidefinite. It is positive definite if and only if the determinant is positive, which is equivalent to $I_{\alpha, \alpha} I_{\beta, \beta} - I_{\alpha, \beta}^2 > 0$.
Because the digamma function is strictly increasing on $(0, \infty)$, we have that $\mathbb{E}_{(\alpha, \beta)} \left[ s_{\alpha}^2 \right] = 0$ only if the distribution of $Y$ is degenerate. The distribution of $Y$ is not degenerate for $(\alpha, \beta) \in A \subset (0, \infty)^2$ and $m_i \ge 1$, and similarly $\mathbb{E}_{(\alpha, \beta)} \left[ s_{\beta}^2 \right] > 0$.
Moreover, by Cauchy--Schwarz we always have $I_{\alpha, \beta}^2 \le I_{\alpha, \alpha} I_{\beta, \beta}$, with equality if and only if $s_{\beta}=c\,s_{\alpha}$ almost surely for some constant $c$.

If $P(m_i=1)=1$, then $Y\in\{0,1\}$ and the beta-binomial model reduces to a Bernoulli model with success probability $\alpha/(\alpha+\beta)$, so the score components are colinear and $I(\alpha,\beta)$ is singular.

Now assume $P(m_i>1)>0$. Since $m_i$ is integer-valued and $1 \le m_i < C$ almost surely, there exists $m_* \ge 2$ such that $P(m_i=m_*)>0$. Suppose, toward a contradiction, that $s_{\beta}=c\,s_{\alpha}$ almost surely for some constant $c$ and $(\alpha, \beta) \in (0, \infty)^2$. Then this relation also holds conditional on $m_i=m_*$. Under the beta-binomial model with $(\alpha,\beta)\in(0,\infty)^2$, every value $y\in\{0,\dots,m_*\}$ has strictly positive conditional probability, so we must have
\[
s_{\beta}(y;m_*,\alpha,\beta)=c\,s_{\alpha}(y;m_*,\alpha,\beta), \qquad y=0,\dots,m_*.
\]
Taking first differences in $y$ gives, for $y=0,\dots,m_*-1$,
\[
s_{\alpha}(y+1)-s_{\alpha}(y)=\psi_0(y+1+\alpha)-\psi_0(y+\alpha)=\frac{1}{y+\alpha},
\]
\[
s_{\beta}(y+1)-s_{\beta}(y)=\psi_0(m_*-y-1+\beta)-\psi_0(m_*-y+\beta)= -\frac{1}{m_*-y-1+\beta}.
\]
Hence
\begin{align*}
    c = c \frac{s_{\alpha}(y + 1) - s_{\alpha}(y)}{s_{\alpha}(y + 1) - s_{\alpha}(y)} = \frac{s_{\beta}(y + 1) - s_{\beta}(y)}{s_{\alpha}(y + 1) - s_{\alpha}(y)} & = -\frac{y + \alpha}{m_*-y-1+\beta} \quad \text{for} \quad y=0,\dots,m_*-1.
\end{align*}
The right-hand side depends on $y$ when $m_*\ge 2$, which is impossible for a constant $c$. This contradiction shows that equality in Cauchy--Schwarz cannot occur, so
\[
I_{\alpha, \beta}^2 < I_{\alpha, \alpha} I_{\beta, \beta},
\]
and therefore $I(\alpha, \beta)$ is positive definite for all $(\alpha, \beta) \in (0, \infty)^2$.

For continuity, note that $m_i \in \{1,\dots,\lceil C \rceil-1\}$ almost surely and, conditional on $m_i=m$, $Y$ takes values in the finite set $\{0,\dots,m\}$. 
Since expectations $E_{(\alpha, \beta)}[\cdot]$ are finite weighted sums, we have that, for $(\alpha_n, \beta_n) \to (\alpha, \beta) \in (0, \infty)^2$,
\begin{equation*}
    I_{\alpha_n, \alpha_n} = \mathbb{E}_{(\alpha_n, \beta_n)} \! \left[ s_{\alpha_n}^2 \right]  \to I_{\alpha, \alpha} = \mathbb{E}_{(\alpha, \beta)} \! \left[ s_{\alpha}^2 \right] \quad \text{as} \quad (\alpha_n, \beta_n) \to (\alpha, \beta),
\end{equation*}
if for each fixed $m$ and $y$ (i) the weights $P_{(\alpha_n, \beta_n)}(Y=y \mid m_i=m)$ converge to $P_{(\alpha, \beta)}(Y=y \mid m_i=m)$ and (ii) $s_{\alpha}$ is continuous in $(\alpha, \beta)$ for each fixed $m$ and $y$. This is indeed the case because (i) the beta-binomial pmf is continuous in $(\alpha, \beta)$ for each fixed $m$ and $y$ and (ii) $s_{\alpha}$ is continuous in $(\alpha, \beta)$ for each fixed $m$ and $y$ because the digamma function is continuous on $(0, \infty)$. 
The same arguments apply to $I_{\alpha, \beta}$ and $I_{\beta, \beta}$.
\end{proof}

\begin{proposition}\label{proposition:donsker-pointwise-mixture}
    Consider the data-generating mechanism described in Section \ref{sec:data-structure} where the observed data are $O := (m, X_1, \dots, X_{m}) \sim P$. Assume that $1 \le m < C < \infty$ with probability one. Then the function class 
    \begin{align*}
        \mathcal F& := \left\{ \sum_{j = 1}^{m} \mathbf{1} \left( X_j \le x \right) : x \in \mathbb{R} \right\} 
    \end{align*}
    is $P$-Donsker.
\end{proposition}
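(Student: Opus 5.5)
Write each element of $\mathcal F$ as a finite sum of simpler classes. For $k=1,\dots,C-1$, set
\begin{equation*}
\mathcal F_k := \Bigl\{ O \mapsto \mathbf{1}(m \ge k)\,\mathbf{1}(X_k \le x) : x \in \mathbb{R} \Bigr\},
\end{equation*}
where $X_k$ is taken to be an arbitrary fixed value (say $+\infty$) when $k > m$. Then every $f_x \in \mathcal F$ decomposes as
\begin{equation*}
f_x(O) = \sum_{k=1}^{C-1} \mathbf{1}(m \ge k)\,\mathbf{1}(X_k \le x).
\end{equation*}
Because $m < C$ almost surely, this sum has at most $C-1$ nonzero terms. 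So $\mathcal F$ is contained in the pointwise sum $\mathcal F_1 + \cdots + \mathcal F_{C-1}$, with all summands indexed by the same $x$. Sums of finitely many Donsker classes are Donsker (van der Vaart and Wellner, Example 2.10.7), and subclasses of Donsker classes are Donsker. It therefore suffices to show that each $\mathcal F_k$ is $P$-Donsker.

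For fixed $k$, the class $\{\mathbf{1}(X_k \le x) : x \in \mathbb{R}\}$ is the classical half-line indicator class applied to the coordinate $X_k$. It is VC with index $2$. More directly, it has bracketing numbers $N_{[\,]}(\epsilon, \mathcal F_k, L_2(P)) \lesssim \epsilon^{-2}$: choose grid points $-\infty = t_0 < t_1 < \cdots < t_N = \infty$ so that the law of $X_k$ restricted to $\{m \ge k\}$ puts mass at most $\epsilon^2$ on each interval $(t_{j-1}, t_j)$, splitting at atoms as needed. Then the brackets $[\mathbf{1}(m\ge k)\mathbf{1}(X_k < t_{j-1}),\ \mathbf{1}(m\ge k)\mathbf{1}(X_k \le t_j)]$ (with the lower endpoint handled at atoms in the standard Glivenko--Cantelli fashion) cover $\mathcal F_k$. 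Multiplying by the fixed measurable indicator $\mathbf{1}(m \ge k)$ preserves these brackets and their sizes. The bracketing entropy integral $\int_0^1 \sqrt{\log N_{[\,]}(\epsilon)}\,d\epsilon$ is therefore finite, and each $\mathcal F_k$ is Donsker by the bracketing central limit theorem.

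As an alternative, bracket $\mathcal F$ directly. Note that $x \mapsto f_x(O)$ is nondecreasing for every $O$, and $f_x$ is uniformly bounded by $C$. Let $\Lambda(x) := \mathbb{E}[f_x(O)] = \mathbb{E}[m\,F_\nu(x)]$, a bounded nondecreasing function. Partitioning $\mathbb{R}$ so that each cell carries $\Lambda$-increment at most $\epsilon^2/C$ gives brackets with $L_2$-size at most $\epsilon$, because $\mathbb{E}[(f_u - f_l)^2] \le C\,\mathbb{E}[f_u - f_l]$. This yields $N_{[\,]} = O(\epsilon^{-2})$ in one step.

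The only delicate point is the handling of atoms. $X_k$ need not be continuous; for example, $\mathcal I = \{0,1\}$ is allowed. The grid must use left limits $f_{t-}$ as lower bracket endpoints so that jumps of $\Lambda$ larger than $\epsilon^2$ do not break the bracket count. This is the standard device from the proof of the classical Glivenko--Cantelli/Donsker theorem for empirical distribution functions. I will verify that both endpoints of each bracket have the required $L_2$ distance before concluding.
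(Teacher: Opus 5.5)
Your proposal is correct. It uses the same decomposition as the paper, $\mathcal F\subset\mathcal G_1+\cdots+\mathcal G_{\lfloor C\rfloor}$ with $\mathcal G_j=\{\mathbf 1(X_j\le x)\mathbf 1(j\le m):x\in\mathbb R\}$, but justifies the Donsker property of the pieces differently.

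The paper argues via VC theory. Each $\mathcal G_j$ is VC-subgraph (a half-line indicator class times a fixed function), so it satisfies the uniform entropy condition. The paper then checks the envelope $1$ and verifies $P$-measurability by hand, restricting to rational $x$ and using right-continuity. Only then does it apply the uniform-entropy Donsker theorem and permanence under finite sums.

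Your bracketing route gives explicit $O(\epsilon^{-2})$ bracket counts. Because the bracketing CLT needs no measurability hypothesis on the class, the measurability step disappears.

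Your second argument is the more economical of the two. Monotonicity of $x\mapsto f_x(O)$ together with $0\le f_u-f_l\le m\le C$ gives $P(f_u-f_l)^2\le C\{\Lambda(u)-\Lambda(l)\}$. So $\mathcal F$ is bracketed in one step, with no coordinate embedding and no sum-of-Donsker permanence result, and it is clear where boundedness of $m$ is used. You also correctly treat $\mathcal F$ as a \emph{subclass} of the sum class, since all summands share the same $x$; the paper writes this as an equality.

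One detail needs fixing. The brackets you wrote, $[\mathbf 1(m\ge k)\mathbf 1(X_k<t_{j-1}),\,\mathbf 1(m\ge k)\mathbf 1(X_k\le t_j)]$, have size equal to the mass of the closed interval $[t_{j-1},t_j]$, including atoms at both endpoints. Your grid condition on open intervals does not control them when atoms exceed $\epsilon^2$ (e.g.\ $\mathcal I=\{0,1\}$).

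The standard device puts the left limit on the \emph{upper} endpoint, not the lower. Use $[f_{t_{j-1}},f_{t_j-}]$, which covers $x\in[t_{j-1},t_j)$ and has size equal to the mass of $(t_{j-1},t_j)$. Choose $t_j:=\inf\{t:\Lambda(t)>\Lambda(t_{j-1})+\delta\}$, with $\Lambda$ replaced by $x\mapsto P(m\ge k,X_k\le x)$ in the componentwise version. This gives $\Lambda(t_j-)-\Lambda(t_{j-1})\le\delta$ and, by right-continuity, $\Lambda(t_j)\ge\Lambda(t_{j-1})+\delta$, hence at most $\Lambda(\infty)/\delta+1$ cells. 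The same fix applies to your direct bracketing of $\mathcal F$. With this correction both of your arguments go through.
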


\begin{proof}
    Since $m \le C$ almost surely, we may embed observations into the fixed-dimensional space
    \[
        O = (m, X_1, \dots, X_{\lfloor C \rfloor}),
    \]
    where we define $X_j = 0$ for $j > m$. Then the class $\mathcal{F}$ can be written as
    \[
        \mathcal{F}
        = \left\{ \sum_{j=1}^{\lfloor C \rfloor} \mathbf{1}(X_j \le x,\, j \le m) : x \in \mathbb{R} \right\}.
    \]

    For each fixed $j \in \{1, \dots, \lfloor C \rfloor\}$, define
    \[
        \mathcal{G}_j
        := \left\{ \mathbf{1}(X_j \le x,\, j \le m) : x \in \mathbb{R} \right\}.
    \]
    We can write
    \[
        \mathbf{1}(X_j \le x,\, j \le m)
        = \mathbf{1}(X_j \le x)\,\mathbf{1}(j \le m).
    \]
    Since $\mathbf{1}(j \le m)$ is a fixed measurable function and the class
    \[
        \left\{ \mathbf{1}(X_j \le x) : x \in \mathbb{R} \right\}
    \]
    is a VC class (with VC index $2$), it follows that $\mathcal{G}_j$ is a VC-subgraph class \parencite[Lemma 9.9]{kosorok2008introduction}.

    The class $\mathcal{F}$ is the finite sum of $\lfloor C \rfloor$ such classes:
    \[
        \mathcal{F} = \left\{ \sum_{j=1}^{\lfloor C \rfloor} g_j : g_j \in \mathcal{G}_j \right\}.
    \]
    Since the finite sum of $P$-Donsker classes is $P$-Donsker \parencite[Corollary 9.32]{kosorok2008introduction}, it suffices to verify that each $\mathcal{G}_j$ is $P$-Donsker. 

    By \textcite[Theorem 9.3]{kosorok2008introduction}, $\mathcal{G}_j$ satisfies the uniform entropy condition of \textcite[Theorem 8.19]{kosorok2008introduction}. 
    To show that $\mathcal{G}_j$ is $P$-Donsker, it remains to verify that $\mathcal{G}_j$ admits a square-integrable envelope and is $P$-measurable.

    For all $g \in \mathcal{G}_j$,
    \[
        0 \le g(O) \le 1,
    \]
    so $\mathcal{G}_j$ admits the square-integrable envelope $G_j(O) := 1$.

    It remains to verify $P$-measurability. We show that $\mathcal{G}_j$ is pointwise measurable. Let $\mathcal{G}_{j, 0} \subset \mathcal{G}_j$ be the subclass obtained by restricting $x$ to $\mathbb{Q}$. Then $\mathcal{G}_{j, 0}$ is countable. For any $x_0 \in \mathbb{R}$ and corresponding $g \in \mathcal{G}_j$, choose a sequence $(q_n)_{n \ge 1} \subset \mathbb{Q}$ such that $q_n \downarrow x_0$. Let $g_n \in \mathcal{G}_{j, 0}$ denote the corresponding functions. Then, for every $O$,
    \[
        g_n(O)
        = \mathbf{1}(X_j \le q_n) \cdot \mathbf{1}(j \le m)
        \to
        \mathbf{1}(X_j \le x_0) \cdot \mathbf{1}(j \le m)
        = g(O).
    \]
    Hence, $\mathcal{G}_j$ is pointwise measurable and therefore $P$-measurable.

    We conclude that $\mathcal{G}_j$ satisfies the uniform entropy condition, admits a square-integrable envelope, and is $P$-measurable. Therefore, $\mathcal{G}_j$ is $P$-Donsker by \textcite[Theorem 8.19]{kosorok2008introduction}.
\end{proof}

\begin{corollary}\label{corollary:donsker-score-components}
    Consider the data-generating mechanism described in Section \ref{sec:data-structure} where the observed data are $O := (m, X_1, \dots, X_{m}) \sim P$. Assume that $1 \le m < C < \infty$ with probability one. Then the classes of functions $\psi_0 \circ \mathcal{F}_1$, $\psi_0 \circ \mathcal{F}_2$, and $\psi_0 \circ \mathcal{F}_3$ for 
    \begin{align*}
        \mathcal{F}_1 & := \left\{ \sum_{j=1}^{m} \mathbf{1}(X_j \le x) + \theta : x \in \mathbb{R}, \theta \in [l, u] \subset (0, \infty) \right\}, \\
        \mathcal{F}_2 & := \left\{ m + \theta : \theta \in [l, 2u] \subset (0, \infty) \right\}, \\
        \mathcal{F}_3 & := \left\{ m - \sum_{j=1}^{m} \mathbf{1}(X_j \le x) + \theta : x \in \mathbb{R}, \theta \in [l, u] \subset (0, \infty) \right\},
    \end{align*}
    are $P$-Donsker.
\end{corollary}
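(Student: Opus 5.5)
The plan is to reduce each of the three classes to a composition of a fixed Lipschitz function with a class already known to be $P$-Donsker, and then invoke the standard preservation theorem for Lipschitz transformations of Donsker classes (e.g. \textcite[Theorem 9.31 / Corollary 9.32]{kosorok2008introduction}). Two facts are needed. First, the arguments of $\psi_0$ stay in a compact subset of $(0,\infty)$. Second, the classes $\mathcal F_1,\mathcal F_2,\mathcal F_3$ are themselves $P$-Donsker.

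\textbf{Range of the arguments.} Since $0\le\sum_{j=1}^m \mathbf 1(X_j\le x)\le m<C$ and $\theta\in[l,2u]$, every function in $\mathcal F_1\cup\mathcal F_2\cup\mathcal F_3$ takes values in $[l,\,C+2u]$, with $l>0$. On this interval the digamma function is smooth with $|\psi_0'|=\psi_1\le \psi_1(l)<\infty$. Hence $\psi_0$ restricted to $[l,C+2u]$ is Lipschitz with constant $L=\psi_1(l)$. The composed classes are also uniformly bounded by $\max(|\psi_0(l)|,|\psi_0(C+2u)|)$, which provides a square-integrable envelope.

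\textbf{Donsker property of the inner classes.}
\begin{itemize}
\item For $\mathcal F_1$: by Proposition~\ref{proposition:donsker-pointwise-mixture}, $\mathcal F=\{\sum_j \mathbf 1(X_j\le x)\}$ is $P$-Donsker. The class of constants $\{\theta:\theta\in[l,u]\}$ is trivially Donsker, being a bounded, one-dimensional, Lipschitz-in-parameter class. A sum of two Donsker classes is Donsker by \textcite[Corollary 9.32]{kosorok2008introduction}.
\item For $\mathcal F_2$: the single function $m$, which is square integrable since $m<C$, plus the constants is Donsker.
\item For $\mathcal F_3$: write it as $\{m\}+(-\mathcal F)+\{\theta\}$. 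Negation preserves the Donsker property, and sums again apply.
\end{itemize}
Measurability carries over as in the proof of Proposition~\ref{proposition:donsker-pointwise-mixture}, by restricting $x$ and $\theta$ to rationals and using right-continuity in $x$ and continuity in $\theta$.

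\textbf{Conclusion and main obstacle.} Since $|\psi_0(f)-\psi_0(g)|\le L|f-g|$ for $f,g$ with values in $[l,C+2u]$, the Lipschitz-transformation result yields that $\psi_0\circ\mathcal F_k$ is $P$-Donsker for $k=1,2,3$. The only delicate point is that $\psi_0$ is not globally Lipschitz on $(0,\infty)$. I would handle this by replacing $\psi_0$ with a globally Lipschitz extension $\tilde\psi_0$ that agrees with $\psi_0$ on $[l,C+2u]$ and is constant outside it. This makes $\tilde\psi_0\circ\mathcal F_k=\psi_0\circ\mathcal F_k$ pointwise and lets the theorem apply directly.
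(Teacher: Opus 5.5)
Your proposal is correct and follows essentially the same route as the paper. Both arguments use Lipschitz continuity of $\psi_0$ on the compact range $[l, C+2u]$, obtain the Donsker property of the inner classes from Proposition~\ref{proposition:donsker-pointwise-mixture} and closure under sums, and then apply the Lipschitz-transformation theorem (Kosorok, Theorem 9.31). Your decomposition of $\mathcal F_3$ as $\{m\}+(-\mathcal F)+\{\theta\}$ is slightly more precise than the paper's wording, and the globally Lipschitz extension of $\psi_0$ is harmless though not strictly needed, since that theorem only requires the Lipschitz bound for functions in the class.
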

\begin{proof}
    Since $\psi_0$ is continuously differentiable on $(0, \infty)$, it is Lipschitz on $[l, 2u + C]$. Hence, $\psi_0 \circ \mathcal{F}$ is the Lipschitz transformation of $\mathcal{F}$. If $\mathcal{F}$ is a $P$-Donsker class taking values in $[l, 2u + C]$, it follows that $\psi_0 \circ \mathcal{F}$ is also $P$-Donsker by \textcite[Theorem 9.31]{kosorok2008introduction}.

    The class $\mathcal{F}_1$ is the sum of $\left\{ \sum_{j=1}^{m} \mathbf{1}(X_j \le x) : x \in \mathbb{R} \right\}$ and a class of constant functions $\left\{ \theta : \theta \in [l, u] \right\}$. The first class is $P$-Donsker by Proposition \ref{proposition:donsker-pointwise-mixture}, and the second class is trivially $P$-Donsker. Hence, $\mathcal{F}_1$ is $P$-Donsker. The class $\mathcal{F}_2$ is trivially $P$-Donsker, and $\mathcal{F}_3$ is $P$-Donsker because it is the sum of the $P$-Donsker class $\mathcal{F}_1$ and the fixed function $m$.

    The three classes $\mathcal{F}_1$, $\mathcal{F}_2$, and $\mathcal{F}_3$ take values in $[l, 2u + C]$. Hence, $\psi_0 \circ \mathcal{F}_1$, $\psi_0 \circ \mathcal{F}_2$, and $\psi_0 \circ \mathcal{F}_3$ are $P$-Donsker.
\end{proof}

\begin{proposition}\label{proposition:bracketing-entropy-beta} 
    Consider the data-generating mechanism described in Section \ref{sec:data-structure} where the observed data are $O := (m, X_1, \dots, X_{m}) \sim P$. Assume that $1 \le m < C < \infty$ with probability one. 
    Let $A \subset (0, \infty)^2$ be compact and define $s_{\alpha}$ and $s_{\beta}$ as in \eqref{eq:score-beta-binomial}.
    Then, the classes of functions $\mathcal{F}_\alpha := \left\{ s_{\alpha}(\cdot; \alpha, \beta): (\alpha, \beta) \in A  \right\}$ and $\mathcal{F}_\beta := \left\{ s_{\beta}(\cdot; \alpha, \beta): (\alpha, \beta) \in A  \right\}$ are $P$-Donsker.
\end{proposition}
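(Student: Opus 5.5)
Since the statement concerns classes indexed by $(\alpha,\beta)$ in a compact set $A$, and implicitly by the threshold $x$ through $Y(x)=\sum_{j=1}^m\mathbf 1(X_j\le x)$, I will write the two scores as sums of simpler pieces and apply the permanence properties of Donsker classes already used in Corollary~\ref{corollary:donsker-score-components}. From \eqref{eq:score-beta-binomial},
\begin{equation*}
s_\alpha(O;\alpha,\beta,x)=\psi_0(Y(x)+\alpha)-\psi_0(\alpha)+\psi_0(\alpha+\beta)-\psi_0(m+\alpha+\beta).
\end{equation*}
The analogous decomposition holds for $s_\beta$, with $\psi_0(m-Y(x)+\beta)-\psi_0(\beta)$ as the first two terms. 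Since $A$ is compact in $(0,\infty)^2$, I choose $0<l\le u<\infty$ with $A\subset[l,u]^2$. Then $\alpha+\beta\in[2l,2u]\subset[l,2u]$, so every argument of $\psi_0$ lies in $[l,2u+C]$.

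The first summand of $s_\alpha$ ranges over a subset of $\psi_0\circ\mathcal F_1$, and the fourth over a subset of $\psi_0\circ\mathcal F_2$, both from Corollary~\ref{corollary:donsker-score-components}. For $s_\beta$, the first summand lies in $\psi_0\circ\mathcal F_3$. Subsets of Donsker classes are Donsker. The remaining summands, $\psi_0(\alpha)$, $\psi_0(\beta)$ and $\psi_0(\alpha+\beta)$, do not depend on $O$. They form classes of constant functions bounded by $\sup_{[l,2u]}|\psi_0|<\infty$, and such classes are trivially Donsker because the empirical process vanishes on constants.

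It remains to combine the summands. Each $s_\alpha(\cdot;\alpha,\beta,x)$ is a sum $g_1+g_2+g_3+g_4$ with each $g_k$ drawn from one of the four Donsker classes above, so $\mathcal F_\alpha$ is contained in the sum class $\{g_1+g_2+g_3+g_4\}$ (with signs absorbed). This sum class is $P$-Donsker by \textcite[Corollary 9.32]{kosorok2008introduction}, and hence so is its subset $\mathcal F_\alpha$. The same argument gives the result for $\mathcal F_\beta$.

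The main obstacle is a subtlety in this last step. The summands of a single score share the indices $(\alpha,\beta,x)$, while the sum class lets them vary independently; because the sum class is larger, the inclusion still goes in the right direction. I also need measurability of the enlarged class. Square-integrability is immediate, since all functions are uniformly bounded by $\sup_{[l,2u+C]}|\psi_0|$. For measurability, I expect the pointwise-measurability argument of Proposition~\ref{proposition:donsker-pointwise-mixture} to carry over: take rational $x,\alpha,\beta$, and use right-continuity in $x$ together with continuity of $\psi_0$. If the sum-class route caused trouble, the fallback is a direct bracketing bound. Lipschitz continuity of $\psi_0$ on $[l,2u+C]$ gives $|s(\theta)-s(\theta')|\lesssim\|\theta-\theta'\|$, and $Y(x)$ takes at most $C$ distinct monotone steps in $x$, which together yield polynomial bracketing numbers.
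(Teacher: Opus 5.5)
Your proposal is correct and follows the paper's proof. With $A\subset[l,u]^2$, you write each score as a signed sum of terms from $\psi_0\circ\mathcal{F}_1$, $\psi_0\circ\mathcal{F}_2$, $\psi_0\circ\mathcal{F}_3$ of Corollary~\ref{corollary:donsker-score-components} plus bounded constant classes, then conclude from closure of Donsker classes under sums and subclasses. The extra measurability check and the bracketing fallback are unnecessary: the sum-preservation result you cite needs only that the summand classes, which are uniformly bounded here, be Donsker.
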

\begin{proof}
    Without loss of generality, we assume that $A = [l, u]^2$ where $l > 0$ and $u < \infty$.
    From \eqref{eq:score-beta-binomial} follows that the classes $\mathcal{F}_\alpha$ and $\mathcal{F}_\beta$ are contained in the sum of the following classes of functions:
    \begin{align*}
        \mathcal{F}_1 & := \left\{ \psi_0\left( \sum_{j=1}^m \mathbf{1}(X_j \le x) + \theta \right): \theta \in [l, u] \right\} \\
        \mathcal{F}_2 & := \left\{ \psi_0( m + \theta): \theta \in [l, 2u] \right\} \\
        \mathcal{F}_3 & := \left\{ \psi_0( m - \sum_{j=1}^m \mathbf{1}(X_j \le x) + \theta): \theta \in [l, u] \right\} \\
        \mathcal{F}_4 & := \left\{ \theta: \theta \in [\psi_0(l), \psi_0(2 u)] \right\},
    \end{align*}
    where $\psi_0$ is the digamma function.
    The classes $\mathcal{F}_1$, $\mathcal{F}_2$, and $\mathcal{F}_3$ are $P$-Donsker by Corollary \ref{corollary:donsker-score-components}.
    The class $\mathcal{F}_4$ is trivially $P$-Donsker. 
\end{proof}

\subsubsection{Proof of Proposition \ref{prop:frechet-derivative-pointwise}}

\begin{proof}
    Let $P$ denote the distribution of $O_i$ and let $Pf$ denote the expectation of $f(O_i)$ when $O_i \sim P$. We verify that conditions (E), (A), and (D) of Theorem \ref{theorem:master-z-estimation} hold.
    
    \textbf{Conditions (E) and (A).}~~
    By dominated convergence (which holds by condition (2)), we have that the first and second-order partial derivatives of $\beta(x) \mapsto \Psi(\beta(x), x)$ at $\beta_0(x)$ are given by $\dot \Psi(\beta(x), x) = P \dot{\psi}(O_i; \beta(x), x)$ and $\ddot{\Psi}(\beta(x), x) = P \ddot{\psi}(O_i; \beta(x), x)$, respectively. We then have by condition (3) that
    \begin{equation*}
        \sup_{x \in K} \sup_{\beta(x) \in U} \lVert \dot{\Psi}(\beta(x), x) \rVert < \infty, \quad \sup_{x \in K} \sup_{\beta(x) \in U} \lVert \ddot{\Psi}(\beta(x), x) \rVert < \infty.
    \end{equation*}
    The above display together with condition (4) is sufficient to apply Lemma \ref{lemma:frechet-differentiability-superposition}, which implies that $\beta \mapsto \Psi(\beta)$ is Fréchet differentiable at $\beta_0$ with Fréchet derivative $\dot{\Psi}_{\beta_0}: \ell^{\infty}(K)^p \to \ell^{\infty}(K)^p$ with a continuous inverse $\dot{\Psi}_{\beta_0}^{-1}: \ell^{\infty}(K)^p \to \ell^{\infty}(K)^p$ defined pointwise below, thus satisfying condition (E) of Theorem \ref{theorem:master-z-estimation}.
    \begin{align*}
        (\dot{\Psi}_{\beta_0}h)(x) & = P \dot{\psi}(O_i; \beta_0(x), x) h(x), \quad \text{for $h \in \ell^{\infty}(K)^p$}, \\
        (\dot{\Psi}_{\beta_0}^{-1}f)(x) & = (P \dot{\psi}(O_i; \beta_0(x), x))^{-1} f(x), \quad \text{for $f \in \ell^{\infty}(K)^p$}.
    \end{align*}
    Condition (A) then follows from Corollary \ref{corollary:identifiability-z-estimator-frechet}.

    \textbf{Condition (D).}~~
    Let $\beta_n \to \beta_0$ in $\ell^{\infty}(K)^p$. By condition (3), for all sufficiently large $n$ and all $x \in K$, both $\beta_n(x)$ and $\beta_0(x)$ belong to $U$. 
    By the mean-value theorem, for each component $j \in \{ 1, \ldots, p \}$ and each $x \in K$ there exists $\tilde\beta_{n,j}(x)$ on the segment between $\beta_n(x)$ and $\beta_0(x)$ such that
    \begin{equation*}
    \psi_j(O_i;\beta_n(x),x)-\psi_j(O_i;\beta_0(x),x)
    =
    \dot{\psi}_j(O_i;\tilde\beta_{n,j}(x),x)\{\beta_n(x)-\beta_0(x)\}.
    \end{equation*}
    Because $U$ is convex, each $\tilde\beta_{n,j}(x)$ also belongs to $U$.
    Hence,
    \begin{equation*}
    \sup_{x\in K} P\left\lVert \psi(O_i;\beta_n(x),x)-\psi(O_i;\beta_0(x),x)\right\rVert^2
    \le
    \left(\sup_{x\in K}\sup_{\beta(x)\in U} P\left\lVert \dot{\psi}(O_i;\beta(x),x)\right\rVert^2\right)\lVert \beta_n-\beta_0\rVert_\infty^2.
    \end{equation*}
    The first term of the right-hand side is finite by condition (3), and the second term converges to zero by assumption. Hence, the left-hand side converges to zero, which implies that condition (D) of Theorem \ref{theorem:master-z-estimation} holds.
\end{proof}

\subsubsection{Proof of Proposition \ref{proposition:smooth-transformation-z-estimator}}

\begin{proof}
Because $\hat{\beta}_n$ is uniformly consistent for $\beta_0$ on $K$, we have that $\hat{\beta}_n(x) \in U$ with probability approaching one uniformly on $K$. We further assume that $\hat{\beta}_n(x) \in U$ for all $x \in K$. 
Hence, we can apply the mean-value expansion to $G(\alpha; \hat{\beta}_n(x)) - G(\alpha; \beta_0(x))$, for $\tilde{\beta}_n(x)$ between $\hat{\beta}_n(x)$ and $\beta_0(x)$.
\begin{align*}
    n ^{1/2} \left( G(\alpha; \hat{\beta}_n(x)) - G(\alpha; \beta_0(x))  \right) & =  n^{1/2} \dot{G}(\alpha; \tilde{\beta}_n(x)) \left( \hat{\beta}_n(x) - \beta_0(x) \right) \\
    & = n^{1/2} \dot{G}(\alpha; \beta_0(x)) \left( \hat \beta_n(x) -  \beta_0(x) \right) \\
    & \quad +  n^{1/2} \left( \dot{G}(\alpha; \tilde{\beta}_n(x)) - \dot{G}(\alpha; \beta_0(x)) \right) \left( \hat \beta_n(x) -  \beta_0(x) \right)
\end{align*}
Note that $n^{1/2} \lVert \hat \beta_n - \beta_0 \rVert_{\infty} = O_P(1)$ by Theorem \ref{theorem:master-z-estimation}. We also have that $\lVert \dot{G}(\alpha; \tilde{\beta}_n(x)) - \dot{G}(\alpha; \beta_0(x)) \rVert_{\infty} = o_P(1)$ because $\dot{G}(\alpha; \beta(x))$ is uniformly continuous in $\beta(x)$ on $U$ by compactness of $U$ and continuity of $\dot{G}(\alpha; \cdot)$ on $U$. Hence, the second term of the above display is uniformly $o_P(1)$.

Moreover, $\sup_{\beta(x) \in U}\dot{G}(\alpha; \beta(x)) < \infty$ since $\dot{G}(\alpha; \beta(x))$ is continuous on the compact set $U$. The first term of the last display is $-n^{1/2} \dot{G}(\alpha; \beta_0(x)) \dot{\Psi}_{\beta_0}^{-1}\Psi_n(\beta_0)(x) + \dot{G}(\alpha; \beta_0(x))r_n(x)$ with $\sup_{x \in K} |r_n(x)| = o_P(1)$ by Theorem \ref{theorem:master-z-estimation}. Hence, $\sup_{x \in K} \lVert \dot{G}(\alpha; \beta_0(x))r_n(x) \rVert = o_P(1)$, which gives the required uniform linearization.
\end{proof}

\end{document}